\DeclareMathOperator*{\argmax}{argmax}
\newtheorem{theorem}{Theorem}
\newtheorem{assumption}{Assumption}
\newtheorem{claim}{Claim}
\newtheorem{conjecture}{Conjecture}
\newtheorem{corollary}{Corollary}
\newtheorem{definition}{Definition}
\newtheorem{example}{Example}
\newtheorem{lemma}{Lemma}
\newtheorem{proposition}{Proposition}
\newtheorem{remark}{Remark}
\newtheorem*{thm2s}{Theorem $\mathbf{2^*}$}
\newcommand{\rv}[1]{\noindent \normalfont \textcolor{violet}{[revision: #1]}}
\renewenvironment{proof}[1][\proofname]{%
  \par\pushQED{\qed}\normalfont%
  \topsep6\p@\@plus6\p@\relax
  \trivlist\item[\hskip\labelsep\bfseries#1\@addpunct{.}]%
  \ignorespaces
}{%
  \popQED\endtrivlist\@endpefalse
}
\newcommand{\supp}{\text{supp}}
\newcommand{\sprg}{\mathcal{O}}
\newcommand{\group}{G}
\newcommand{\piv}{w_{\texttt{piv}}}
\newcommand{\mkvstg}{\alpha^{\mathrm{m}}}
\newcommand{\scale}{\varepsilon}
\newcommand{\cost}{c}
\newcommand{\quadvar}{\text{qv}}
\newcommand{\cont}{C}
\newcommand{\stpbf}{S}
\newcommand{\lbp}{\text{L}}
\newcommand{\ubp}{\text{U}}
\newcommand{\lbpil}{{\underline{\lbp}{}_{-i}}}
\newcommand{\lbpih}{{\overline{\lbp}{}_{-i}}}
\newcommand{\ubpih}{{\overline{\ubp}{}_{-i}}}
\newcommand{\ubpil}{{\underline{\ubp}{}_{-i}}}
\def\paragraph{\@startsection{paragraph}{4}%
  \z@\z@{-\fontdimen2\font}%
  {\normalfont\itshape}}
\title{Collective Sampling: An Ex Ante Perspective
}
\author{Yangfan Zhou}
\thanks{%
Department of Economics, Columbia University. Email: \href{mailto:yz3905@columbia.edu}%
{yz3905@columbia.edu}.\\ I am deeply indebted to Navin Kartik and Laura Doval for their continuing guidance and support throughout the development of this paper. I am also grateful to C\'esar Barilla, Yeon-Koo Che, Elliot Lipnowski, Qingmin Liu, Tianhao Liu, Yu Fu Wong and other participants at Columbia University Micro Theory Colloquium for their valuable comments and suggestions. Any errors are my own.}
\begin{document}
\maketitle
\begin{abstract}
\thispagestyle{empty}
I study collective dynamic information acquisition. Players decide when to stop sequential sampling via a collective stopping rule, which specifies decisive coalitions that can terminate information acquisition upon agreement. 
I develop a methodology to characterize equilibria using an ex ante perspective. 
Instead of stopping strategies, players choose distributions over posterior beliefs subject to majorization constraints.  
Equilibrium sampling regions are characterized via a fixed-point argument based on concavification. 
Collective sampling generates learning inefficiencies 
and having more decisive coalitions 
typically reduces learning. I apply the model to committee search and competition in persuasion.
\end{abstract}

\pagenumbering{Alph} 

\pagenumbering{arabic}%
\setcounter{page}{0}%
\newpage

\section{Introduction}

Committees, teams, and boards making important decisions often face a dilemma: when should they stop gathering information and act? That collective decision itself is riddled with conflicts of interest. Whether a group member wants to continue gathering information depends on what has already been learned, 
their preferences over group actions, their expectations about how the group will act with different kinds of further information, 
and crucially, the collective decision rule for information collection. While dynamic information acquisition has been well studied for individuals \citep*[e.g.,][]{wald1947foundations,arrow1949bayes,morris2019wald,zhong2022optimal}, its group counterpart remains underexplored.

Consider, for example, a company's board members deciding on an acquisition. Before making decisions, they can gather relevant information through market research. The board members collectively determine how detailed the research should be or whether additional data is needed. 
How do conflicts of interest and the collective decision rule interact to shape information acquisition and the final decision? For instance, imagine a scenario where stopping information collection requires only a simple majority vote, but the chairperson holds veto power. 
How does the chairperson's preference influence information acquisition? And how do others adjust their strategies in anticipation of a veto?


To address these questions, this paper proposes a model of collective dynamic information acquisition with a focus on stopping decisions. I extend \citeauthor{wald1947foundations}'s (\citeyear{wald1947foundations}) sequential sampling model to a strategic situation. Players collectively decide when to stop acquiring costly signals about a binary state of the world. When information acquisition ends, players get terminal payoffs that depend on their common posterior belief about the state.\footnote{The terminal payoffs can come from equilibrium payoffs in a follow-up game where players take individual or collective actions and the state is payoff-relevant.} To focus on stopping decisions, the model features public and exogenous signals so that players can only choose when to stop acquiring information but not what information to acquire. 

Collective bargaining over stopping is modeled by a reduced-form protocol, represented as a collective stopping rule dubbed \emph{$\mathcal{\group}$-collective stopping}. Formally, $\mathcal{\group}$-collective stopping specifies all the \emph{decisive coalitions} of players $\group\in\mathcal{\group}$ that can terminate collective information acquisition when they reach an agreement (one player can belong to different coalitions): 
as long as every player within any decisive coalition stops, information acquisition ends. Examples include unanimity, (super-)majority, and dictatorship rules.
I focus on pure-strategy Markov perfect equilibria where players' stopping decisions only depend on their current beliefs. 

To convey the main ideas and analytical tools transparently, I begin with two simple rules in \Cref{sect-twoplayer}: \emph{unilateral stopping} where information acquisition can be unilaterally terminated by either player, and \emph{unanimous stopping} where the termination requires unanimity of all players. To ease the exposition, the analysis is focused on two players. I defer the full analysis with many players and general collective stopping rules to \Cref{discussion}. 

For unilateral and unanimous stopping, \Cref{MPE} establishes that 
players' constrained stopping problems can be reformulated into static problems of 
information acquisition where they choose distributions over posterior beliefs subject to majorization constraints. This reformulation is based on the idea of \citet{morris2019wald} in the single-agent case that 
choosing a stopping strategy is equivalent to choosing a Bayes plausible belief distribution. 
In the strategic situation, unilateral stopping only allows players to choose among mean-preserving contractions (MPC) of the distribution induced by the other player's stopping strategy (\Cref{MPC}). Under unanimous stopping, they can only choose mean-preserving spreads (MPS) with support in the other player's stopping region (\Cref{MPS}). 

The ex ante reformulation enables an characterization of equilibrium sampling regions (henceforth equilibria) in \Cref{concave} using the concavification method \citep*[see][]{aumann1995repeated,kamenica2011bayesian}. 
Relevant concave closures 
are identified to figure out players' best responses under different rules. Equilibria are fixed-points of the best responses, or graphically, sampling regions by using which the relevant concave closures of both players are attained simultaneously. I show in \Cref{lattice} that equilibria under both rules form a semi-lattice with the set inclusion order: Their union of two equilibria is also an equilibrium. 

The semi-lattice structure facilitates comparative statics. \Cref{pareto} shows that collective information acquisition is in general Pareto inefficient: Players learn too little under unilateral stopping and possibly too much under unanimous stopping. 
\Cref{misalign-max} further shows that players' preference misalignment amplifies these learning inefficiencies. 

However, players do not necessarily learn too much under unanimous stopping. In fact, any equilibrium with strictly more learning than both players' individual optima has to rely on miscoordination (\Cref{control-sharing}). This is due to a control-sharing effect: Because players have to share the control over stopping, which weakly lowers their option values of waiting, they tend to stop earlier than if they were to make the decision alone.\footnote{This phenomenon is also identified 
by \citet{strulovici2010learning}, \citet{albrecht2010search}, and \citet{chan2018deliberating} in the contexts of collective experimentation, search, and deliberation.}

\Cref{discussion} extends the analysis to many players and $\mathcal{\group}$-collective stopping. In general, players face both majorization constraints in terms of MPC and MPS. Equilibria can still be characterized via concavification (\Cref{thm-g-collective}). A particularly interesting question is how learning is affected by different collective stopping rules. \Cref{cs-general} shows that adding new decisive coalitions (or reducing players in existing coalitions) typically reduces learning. 
This result also speaks to the comparison between different quota rules and the effect of giving a player more votes or veto power. 

To operationalize the model, I consider two economic applications: committee search and dynamic compeitition in persuasion.\footnote{For interested readers, the previous version of this paper contains another application on dynamic persuasion between a sender and a receiver; see \citet{zhou2023collective}.}
First, I study a committee search problem in which committee members facing two alternatives dynamically acquire relevant information under a $\mathcal{\group}$-collective stopping rule. 
The binary choice enables a significant simplification of the problem: 
$\mathcal{\group}$-collective stopping within the committee can be reduced to the interaction between only two pivotal players, one for each alternative (\Cref{prop:strong}). 
As an implication, in symmetric committees, quota rules generate inefficiencies that are amplified by committee diversity.

Then, I consider another application on competition in persuasion where multiple senders compete to persuade a receiver under unanimous stopping. As a corollary of the results in the general model, increased competition leads to more information revelation. However, compared to the static benchmark in \citet{gentzkow2017competition}, dynamic competition in general features less information revelation due to the control-sharing effect.


The paper is organized as follows. 
\Cref{framework} introduces the baseline model. \Cref{sect-twoplayer} focuses on unilateral and unanimous stopping. 
\Cref{discussion} generalizes the analysis to many players and general collective stopping rules. \Cref{application} applies the framework to committee search and dynamic competition in persuasion. \Cref{conclusion} concludes. Proofs for the baseline model are relegated to \Cref{proofs}, while those for the applications can be found in Supplementary \Cref{proofs-app}.

\subsection{Related Literature}
\label{lit}
This paper contributes to the literature on collective information acquisition, strategic sequential sampling, as well as Bayesian persuasion.

\vspace{1mm}
\paragraph{Collective Information Acquisition} 
This paper is related to the literature on collective decisions on dynamic information acquisition. \citet{strulovici2010learning} studies dynamic collective experimentation 
in a two-armed bandit framework, showing that the control-sharing effect creates externality 
and reduces equilibrium experimentation below the socially efficient level. Other works adopt a collective search approach where a committee votes on when to stop searching for alternatives 
\citep*[e.g.,][]{albrecht2010search,compte2010bargaining,moldovanu2013specialization,titova2019collaborative}. The closest related are \citet*{chan2018deliberating} and \citet{anesi2023deciding} which use the sequential sampling model \citep{wald1947foundations} to study 
committee deliberation between two alternatives. 

Albeit using different approaches, these studies explore how collective dynamic decisions on information acquisition diverge from individual ones, investigating the control-sharing effect and the impact of voting rules and committee conflicts. 
However, they often restrict attention to committee settings with (super-)majority rules.\footnote{The only exception is \citet{anesi2023deciding} who also allow for general voting procedures, 
but their focus is on 
the stability of different procedures.
} This paper expands the scope to more general interactions and voting procedures, providing a unified, yet simple, analytical method applicable beyond committee decision-making.




\vspace{1mm}

\paragraph{Sequential Sampling with Strategic Concerns.}

This paper relates to other papers 
that incorporate strategic concerns into sequential sampling. \citet{brocas2007influence}, \citet{henry2019research}, \citet*{che2023keeping}, and \citet{hebert2022engagement} explore dynamic persuasion between a persuader and a decision-maker. 
\citet*{brocas2012information} and \citet{gul2012war} study competition between conflicting parties that dynamically provide information to a receiver. 
This paper unifies these problems within a common framework 
and shows that players' behaviors across these scenarios are shaped by shared economic forces, particularly externalities in stopping due to preference misalignment.

\vspace{1mm}
\paragraph{Optimal Stopping and Concavification.}
Methodologically, this paper draws on \citet{morris2019wald}, who use results from the Skorokhod embedding literature \citep[see][for a survey]{obloj2004skorokhod} to show that sequential sampling problems can be reformulated in terms of choosing posterior belief distributions. Building on this, I extend their insights to strategic settings, identifying feasible belief distributions for each player given others' choices and incorporating fixed-point arguments into the concavification approach. This methodology based on Skorokhod embedding is broadly applicable and has been used in other optimal stopping problems, such as \citet{georgiadis2020optimal} in monitoring design.

\vspace{1mm}

\paragraph{Static Information Provision.}
My approach reformulates dynamic information acquisition as a static game and thus connects to the work on static information provision, 
particularly in Bayesian persuasion with information costs 
\citep*[e.g.,][]{gentzkow2014costly,lipnowski2020attention,lipnowski2022optimal,wei2021persuasion,matyskova2023bayesian}. Differently, my model assumes no commitment power, necessitating equilibrium characterization through fixed-point arguments rather than merely concavification. 

\citet{gentzkow2017bayesian,gentzkow2017competition} study simultaneous Bayesian persuasion with multiple senders. I show that their insights on the impact of competition extend to my dynamic setting, but dynamic competition reduces information revelation due to the control-sharing effect.



\section{A Model of Collective Sampling}
\label{framework}

\subsection{Model Setup}

Players $i\in N$, with $|N|\geq2$, collectively acquire information about an unknown binary state of the world, $\theta\in\Theta=\{0,1\}$. Let $i$ denote one generic player and $-i$ the others. Players share a common prior belief about the state given by $p_0=\mathbb{P}(\theta=1)\in(0,1)$. The collective information acquisition problem is modeled as a stopping game in continuous time with infinite horizon and time indexed by $t\in[0,\infty)$. Before information acquisition ends, players sequentially sample and observe public signals generated by an exogenous experimentation technology to learn about the state. Since signals are public and players share a common prior, their beliefs can be characterized by a stochastic \textbf{public belief process} $(p_t)_{t\geq0}$ derived from Bayes' rule, where $p_t$ denotes the public posterior belief about $\theta=1$ at time $t$. I use the belief process $(p_t)_{t\geq0}$ throughout the analysis without formalizing the underlying data-generating process. Assumptions on $(p_t)_{t\geq0}$ are introduced later.

At every instant during information acquisition, player $i$ privately\footnote{Unobservability is assumed for the convenience of defining strategies in continuous time, which can be replaced by a restriction to Markov strategies with observable actions. 
} decides whether to continue sampling ($a_{i,t}=1$) or stop ($a_{i,t}=0$); hence, individual stopping is reversible.\footnote{Irreversible individual stopping is discussed in \Cref{conclusion}.} The termination of information acquisition is determined by the profile of players' stopping choices. I consider a general class of collective stopping rules, 
dubbed $\mathcal{\group}$-collective stopping rules:
Let $\mathcal{\group}\subset 2^{N}\setminus\{\emptyset\}$ denote a collection of decisive coalitions, in which each $\group\in\mathcal{\group}$ is a coalition of players who can collectively terminate information acquisition upon agreement. Following the collective choice literature \citep[cf.][]{austen2000positive}, $\mathcal{\group}$ is assumed to be non-empty and monotone, i.e., $\group\in\mathcal{\group}$ and $\group\subset\group'$ imply $\group'\in\mathcal{\group}$.
\begin{definition}
    Under the \textbf{$\bm{\mathcal{\group}}$-collective stopping rule}, information acquisition ends the first time all players from a decisive coalition stop, that is, at the smallest $t$ such that $\{i\in N|a_{i,t}=0\}\in\mathcal{\group}$.
\end{definition}

When information acquisition ends at time $t$, the game also ends. If the realized belief path is $(p_s)_{s\in[0,t]}$, player $i$ obtains a payoff of $u_i(p_t)-\int_0^t \cost_i(p_s)\mathrm{d}s$. The first component $u_i(p)$ is a terminal payoff that depends on the public posterior belief $p$ about $\theta=1$ upon stopping; in applications, $u_i$ comes from continuation equilibrium play in a follow-up game where $\theta$ is payoff-relevant. The second component is player $i$'s total sampling cost: player $i$ bears a flow cost $\cost_i(p_s)\geq0$ at time $s$ that may depend on the posterior belief $p_s$;\footnote{
Such dependence can capture flow payoffs associated with other ``Markov'' strategic interactions, e.g., effort provision in experimentation; see the discussions in \Cref{conclusion}.} 
$\int_0^t \cost_i(p_s)\mathrm{d}s$ is the total cost accumulated up to time $t$.\footnote{Players will still bear sampling costs even when they stop at that moment but collective information acquisition has not ended.} Assume that both $u_i$ and $\cost_i$ are Lebesgue measurable and bounded. If information acquisition never ends, player $i$ obtains $-\int_0^\infty \cost_i(p_s)\mathrm{d}s$ with ``terminal'' payoffs normalized to zero.

\vspace{2mm}
\paragraph{Discussion on $\mathcal{\group}$-collective stopping} Examples of $\mathcal{\group}$-collective stopping rules include unanimity,
(super-)majority, and dictatorship rules. 

When $\mathcal{\group}=2^{N}\setminus\{\emptyset\}$, any player can terminate learning unilaterally, so this case is called \textbf{unilateral stopping}. When $\mathcal{\group}=\{N\}$, it is called \textbf{unanimous stopping} where learning only ends when all players stop at the same time. 
Another salient example is $\mathcal{\group}=\mathcal{\group}^q:=\{\group\subset N:|\group|\geq q\}$ for $q\in\{1,\dots,|N|\}$, called $q$-collective stopping, where information acquisition is terminated whenever $q$ players stop. This is a typical quota rule in the collective choice literature and is studied by \citet{chan2018deliberating} as supermajority rules (with $q\geq|N|/2$) in committee deliberation. 


$\mathcal{\group}$-collective stopping rules can also capture rules without anonymity like weighted voting. As another example, in some contexts, there is a ``chairperson'' such that information acquisition ends when a majority stops so long as the majority includes the chairperson. 
This can be modeled by $\mathcal{\group}=\mathcal{\group}^q\cap\mathcal{N}_i$ where $i$ is the chairperson and $\mathcal{N}_i:=\{\group\subset N:i\in \group\}$.

\subsection{Preliminaries: Beliefs and Strategies}
In this section, 
I discuss some preliminaries of this model, including the public belief process, strategy spaces, and solution concepts.

\subsubsection{Public Belief Process} Working with the belief process $(p_t)_{t\geq0}$ allows me to model information acquisition in a general way. Since beliefs are derived from Bayes' rule, $(p_t)_{t\geq0}$ must be a martingale bounded between 0 and 1. Additional assumptions on $(p_t)_{t\geq0}$ are stated in Assumption \ref{assumption}. 

Let $(\langle p\rangle_t)_{t\geq0}$ denote the quadratic variation process of $(p_t)_{t\geq0}$.\footnote{
$\langle p\rangle_t:=\lim_{||P||\to0}\sum_{k=1}^n(p_{t_{k}}-p_{t_{k-1}})^2$, where $P=\{0=t_0<t_1<\cdots<t_n=t\}$ stands for a partition of the interval $[0,t]$ and $||P||=\max_{k\in\{1,\dots,n\}}(t_k-t_{k-1})$.}

\begin{assumption}
\label{assumption}
	Given a probability space $(\Omega,\mathcal{F},\mathbb{P})$ and a filtration $\mathbb{F}=\{\mathcal{F}_t\}_{t\geq0}$ satisfying the usual assumptions, $(p_t)_{t\geq0}$ is a $[0,1]$-valued martingale such that:
	\begin{enumerate}
		\item[(i)] $(p_t)_{t\geq0}$ is Markov;
		\item[(ii)] $p_\infty:=\lim_{t\to\infty}p_t\in\{0,1\}$ almost surely;
		\item[(iii)] $(p_t)_{t\geq0}$ is continuous in $t$ almost surely;
		\item[(iv)] $\mathrm{d}\langle p\rangle_t$ is absolutely continuous with respect to $\mathrm{d}t$ (i.e., the Lebesgue measure) almost surely, 
        and $\mathbb{E}[\langle p\rangle_t]<\infty,\forall t\geq0$.\footnote{In fact, Condition (iii) is implied by (iv). 
        }
	\end{enumerate}
\end{assumption}

These assumptions require that (i) learning is memoryless, (ii) perfect learning can be achieved in the limit, and (iii) and (iv) learning is gradual. 
Condition (i) holds when information at each instant comes independently. 
Condition (ii) is only for expositional simplicity. 
Conditions (iii) and (iv) are critical for my analysis and exclude the possibility that information arrives in discrete lumps, but I show in \Cref{Poisson} that the analysis can also be adapted to Poisson learning with conclusive evidence. 

The widely used drift-diffusion process satisfies all the assumptions.

\begin{example}
\label{diffusion}
	Suppose that when players sample, they observe signals generated by a drift-diffusion process $(Z_t)_{t\geq0}$ with the following dynamics:
	\[
	\mathrm{d}Z_t=(2\theta-1)\mathrm{d}t+\sigma \mathrm{d}W_t, 
	\]
	where $\sigma>0$ and $(W_t)_{t\geq0}$ is a standard Brownian motion. According to \citet[Theorem 9.1]{liptser2013statistics}, the belief process satisfies
	\[
	\mathrm{d}p_t=\frac{2}{\sigma}\cdot p_t(1-p_t)\mathrm{d}B_t,\text{ where } B_t=\frac{1}{\sigma}\Big[Z_t-\int_0^t(2p_s-1)\mathrm{d}s\Big].
	\]
	Notice that $(B_t)_{t\geq0}$ is also a Brownian motion. Hence, $(p_t)_{t\geq0}$ is Markov and continuous with $\mathbb{P}\left(p_\infty\in\{0,1\}\right)=1$ and $\mathrm{d}\langle p\rangle_t=(4/\sigma^2)\cdot p_t^2(1-p_t)^2\mathrm{d}t$.
\end{example}

\subsubsection{Strategies and Solution Concepts}
\label{sect-strategy}

Player $i$ can condition her stopping decision on the histories she observes (including public beliefs and her previous actions). Hence, player $i$'s pure strategy can be any progressively measurable $\{0,1\}$-valued stochastic process on $(\Omega,\mathcal{F},\mathbb{F},\mathbb{P})$, denoted by $\alpha_i=(\alpha_{i,t})_{t\geq0}$, where $\alpha_{i,t}=1$ refers to sampling and $\alpha_{i,t}=0$ refers to stopping at time $t$. Let $\bm{\alpha}=(\alpha_j)_{j\in N}$ be the profile of players' stopping strategies and $\bm{\alpha}_{-i}=(\alpha_j)_{j\in N\setminus\{i\}}$ strategies by players other than $i$.

Since the belief process is Markov and only posterior beliefs matter for payoffs, it is natural to use the public belief $p_t$ as a state variable and consider Markov strategies in this game. For player $i$, a \textbf{pure Markov strategy} is a measurable function 
$\mkvstg_i:[0,1]\to\{0,1\}$ where $\mkvstg_i(p)=1$ refers to sampling and $\mkvstg_i(p)=0$ refers to stopping (where ``m'' stands for ``Markov''). Then $\sprg_i:=\{p\in[0,1]:\mkvstg_i(p)=1\}$ is the \textbf{sampling 
region} induced by $\mkvstg_i$: player $i$ only samples within this set and stops immediately when the public belief escapes from this set.
Accordingly, $[0,1]\setminus\sprg_i$ is the stopping region. There is a one-to-one correspondence between pure Markov strategies and sampling regions where $\mkvstg_i(p)=\mathbbm{1}_{p\in\sprg_i}$, so I often use them interchangeably. For technical reasons, I only consider pure Markov strategies that induce an \emph{open} sampling region.\footnote{Technically, 
the actual belief process is a truncated version of $(p_t)_{t\geq0}$ up to the time information acquisition ends. 
In order for the stochastic differential equation of the belief process to admit a solution, the stopping region must be closed.
} When there is no risk of confusion, I abuse notation and use $\alpha$ 
to denote pure Markov strategies.

Fix a prior $p_0$. With only one player, any stopping strategy $\alpha$ induces a random stopping time given by $\tau(\alpha):=\inf\left\{t\geq0:\alpha_t=0\right\}$, which is the first time this strategy specifies stopping. With many players, instead, the stopping time is jointly determined by the strategy profile $\bm{\alpha}$ and the collective stopping rule $\mathcal{\group}$, denoted by $\tau^{\mathcal{\group}}(\bm{\alpha})$:
\[
\tau^{\mathcal{\group}}(\bm{\alpha}):=\inf\left\{t\geq0:\exists\group\in\mathcal{\group},\text{ s.t. }\alpha_{i,t}=0,\forall i\in\group\right\}.
\]
When $\alpha_i$'s are all pure Markov strategies, 
\[
\tau^{\mathcal{\group}}(\bm{\alpha})=\inf\Big\{t\geq0:\min_{\group\in\mathcal{\group}}\max_{i\in \group}\alpha_i(p_t)=0\Big\}.
\]

At each moment $t$, given other players' strategies $\bm{\alpha}_{-i}$ and the history up to time $t$, each player chooses a continuation strategy $(\alpha_{i,s})_{s\geq t}$ to maximize her expected payoff as follows:
\[
\label{eq:constrained}
\mathbb{E}\left[u_i(p_{\tau^{\mathcal{\group}}(\alpha_i,\bm{\alpha}_{-i})})\mathbbm{1}_{\tau^{\mathcal{\group}}(\alpha_i,\bm{\alpha}_{-i})<\infty}-\int_t^{\tau^{\mathcal{\group}}(\alpha_i,\bm{\alpha}_{-i})} \cost_i(p_{s})\mathrm{d}s\bigg|\mathcal{F}_t\right]\tag{P}
\]
The indicator functions in the objectives denote that terminal payoffs $u_i$ are earned if and only if the game ends (in finite time). 

Notice that players are facing constrained stopping problems: player $i$ can only choose a stopping time from the set $\{\tau:\tau=\tau^{\mathcal{\group}}(\alpha_i,\bm{\alpha}_{-i})\text{ for some }\alpha_i\}$.

\vspace{2mm}
\paragraph{Solution Concepts}
I consider pure-strategy Markov perfect equilibrium (PSMPE).\footnote{\label{fn-mixed-strategy}
The focus on pure strategies is without loss of generality. Since $p_t$ is a continuous Feller process, as a corollary of Blumenthal's 0-1 law \citep[see][Theorem 2.15 and Proposition 2.19]{revuz2013continuous}, 
if players stop at some belief $p$ with \emph{any} strictly positive probability, then the process will stop at $p$ \emph{almost surely}. 
} 
Note that 
non-Markov deviations are allowed.

\vspace{2mm}
\paragraph{Sampling Regions v.s. Intervals}
At first glance, it might be compelling to claim that we only need to focus on sampling \emph{intervals}, since after all starting from a prior $p_0$, players will either stop at some belief $\underline{p}$ below $p_0$ or at some belief $\overline{p}$ above, leading to an interval $[\underline{p},\overline{p}]$. 
However, this argument does not take into account ``off-path'' continuation play and the sequential rationality therein. The continuation play  affects players' incentives for stopping, therefore we have to also specify what will happen outside $[\underline{p},\overline{p}]$ and thus need the whole sampling region $\sprg_i$. 
Indeed, there can be non-interval equilibria even under a collective stopping rule as simple as unanimous stopping; see examples in \Cref{sect-app-examples}. 

\section{Unilateral Stopping and Unanimous Stopping}
\label{sect-twoplayer}

To convey the main ideas and analytical methods transparently, I begin with two simple collective stopping rules: unilateral stopping and unanimous stopping. To ease the exposition, I focus on the two-player case, $N=\{1,2\}$, which entails no loss under these rules. The full analysis for many players and general collective stopping rules is deferred to \Cref{discussion}.

Recall that under unilateral stopping, either player can unilaterally terminate information acquisition; under unanimous stopping, information acquisition only ends when all players stop. Therefore, the corresponding stopping times of information acquisition simplify to
\[
\hat{\tau}(\alpha_i,\alpha_{-i}):=\inf\left\{t\geq0:\alpha_{1,t}=0\text{ or }\alpha_{2,t}=0\right\}
\]
for unilateral stopping, and
\[
\check{\tau}(\alpha_i,\alpha_{-i}):=\inf\left\{t\geq0:\alpha_{1,t}=\alpha_{2,t}=0\right\}
\]
for unanimous stopping. When players use pure Markov strategies $(\mkvstg_i,\mkvstg_{-i})$,
\begin{equation}
\label{eq:hattau}
    \hat\tau(\mkvstg_i,\mkvstg_{-i})=\inf\left\{t\geq0:\min\{\mkvstg_i(p_t),\mkvstg_{-i}(p_t)\}=0\right\}.
\end{equation}
\begin{equation}
\label{eq:checktau}
    \check\tau(\mkvstg_i,\mkvstg_{-i})=\inf\left\{t\geq0:\max\{\mkvstg_1(p_t),\mkvstg_2(p_t)\}=0\right\}.
\end{equation}

By definition, the stopping times satisfy 
\[
\hat{\tau}(\alpha_i,\alpha_{-i})\leq\tau(\alpha_{-i})\leq\check{\tau}(\alpha_i,\alpha_{-i}), 
\]
where $\tau(\alpha_{-i})=\inf\{t\geq0:\alpha_{-i,t}=0\}$ is the first time player $-i$ stops. Therefore, under unilateral stopping, player $i$ can only choose some stopping time 
earlier than $\tau(\alpha_{-i})$, while under unanimous stopping, player $i$ can only choose 
stopping times later than $\tau(\alpha_{-i})$. Players' constrained stopping problems (\ref{eq:constrained}) are thus relatively simple and transparent under unilateral and unanimous stopping.

\vspace{2mm}
\paragraph{Symmetric Equilibrium} For any profile of pure Markov strategies $(\alpha_1,\alpha_2)$, by \Cref{eq:hattau,eq:checktau}, $\hat{\tau}(\alpha_1,\alpha_2)=\hat{\tau}(\min\{\alpha_1,\alpha_2\},\min\{\alpha_1,\alpha_2\})$ and $\check{\tau}(\alpha_1,\alpha_2)=\check{\tau}(\max\{\alpha_1,\alpha_2\},\max\{\alpha_1,\alpha_2\})$. Hence, 
if $(\alpha_1,\alpha_2)$ is a PSMPE under unilateral stopping, $(\min\{\alpha_1,\alpha_2\},\min\{\alpha_1,\alpha_2\})$ is also a PSMPE and these two are outcome equivalent; similarly for unanimous stopping with respect to $(\max\{\alpha_1,\alpha_2\},\max\{\alpha_1,\alpha_2\})$. Without loss of generality, I focus on symmetric PSMPE in the following analysis.

\subsection{Ex Ante Reformulation of the Game}
\label{sect-reformulation}
In this section, I present an approach to transforming players' constrained stopping problems (\ref{eq:constrained}) into ex ante problems of semi-flexible information acquisition in which players choose distributions of posterior beliefs subject to majorization constraints. This approach bypasses the difficulty in the traditional dynamic programming approach in directly dealing with sophisticated Markov strategies (beyond simple cutoff strategies) and non-Markov deviations.

In the single-player case, every stopping strategy $\alpha$ induces a distribution of posterior beliefs $p_{\tau(\alpha)}$, denoted by $F_{\tau(\alpha)}(p):=\mathbb{P}(p_{\tau(\alpha)}\leq p),\forall p\in[0,1]$. 
Since $(p_t)_{t\geq0}$ is a (uniformly integrable) martingale, any induced distributions over posterior beliefs is \textbf{Bayes plausible}: $\mathbb{E}_{F_{\tau(\alpha)}}[p]=\mathbb{E}[p_{\tau(\alpha)}]=p_0$.\footnote{By the Optional Stopping Theorem; see Theorem 3.2 in \citet{revuz2013continuous}.} 

Under unilateral stopping, given the other player's strategy $\alpha_{-i}$, player $i$ chooses a stopping strategy $\alpha_i$ and thus implements a stopping time $\hat\tau(\alpha_i,\alpha_{-i})\leq\hat\tau(\alpha_{-i},\alpha_{-i})=\tau(\alpha_{-i})$. It is easy to verify that $F_{\hat\tau(\alpha_i,\alpha_{-i})}$ is a \textbf{mean-preserving contraction (MPC)} of $F_{\hat\tau(\alpha_{-i},\alpha_{-i})}$: players should have learned more than $F_{\hat\tau(\alpha_i,\alpha_{-i})}$ if they continued learning from $\hat\tau(\alpha_i,\alpha_{-i})$ to $\hat\tau(\alpha_{-i},\alpha_{-i})$. Under unanimous stopping, player $i$ can only implement some $\check\tau(\alpha_i,\alpha_{-i})\geq\check\tau(\alpha_{-i},\alpha_{-i})=\tau(\alpha_{-i})$ so that $F_{\check\tau(\alpha_i,\alpha_{-i})}$ is a \textbf{mean-preserving spread (MPS)} of $F_{\check\tau(\alpha_{-i},\alpha_{-i})}$. Moreover, since player $i$ can never end information acquisition where the other is sampling, $F_{\check\tau(\alpha_i,\alpha_{-i})}$ must have support in $[0,1]\setminus\sprg_{-i}$ when $\alpha_{-i}$ can be represented by a sampling region $\sprg_{-i}$. These claims hold for general stopping strategies, and the converse is also true when $\alpha_{-i}$ is pure Markov.

\begin{lemma}[UNI-MPC]
\label{MPC}
    Fix a prior $p_0$, a player $i$, and a pure Markov strategy $\mkvstg_{-i}$. A stopping strategy $\alpha_i$ exists such that $F_{\hat{\tau}(\alpha_i,\mkvstg_{-i})}=F$ if and only if $F$ is an MPC of $F_{\hat{\tau}(\mkvstg_{-i},\mkvstg_{-i})}$.
\end{lemma}

\begin{lemma}[UNA-MPS]
\label{MPS}
    Fix a prior $p_0$, a player $i$, and a pure Markov strategy $\mkvstg_{-i}$. A stopping strategy $\alpha_i$ exists such that $F_{\check{\tau}(\alpha_i,\mkvstg_{-i})}=F$ if and only if $F$ is an MPS of $F_{\check{\tau}(\mkvstg_{-i},\mkvstg_{-i})}$ and $\supp(F)\subset[0,1]\setminus\{p:\mkvstg_{-i}(p)=1\}$.
\end{lemma}

The proofs are based on the Skorokhod embedding theory and in particular an embedding method by \citet{chacon1976one}.


\subsubsection{Semi-flexible Information Acquisition with Costs}
Given the previous observations, we can study collective information acquisition from the ex ante perspective in which players choose distributions of posterior beliefs instead of stopping strategies. Accordingly, we should consider ex ante sampling costs.

Let $\quadvar(p):=\mathrm{d}\langle p\rangle_t/\mathrm{d}t$ at $p_t=p$.\footnote{$\quadvar$ is well defined because $(p_t)_{t\geq0}$ is Markov; see the proof of \Cref{cost} for details.} Define
\begin{equation}
	\label{static cost}
	\phi_i(p):=\int_{z}^p\int_{z}^x \frac{2\cost_i(y)}{\quadvar(y)}\mathrm{d}y\mathrm{d}x\text{ }\text{ for some }z\in(0,1), \quad\text{for }i=1,2.
	\end{equation}
Since $c_i\geq0$, $\phi_i$ is convex. \Cref{cost} below shows the ex ante total sampling cost can be rewritten using distributions over posterior beliefs. 
\begin{lemma}
\label{cost}
	For every a.s. finite stopping time $\tau$,
	\[
	\mathbb{E}\left[\int_0^\tau \cost_i(p_s)\mathrm{d}s\right]
	=\int_0^1\phi_i(p)\mathrm{d}F_\tau(p)-\phi_i(p_0),\quad\text{for }i=1,2,
	\]
where $\phi_i$ is defined in \Cref{static cost}.
\end{lemma}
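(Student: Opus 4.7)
The key observation is that $\phi_i$ has been engineered so that $\phi_i''(p) = \frac{2c_i(p)}{\mathrm{qv}(p)}$ in the a.e.\ sense, hence $\tfrac{1}{2}\phi_i''(p)\cdot\mathrm{qv}(p) = c_i(p)$. This is precisely the coefficient that converts the quadratic-variation integral in Itô's formula into the sampling cost, so the natural plan is to apply Itô's formula to $\phi_i(p_t)$ and take expectations.

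The steps I would follow: First, verify that $\mathrm{qv}(p):=\mathrm{d}\langle p\rangle_t/\mathrm{d}t$ at $p_t=p$ is a well-defined, state-dependent function. Under Assumption~\ref{assumption}, the Markov property of $(p_t)_{t\geq0}$ together with absolute continuity of $\mathrm{d}\langle p\rangle_t$ with respect to $\mathrm{d}t$ forces the Radon--Nikodym density at time $t$ to depend on $p_t$ only; this gives a well-defined function $\mathrm{qv}:[0,1]\to\mathbb{R}_+$ with $\mathrm{d}\langle p\rangle_s=\mathrm{qv}(p_s)\mathrm{d}s$ almost surely. Second, note that $\phi_i$ defined by the double integral is convex (since $c_i\geq0$) and has absolutely continuous first derivative with $\phi_i''=2c_i/\mathrm{qv}$ a.e. Because $c_i$ is bounded and measurable, $\phi_i$ may fail to be classical $C^2$, so I would invoke the Itô--Tanaka formula for convex functions of continuous semimartingales (or equivalently a mollification argument in which $\phi_i$ is approximated by $C^2$ convex functions $\phi_i^{(n)}$ and one passes to the limit using bounded convergence and the local-time representation).

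Third, apply the formula on $[0,\tau\wedge T]$ for a truncation $T$:
\[
\phi_i(p_{\tau\wedge T}) \;=\; \phi_i(p_0) \;+\; \int_0^{\tau\wedge T}\phi_i'(p_s)\,\mathrm{d}p_s \;+\; \tfrac{1}{2}\int_0^{\tau\wedge T}\phi_i''(p_s)\,\mathrm{d}\langle p\rangle_s .
\]
Taking expectations, the stochastic integral is a local martingale whose integrand $\phi_i'(p_s)$ is bounded on $[0,1]$ (since $\phi_i$ is convex on a compact interval and $p_s$ stays bounded, $\phi_i'$ is bounded on the range of $p_s$); combined with $\mathbb{E}[\langle p\rangle_T]<\infty$, it is a true martingale with zero expectation. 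The quadratic-variation term reduces, by the identity $\phi_i''(p_s)\,\mathrm{d}\langle p\rangle_s = 2c_i(p_s)\,\mathrm{d}s$, to $2\mathbb{E}\bigl[\int_0^{\tau\wedge T} c_i(p_s)\,\mathrm{d}s\bigr]/2$. Thus
\[
\mathbb{E}\bigl[\phi_i(p_{\tau\wedge T})\bigr] - \phi_i(p_0) \;=\; \mathbb{E}\!\left[\int_0^{\tau\wedge T} c_i(p_s)\,\mathrm{d}s\right].
\]

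Finally, send $T\to\infty$. Since $\tau$ is a.s.\ finite, $\tau\wedge T\to\tau$ a.s.; $\phi_i$ is bounded on $[0,1]$ when finite at the endpoints, and otherwise one uses that $p_{\tau\wedge T}\to p_\tau$ together with monotone/dominated convergence for the cost integral (using $c_i\geq0$ and $c_i$ bounded) to pass both sides to the limit. The right-hand side becomes $\mathbb{E}[\phi_i(p_\tau)]=\int_0^1 \phi_i(p)\,\mathrm{d}F_\tau(p)$, yielding the claimed identity. I expect the main technical hurdle to be the Itô--Tanaka step when $\phi_i''$ fails to be continuous, and the associated verification that the stochastic integral is a genuine (not merely local) martingale so that its expectation vanishes; both issues are resolved by approximating $\phi_i$ by smooth convex functions and using $\mathbb{E}[\langle p\rangle_t]<\infty$ together with boundedness of $p_t$.
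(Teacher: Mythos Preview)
Your approach is essentially the same as the paper's: apply It\^o's formula to $\phi_i(p_t)$, use the engineered identity $\tfrac{1}{2}\phi_i''(p)\,\mathrm{qv}(p)=c_i(p)$ to convert the quadratic-variation integral into the cost integral, and argue that the stochastic integral has zero expectation via truncation and the martingale property. The paper is in fact slightly more casual---it applies It\^o's lemma directly without flagging the It\^o--Tanaka/mollification issue and defers the truncation/integrability details to \citet{morris2019wald}---so your proposal is, if anything, a touch more careful on those points. One caveat: your claim that $\phi_i'$ is bounded on $[0,1]$ is not generally true (for the diffusion in Example~\ref{diffusion}, $\mathrm{qv}(p)\propto p^2(1-p)^2$ vanishes at the endpoints, so $\phi_i'$ can blow up there); the paper sidesteps this by localizing and citing Morris--Strack, and you should do likewise rather than assert boundedness.
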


With the above ex ante sampling costs, the semi-flexible information acquisition problems are formulated as follows:
\vspace{2mm}
\paragraph{Unilateral Stopping} A distribution of posterior belief $F$ is a solution to Problem (\ref{unilateral}) with $p_0$ if $\mathbb{E}_{F}[p]=p_0$ and for $i=1,2$,
\[
\label{unilateral}
	F\in\argmax_{H} \int_0^1[u_i(p)-\phi_i(p)]\mathrm{d}H(p)\text{ s.t. }H \text{ is an MPC of }F.
\tag{S-Uni}
\]
\paragraph{Unanimous Stopping} A distribution of posterior belief $F$ is a solution to Problem (\ref{unanimous}) with $(p_0,\sprg)$ if $\mathbb{E}_{F}[p]=p_0$ and for $i=1,2$,
\[
\begin{aligned}
\label{unanimous}
	F\in&\argmax_H\int_0^1[u_i(p)-\phi_i(p)]\mathrm{d}H(p)\\
    &\text{ s.t. } H \text{ is an MPS of }F,\supp(H)\subset[0,1]\setminus\sprg.
\end{aligned}
\tag{S-Una}
\]
Here $\sprg$ captures the candidate (or the other player's) sampling region.
\vspace{3mm}

A solution $F$ to Problem (\ref{unilateral}) or (\ref{unanimous}) must simultaneously solve both players' information acquisition problems given that the other player chooses $F$. In other words, $F$ is a fixed point of players' best responses.

Notice that distributions over posterior beliefs induced by pure Markov strategies always have binary support: $p_{\tau(\alpha)}\in\{\underline{p}(p_0;\alpha),\overline{p}(p_0;\alpha)\}$ where $\underline{p}(p_0;\alpha):=\sup\{p\leq p_0:\alpha(p)=0\}$ and $\overline{p}(p_0;\alpha):=\inf\{p\geq p_0:\alpha(p)=0\}$.\footnote{Both $\{p\leq p_0:\alpha(p)=0\}$ and $\{p\geq p_0:\alpha(p)=0\}$ are non-empty since we restrict to strategies that have open sampling regions.} Equivalently, in terms of sampling regions $\sprg$, we have 
\begin{equation}
\label{ulbd}
\begin{aligned}
\underline{p}(p_0;\sprg):=&\sup\left\{p\leq p_0:p\in[0,1]\setminus\sprg\right\},\\
\
\overline{p}(p_0;\sprg):=&\inf\left\{p\geq p_0:p\in[0,1]\setminus\sprg\right\}.
\end{aligned}
\end{equation}
\begin{definition}
	A distribution of posterior beliefs $F$ is a \textbf{binary policy} given a prior $p_0$ if $F$ is Bayes plausible with respect to $p_0$ and 
    $\supp(F)=\{\underline{p},\overline{p}\}$ for some $\underline{p},\overline{p}\in[0,1]$.
\end{definition}

The following theorem states that solving for PSMPE is equivalent to finding binary policy solutions to Problem (\ref{unilateral}) or (\ref{unanimous}).
\begin{theorem}
	\label{MPE}
	A profile of pure Markov strategies $(\alpha,\alpha)$ is sequentially rational at $p_t=p$ under unilateral stopping if and only if $F_{\hat{\tau}(\alpha,\alpha)}$ is a binary policy solution to Problem (\ref{unilateral}) with $p_0=p$. Instead, $(\alpha,\alpha)$ is sequentially rational at $p_t=p$ under unanimous stopping if and only if $F_{\check{\tau}(\alpha,\alpha)}$ is a binary policy solution to Problem (\ref{unanimous}) with $(p_0,\sprg)$ where $p_0=p$ and $\sprg=\{p':\alpha(p')=1\}$.
\end{theorem}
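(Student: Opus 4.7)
The plan is to leverage the ex ante reformulation supplied by \Cref{MPC,MPS,cost}: sequential rationality of player $i$ against $\alpha_{-i}=\alpha$ can be restated as a constrained maximization problem over distributions of posterior beliefs, and the constraint set is exactly the one appearing in (\ref{unilateral}) or (\ref{unanimous}).

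For the unilateral direction, I would first observe that $F_{\hat\tau(\alpha,\alpha)}$ is automatically a binary policy at $p_0=p$: because $\alpha$ is pure Markov with open sampling region, the process exits $\sprg:=\{p':\alpha(p')=1\}$ through either $\underline{p}(p;\alpha)$ or $\overline{p}(p;\alpha)$, and Bayes plausibility with mean $p$ follows from the optional stopping theorem. Next, for any (possibly history-dependent) deviation $\alpha_i'$ with $\hat\tau(\alpha_i',\alpha)<\infty$ almost surely, I would express player $i$'s continuation payoff starting from $p$ as $\int[u_i-\phi_i]\,\mathrm{d}F_{\hat\tau(\alpha_i',\alpha)}+\phi_i(p)$, where the terminal piece uses $p_{\hat\tau}\sim F_{\hat\tau(\alpha_i',\alpha)}$ and the cost piece invokes \Cref{cost}. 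Restricting to a.s.\ finite stopping times is harmless because $u_i$ is bounded and $\cost_i\geq 0$, so letting the game run forever is weakly dominated once the process reaches a stopping region. By \Cref{MPC}, the distributions achievable by varying $\alpha_i'$ are exactly the MPCs of $F_{\hat\tau(\alpha,\alpha)}$, so the absence of a profitable deviation for player $i$ is equivalent to $F_{\hat\tau(\alpha,\alpha)}$ attaining the maximum in the $i$-th subproblem. Imposing this for both $i=1$ and $i=2$ simultaneously is precisely the definition of a solution to (\ref{unilateral}).

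The unanimous case is parallel, with \Cref{MPS} replacing \Cref{MPC}: feasible $H$ must be MPSs of $F_{\check\tau(\alpha,\alpha)}$ with $\supp(H)\subset[0,1]\setminus\sprg$ where $\sprg=\{p':\alpha(p')=1\}$, reproducing the constraint set in (\ref{unanimous}). The binary-policy property of $F_{\check\tau(\alpha,\alpha)}$ again follows from $\alpha$ being pure Markov with open sampling region; in particular, under the support restriction the exit can only occur at $\underline{p}(p;\alpha)$ or $\overline{p}(p;\alpha)$.

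The main delicate step is bridging ``$(\alpha,\alpha)$ is sequentially rational'' (a statement quantifying over all, possibly non-Markov, deviations) and ``no profitable deviation among distributions over posterior beliefs exists.'' \Cref{MPC,MPS} already identify the achievable distributions, so the residual content is that two stopping strategies inducing the same belief distribution yield player $i$ the same expected payoff. This holds because $u_i$ depends only on the terminal belief and, via \Cref{cost}, the expected sampling cost depends only on the distribution of that belief; combined with the Markov property of $(p_t)_{t\geq 0}$, the verification of sequential rationality at $p_t=p$ collapses to a static optimization over the corresponding constraint set, and the theorem follows.
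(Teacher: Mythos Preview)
Your proposal is correct and follows essentially the same approach as the paper: rewrite each player's continuation payoff at $p_t=p$ as $\int[u_i-\phi_i]\,\mathrm{d}F+\phi_i(p)$ via \Cref{cost}, invoke \Cref{MPC} (resp.\ \Cref{MPS}) to identify the feasible $F$'s with the MPC (resp.\ support-constrained MPS) constraint, and conclude that sequential rationality for both players is exactly the fixed-point condition in (\ref{unilateral}) (resp.\ (\ref{unanimous})). The paper in fact deduces \Cref{MPE} as a special case of a general result for $\mathcal{\group}$-collective stopping (\Cref{MPE-general}), but the two-player argument is identical in spirit; your added remarks on why $F_{\hat\tau(\alpha,\alpha)}$ is binary and why a.s.\ finite stopping times suffice are consistent with the paper's terse ``by optimality.''
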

By definition, $(\alpha,\alpha)$ is a PSMPE if and only if it is sequentially rational at any $p_t$. Equilibrium existence is implied by the fact that the degenerate distribution at prior (no information) is a binary policy solution to Problem (\ref{unilateral}) for any $p_0$, thus $\alpha(p)=\mathbbm{1}_{p\in\emptyset}$ is a PSMPE under unilateral stopping, and the binary policy with support $0$ and $1$ (full information) a solution to Problem (\ref{unanimous}) for any $p_0$ and $\sprg=(0,1)$, so $\alpha(p)=\mathbbm{1}_{p\in(0,1)}$ is a PSMPE under unanimous stopping.

\Cref{MPE} naturally extends to many players where the induced distribution over posterior beliefs has to solve every player's ex ante problem.

\subsection{Equilibrium Characterization}
\label{sect-eqm-characterization}
Henceforth I refer to the sampling region in a PSMPE as an ``equilibrium''. This section characterizes what sampling regions can be equilibria. According to \Cref{MPE}, the problem becomes characterizing binary policy solutions to the ex ante problems (\ref{unilateral}) and (\ref{unanimous}), which can be done by concavification \citep[see][]{aumann1995repeated,kamenica2011bayesian}.

Different from standard Bayesian persuasion, players face majorization constraints in addition to Bayes plausibility. With binary policies, we are able to embed these majorization constraints into players' objectives, so the concavification technique becomes applicable. The optimal choices of distributions of posterior beliefs determined by concavification are now best responses to the other player' choice of sampling regions. To find equilibria, we need to characterize the fixed points in players' best responses.

For any sampling region $\sprg$ and $i=1,2$, define the following concave closures: For unilateral stopping, construct the \textbf{``$\sprg$-inward'' concave closure} of $u_i-\phi_i$:
\begin{align*}
	{V}_i^{IN}\left(p;\sprg\right):=&\sup\left\{z:(p,z)\in co\left((u_i-\phi_i)|_{[\underline{p}(p;\sprg),\overline{p}(p;\sprg)]}\right)\right\},\text{ for }p\in[0,1],
\end{align*}
where $\underline{p}(p;\sprg)$ and $\overline{p}(p;\sprg)$ are defined in \Cref{ulbd} 
and $co(f)$ denotes the convex hull of the graph of $f$. Notice that ${V}_i^{IN}(p;\sprg)=u_i(p)-\phi_i(p)$ for $p\in[0,1]\setminus\sprg$. See \Cref{fig-unilateral} for graphical illustrations with $\sprg=(b,B)$, in which blue curves are $u_i-\phi_i$ and orange curves are ``$\sprg$-inward'' concave closures $V_i^{IN}$. 
When $p_t=p\in(b,B)$, $V_i^{IN}(p;(b,B))$ captures the highest expected continuation payoff player $i$ can obtain by choosing an MPC of the binary policy with support $\{b,B\}$, or say, by curtailing sampling within $(b,B)$ under unilateral stopping.

\begin{figure}[!t]
	\centering
	\begin{subcaptionblock}{0.48\textwidth}
		\centering
		\includegraphics[scale=0.35]{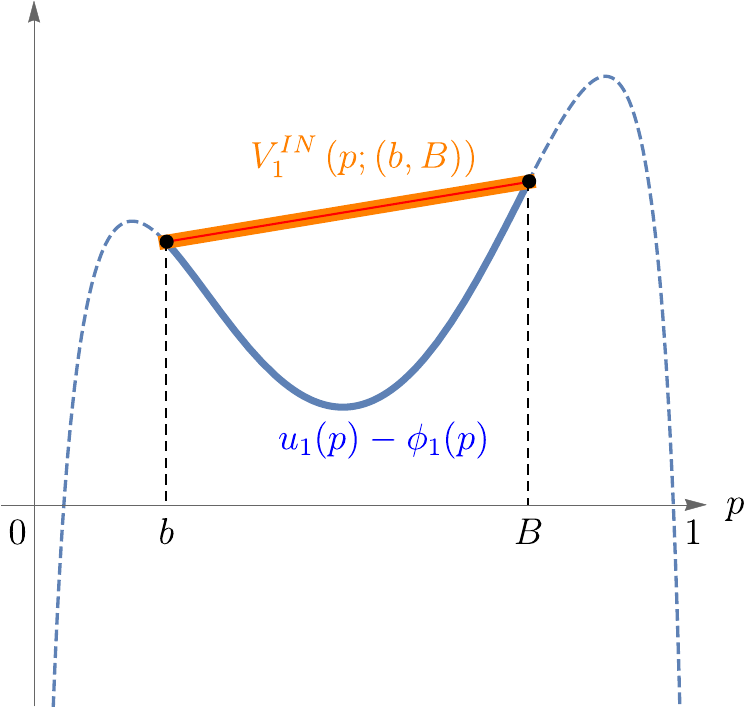}
            \caption{P1's incentive under $\sprg=(b,B)$.}
            \label{fig-unilateral-p1-b}
	\end{subcaptionblock}
        \begin{subcaptionblock}{0.48\textwidth}
		\centering
		\includegraphics[scale=0.35]{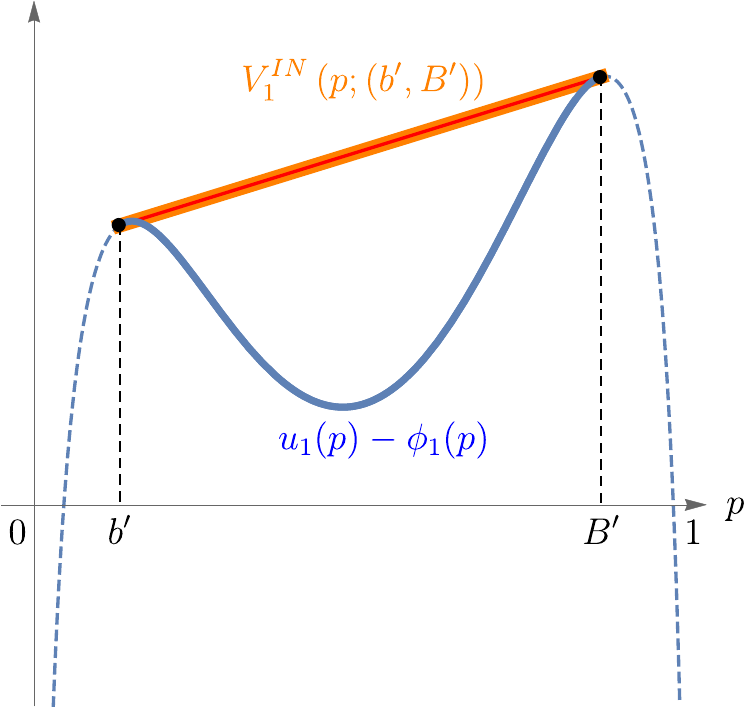}
            \caption{P1's incentive under $\sprg=(b',B')$.}
	\end{subcaptionblock}
	\begin{subcaptionblock}{0.48\textwidth}
		\centering
		\includegraphics[scale=0.35]{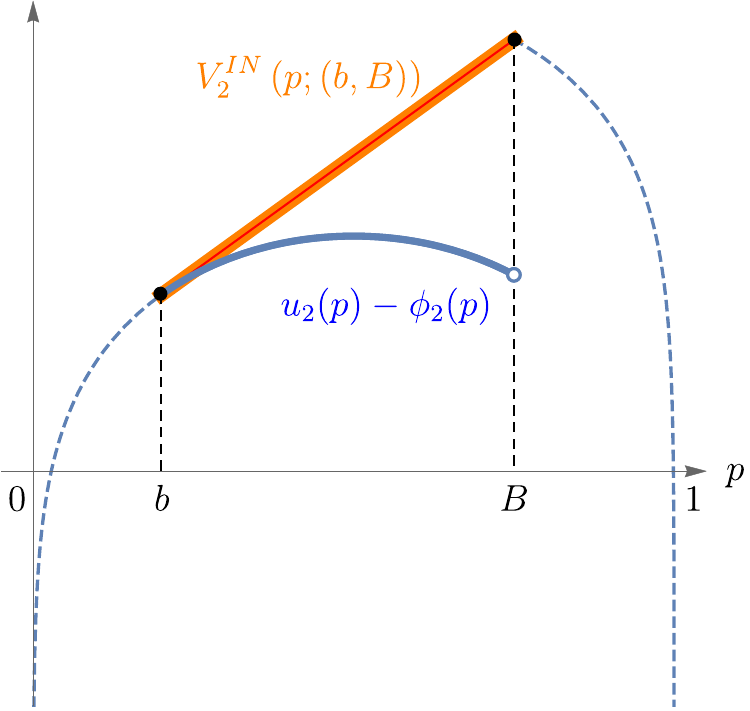}
            \caption{P2's incentive under $\sprg=(b,B)$.}
	\end{subcaptionblock}
        \begin{subcaptionblock}{0.48\textwidth}
		\centering
		\includegraphics[scale=0.35]{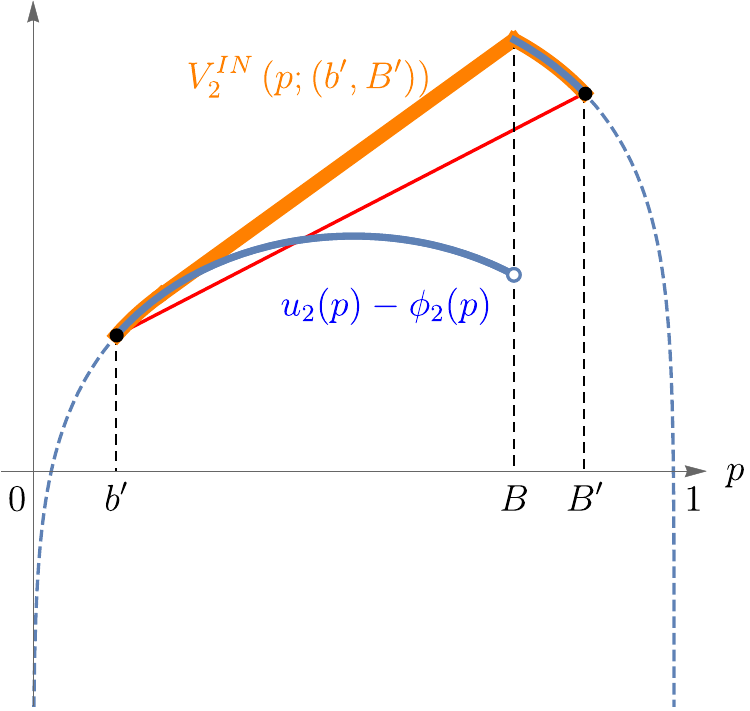}
            \caption{P2's incentive under $\sprg=(b',B')$.}
            \label{fig-unilateral-p2-b'}
	\end{subcaptionblock}
	\caption{Unilateral Stopping.}
    \label{fig-unilateral}
\end{figure}

For unanimous stopping, first define an auxiliary payoff $v_i^{OUT}$ that creates ``holes'' within the sampling region $\sprg$ and otherwise equals $u_i-\phi_i$. Then construct the concave closure of $v_i^{OUT}$, which is the \textbf{``$\sprg$-outward'' concave closure} of $u_i-\phi_i$:
\begin{align*}
	v^{OUT}_i\left(p;\sprg\right):=&\left\{
	\begin{array}{ll}
		-\infty, &\text{for }p\in\sprg,\\
		u_i(p)-\phi_i(p),&\text{for }p\in[0,1]\setminus\sprg,\\
	\end{array}
	\right.\\
	{V}_i^{OUT}\left(p;\sprg\right):=&\sup\left\{z:(p,z)\in co(v_i^{OUT})\right\},\text{ for }p\in[0,1].
\end{align*}
See \Cref{fig-unanimous} for graphical illustrations with $\sprg=(b,B)$, where blue curves are ``modified'' payoffs $v_i^{OUT}$ and orange curves are ``$\sprg$-outward'' concave closures $V_i^{OUT}$. The ``hole'' within the graph of $v^{OUT}_i$ is constructed to prohibit stopping within $(b,B)$ since it is impossible to terminate information acquisition there under unanimous stopping when the other player uses $\sprg=(b,B)$. When $p_t=p$, $V_i^{OUT}(p;(b,B))$ captures the highest expected continuation payoff player $i$ can obtain by choosing an MPS with support in $[0,1]\setminus(b,B)$ of the binary policy with support $\{b,B\}$, or say, by prolonging the sampling process beyond $(b,B)$.

\begin{figure}[!t]
	\centering
	\begin{subcaptionblock}{0.48\textwidth}
		\centering
		\includegraphics[scale=0.35]{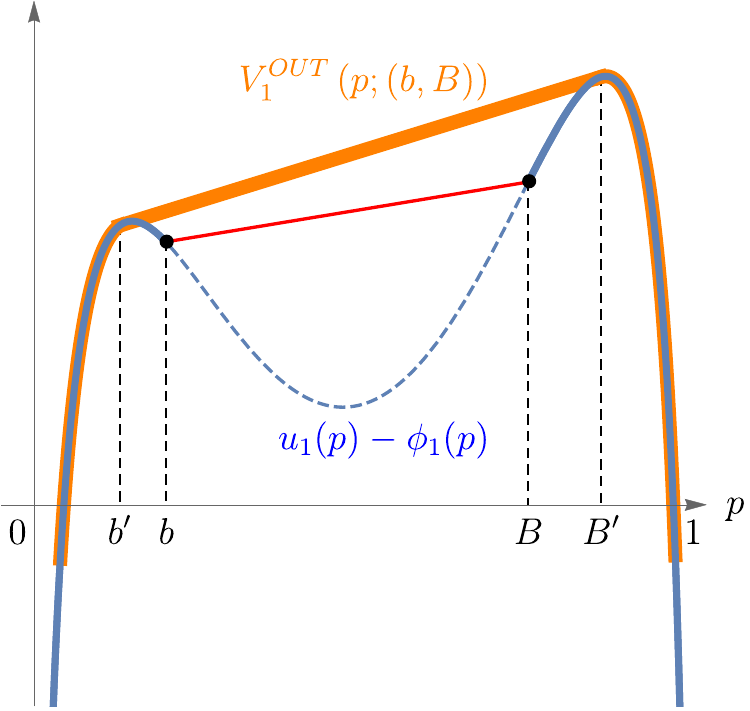}
            \caption{P1's incentive under $\sprg=(b,B)$.}
            \label{fig-unanimous-p1-b}
	\end{subcaptionblock}
        \begin{subcaptionblock}{0.48\textwidth}
		\centering
		\includegraphics[scale=0.35]{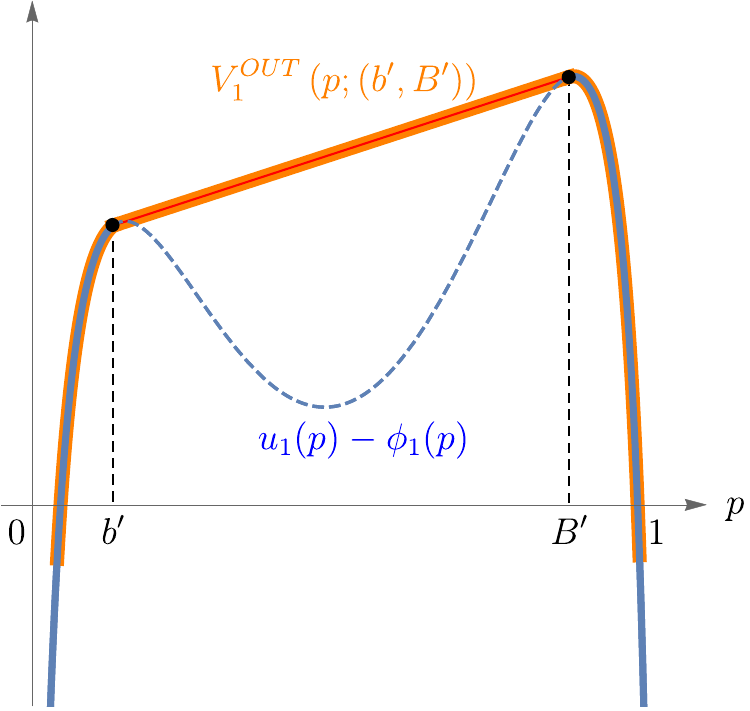}
            \caption{P1's incentive under $\sprg=(b',B')$.}
	\end{subcaptionblock}
	\begin{subcaptionblock}{0.48\textwidth}
		\centering
		\includegraphics[scale=0.35]{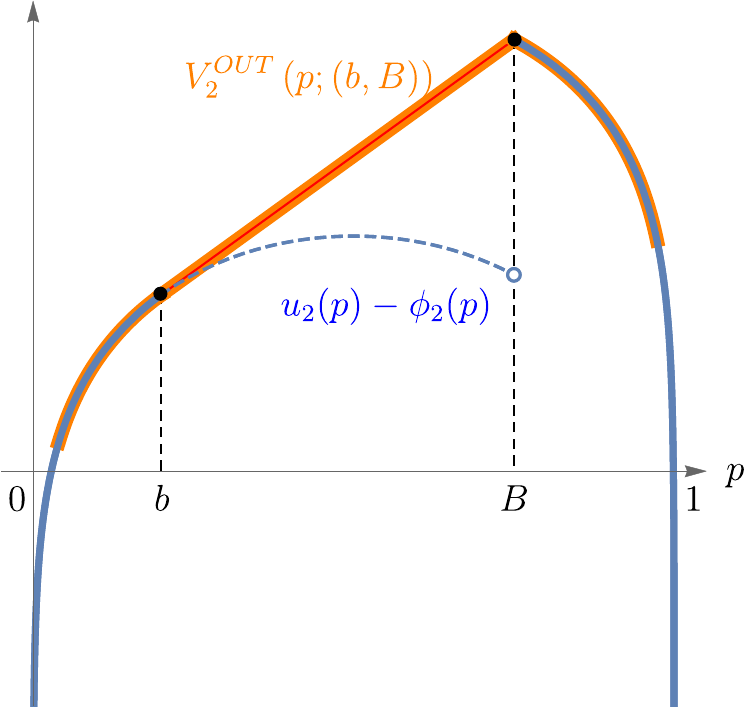}
            \caption{P2's incentive under $\sprg=(b,B)$.}
	\end{subcaptionblock}
        \begin{subcaptionblock}{0.48\textwidth}
		\centering
		\includegraphics[scale=0.35]{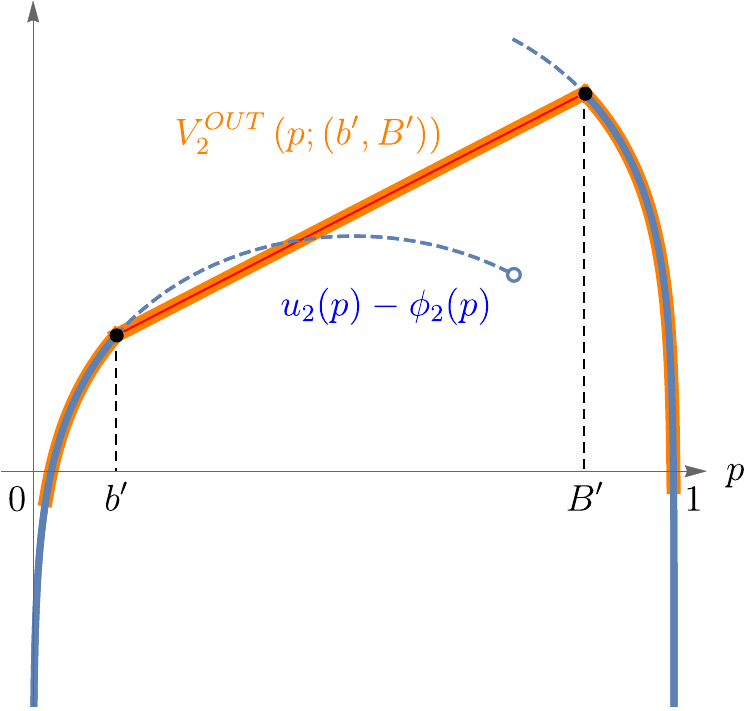}
            \caption{P2's incentive under $\sprg=(b',B')$.}
            \label{fig-unanimous-p2-b'}
	\end{subcaptionblock}
	\caption{Unanimous Stopping.}
 \label{fig-unanimous}
\end{figure}

\begin{theorem}[Concavification]
\label{concave}
	A sampling region $\sprg$ is an equilibrium under unilateral stopping if and only if $\forall i=1,2$,
\begin{itemize}
    \item[(i)] For any $p\in\sprg$, with $\underline{p}:=\underline{p}(p;\sprg)$ and $\overline{p}:=\overline{p}(p;\sprg)$ as defined in \Cref{ulbd},
\begin{align*}
\frac{\overline{p}-p}{\overline{p}-\underline{p}}[u_i(\underline{p})-\phi_i(\underline{p})]+\frac{p-\underline{p}}{\overline{p}-\underline{p}}[u_i(\overline{p})-\phi_i(\overline{p})]
=V_i^{IN}\left(p;\sprg\right).
\end{align*}
    \item[(ii)] For any $p\in[0,1]\setminus\sprg$, $u_i(p)-\phi_i(p)=V_i^{IN}\left(p;\sprg\right).$ 
\end{itemize}
The same result holds for unanimous stopping with $V_i^{IN}$ replaced by ${V}_i^{OUT}$.
\end{theorem}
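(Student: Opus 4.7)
My plan is to reduce everything to a characterization of binary-policy solutions of the ex ante problems via \Cref{MPE}, then apply the concavification identity after translating each majorization constraint into a suitable support (or payoff) modification.

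\textbf{Reduction via \Cref{MPE}.} By \Cref{MPE}, the symmetric profile $(\alpha,\alpha)$ with sampling region $\sprg$ is sequentially rational at $p_t=p$ under unilateral (resp.\ unanimous) stopping iff the binary policy $F_p^{\sprg}$ with support $\{\underline{p}(p;\sprg),\overline{p}(p;\sprg)\}$ and mean $p$ solves (\ref{unilateral}) (resp.\ (\ref{unanimous})). Note $F_p^{\sprg}=\delta_p$ when $p\notin\sprg$. So it suffices to verify, at each prior $p$ and for each player $i$, when $F_p^{\sprg}$ maximizes $\int (u_i-\phi_i)\,dH$ over the respective feasible set.

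\textbf{Unilateral case.} By \Cref{MPC}, the feasible set equals all MPCs of $F_p^{\sprg}$. Because $F_p^{\sprg}$ is supported on $\{\underline{p},\overline{p}\}$ with mean $p$, its MPCs coincide exactly with Bayes plausible distributions on $[\underline{p},\overline{p}]$ with mean $p$: any such $H$ can be dilated to $F_p^{\sprg}$ via the natural mean-preserving kernel sending $x\in[\underline{p},\overline{p}]$ to $\underline{p}$ with probability $(\overline{p}-x)/(\overline{p}-\underline{p})$ and to $\overline{p}$ with the complementary probability, which shows $F_p^{\sprg}$ is an MPS of $H$. By the standard concavification identity, the supremum over such $H$ equals $V_i^{IN}(p;\sprg)$. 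Since $F_p^{\sprg}$ attains the chord value $\frac{\overline{p}-p}{\overline{p}-\underline{p}}[u_i(\underline{p})-\phi_i(\underline{p})]+\frac{p-\underline{p}}{\overline{p}-\underline{p}}[u_i(\overline{p})-\phi_i(\overline{p})]$, optimality is exactly condition (i). For $p\notin\sprg$ the interval collapses to $\{p\}$, so (ii) reduces to $u_i(p)-\phi_i(p)=V_i^{IN}(p;\sprg)$, which holds automatically—consistent with the fact that when the other player already stops, no deviation is feasible for player $i$.

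\textbf{Unanimous case.} By \Cref{MPS}, the feasible set consists of MPSs of $F_p^{\sprg}$ whose support lies in $[0,1]\setminus\sprg$. I absorb the support constraint into the objective by replacing $u_i-\phi_i$ with $v_i^{OUT}(\cdot;\sprg)$, so that the problem becomes $\max_H \int v_i^{OUT}\,dH$ over \emph{all} Bayes plausible $H$ at $p$ (any distribution putting mass on $\sprg$ yields $-\infty$, and any Bayes plausible distribution supported outside $\sprg$ is automatically an MPS of $F_p^{\sprg}$ by the reverse dilation argument used above applied to its restrictions above and below $p$). The concavification identity now gives $\max_H\int v_i^{OUT}\,dH=V_i^{OUT}(p;\sprg)$. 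For $p\in\sprg$, $F_p^{\sprg}$ attains the chord value, so optimality is condition (i) with $V_i^{IN}$ replaced by $V_i^{OUT}$. For $p\notin\sprg$, $F_p^{\sprg}=\delta_p$ attains $u_i(p)-\phi_i(p)$, and optimality is precisely condition (ii), capturing that further spreading beyond $\sprg$ is unprofitable.

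\textbf{Main obstacle.} The crux is the two dilation arguments that identify MPCs of the binary policy with Bayes plausible distributions on $[\underline{p},\overline{p}]$ and MPSs of the binary policy (with support outside $\sprg$) with Bayes plausible distributions on $[0,1]\setminus\sprg$. These are elementary but are what legitimise the replacement of a majorization-constrained optimization by a plain concavification. A secondary delicate point is checking sequential rationality at \emph{every} prior, including those in $[0,1]\setminus\sprg$: under unanimous stopping this is substantive (condition (ii) rules out off-path deviations that spread beliefs across $\sprg$), whereas under unilateral stopping it is vacuous.
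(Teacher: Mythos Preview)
Your proof is correct and follows essentially the same route as the paper: reduce via \Cref{MPE} to the ex ante problems, then identify the majorization-constrained feasible set with the set of Bayes plausible distributions carrying the relevant support restriction, so that optimality of the binary policy is exactly equality with the appropriate concave closure. Your explicit dilation arguments for the two identifications (MPCs of a binary policy $=$ mean-$p$ distributions on $[\underline{p},\overline{p}]$; support-restricted MPSs $=$ mean-$p$ distributions on $[0,1]\setminus\sprg$) make transparent what the paper asserts ``by construction,'' and your observation that condition~(ii) is vacuous under unilateral stopping but substantive under unanimous stopping matches the paper's treatment.
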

For any $\underline{p}\leq\overline{p}$, define
\begin{equation}
    \label{ex-ante-payoff}
U_i\big(\underline{p},\overline{p};p\big):=\left\{
\begin{array}{ll}
   \frac{\overline{p}-p}{\overline{p}-\underline{p}}[u_i(\underline{p})-\phi_i(\underline{p})]+\frac{p-\underline{p}}{\overline{p}-\underline{p}}[u_i(\overline{p})-\phi_i(\overline{p})]  & \text{if }\underline{p}<\overline{p}\text{ and }p\in[\underline{p},\overline{p}], \\
   u_i(p)-\phi_i(p)  & \text{otherwise}.
\end{array}
\right.
\end{equation}
With some abuse of the notation, let $U_i(\sprg;p):=U_i(\underline{p}(p;\sprg),\overline{p}(p;\sprg);p)$ for any $\sprg$ and $p\in[0,1]$. Then conditions (i) and (ii) in \Cref{concave} can be summarized by:
\begin{equation}
\label{eqm-concave-overall}
U_i\left(\sprg;p\right)=V_i^{IN}\left(p;\sprg\right) \left(\text{or } {V}_i^{OUT}\left(p;\sprg\right)\right), \text{ for all }p\in[0,1].
\end{equation}



Suppose that player $-i$ uses the sampling region $\sprg$, i.e., stops if and only if $p_t$ escapes $\sprg$. Then with a belief $p$, player $i$'s expected continuation payoff from using the same sampling region is exactly $U_i(\sprg;p)$: when $p\in\sprg$, it is the linear combination of payoffs at the boundaries of $(\underline{p}(p;\sprg),\overline{p}(p;\sprg))$; otherwise, it is $u_i(p)-\phi_i(p)$ since information acquisition ends immediately at $p\not\in\sprg$. 
Hence, \Cref{concave} simply presents the conditions whereby using the equilibrium sampling region is optimal for both players given that the other player uses it. 
This result naturally extends to many players where \Cref{eqm-concave-overall} should hold for all $i\in N$. 

See \Cref{fig-unilateral,fig-unanimous} for graphical illustrations of equilibrium conditions. The figures are drawn based on the same payoffs $u_1-\phi_1$ and $u_2-\phi_2$. 
Let us consider two candidate sampling regions: $(b,B)$ and $(b',B')$, where $(b,B)$ is player 2's individually optimal sampling region and $(b',B')$ is player 1's. We only need to check whether player 1 has incentives to deviate from $(b,B)$ and similarly for $(b',B')$ but with respect to player 2. 

Focus on unilateral stopping. The $(b,B)$-inward concave closure of player 1's payoff, $V_1^{IN}(p;(b,B))$, is exactly given by the line connecting points $(b,u_1(b)-\phi_1(b))$ and $(B,u_1(B)-\phi_1(B))$, as shown in \Cref{fig-unilateral-p1-b}. Therefore, player 1 always likes more information within $(b,B)$ and has no incentive to stop earlier. As a result, $(b,B)$ is an equilibrium. However, given 
$(b',B')$, player 2 wants to stop immediately when $p_t$ hits or goes above $B$, as illustrated by the $(b',B')$-outward concave closure $V_2^{IN}(p;(b',B'))$ in \Cref{fig-unilateral-p2-b'}. Hence, $(b',B')$ is not an equilibrium. 

In contrast, under unanimous stopping, players should have no incentive to stop later and outside the sampling region in equilibrium. According to \Cref{fig-unanimous}, $(b',B')$ is an equilibrium but $(b,B)$ is not.


It is worth noting that equilibria are not necessarily intervals, especially under unanimous stopping. See \Cref{sect-app-examples} for examples.

\vspace{2mm}
\paragraph{Maximum and Minimal Equilibria}
Multiple sampling regions can satisfy the conditions in \Cref{concave} and be equilibria. A sampling region $\sprg$ is \textbf{larger than} another $\sprg'$ if $\sprg'\subset \sprg$.
Based on \Cref{concave}, I show that equilibria form a semi-lattice under this partial order. 

\begin{lemma}
\label{lattice}
    If $\sprg$ and $\sprg'$ are equilibria, $\sprg\cup\sprg'$ is also an equilibrium.
\end{lemma}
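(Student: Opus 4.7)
My plan is to verify the concavification conditions of \Cref{concave} directly for $\sprg^* := \sprg \cup \sprg'$, writing $f_i := u_i - \phi_i$. For $p \in [0,1]\setminus\sprg^*$, both sides of \eqref{eqm-concave-overall} reduce to $f_i(p)$: trivially in the unilateral case, and in the unanimous case because $V_i^{OUT}(p;\sprg^*) \leq V_i^{OUT}(p;\sprg) = f_i(p)$, where the first inequality uses $[0,1]\setminus\sprg^* \subset [0,1]\setminus\sprg$ and the equality uses the $\sprg$-equilibrium at $p \in [0,1]\setminus\sprg$. For $p \in \sprg^*$, let $(a,b)$ be the connected component of $\sprg^*$ containing $p$, so $a, b \in ([0,1]\setminus\sprg) \cap ([0,1]\setminus\sprg')$; any $\sprg$- or $\sprg'$-component that meets $(a,b)$ is contained in $[a,b]$ by connectedness, so the equilibrium conditions for $\sprg$ and $\sprg'$ both apply on $[a,b]$.

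\textbf{Unilateral.} I need to show the chord $L_i$ through $(a, f_i(a))$ and $(b, f_i(b))$ dominates $f_i$ on $[a,b]$. Suppose not: assume (under continuity of $f_i$ so that the max is attained) $M := \max_{[a,b]}(f_i - L_i) > 0$, and let $p^* := \min\{p \in [a,b] : (f_i - L_i)(p) = M\}$. Since $f_i = L_i$ at $a$ and $b$, $p^* \in (a,b) \subset \sprg \cup \sprg'$; without loss of generality $p^* \in \sprg$, with $\sprg$-component $(a_0, b_0) \subset [a,b]$. The $\sprg$-equilibrium at $p^*$ gives $f_i(p^*) \leq \lambda f_i(a_0) + (1-\lambda) f_i(b_0)$ with $\lambda := (b_0-p^*)/(b_0-a_0) \in (0,1)$. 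Subtracting the affine identity $L_i(p^*) = \lambda L_i(a_0) + (1-\lambda) L_i(b_0)$ yields $M \leq \lambda (f_i - L_i)(a_0) + (1-\lambda)(f_i - L_i)(b_0)$, and since $f_i - L_i \leq M$ pointwise, both endpoint deviations must equal $M$. If $a_0 > a$, this contradicts the minimality of $p^*$; if $a_0 = a$, then $(f_i - L_i)(a_0) = 0 \neq M$. Either way, contradiction.

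\textbf{Unanimous.} I need $V_i^{OUT}(p;\sprg^*) = L_i(p)$ for $p \in (a,b)$. The $\geq$ direction follows from the split $\tfrac{b-p}{b-a}\delta_a + \tfrac{p-a}{b-a}\delta_b$, whose support lies in $[0,1]\setminus\sprg^*$. For $\leq$, by 1-dimensional Carathéodory it suffices to bound $\lambda f_i(q_L) + (1-\lambda) f_i(q_R)$ for arbitrary $q_L \in [0,a]$ and $q_R \in [b,1]$ in $[0,1]\setminus\sprg^* = ([0,1]\setminus\sprg) \cap ([0,1]\setminus\sprg')$ with $\lambda q_L + (1-\lambda) q_R = p$. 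The $\sprg$-equilibrium at $a$ gives $V_i^{OUT}(a;\sprg) = f_i(a)$; applying this to the Bayes-plausible split $\tfrac{b-a}{b-q_L}\delta_{q_L} + \tfrac{a-q_L}{b-q_L}\delta_b$ at $a$ (both support points lie in $[0,1]\setminus\sprg$) and rearranging yields $f_i(q_L) \leq L_i(q_L)$, where $L_i$ is affinely extrapolated. Symmetrically, the $\sprg'$-equilibrium at $b$ yields $f_i(q_R) \leq L_i(q_R)$. Then $\lambda f_i(q_L) + (1-\lambda) f_i(q_R) \leq \lambda L_i(q_L) + (1-\lambda) L_i(q_R) = L_i(p)$ by affinity of $L_i$.

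The main technical obstacle is the unilateral step's attainment of $M$, which relies on continuity of $f_i$. If $f_i$ is merely upper semicontinuous, the same argument works verbatim; if only measurability is available, I would replace $p^*$ by the infimum of $\{p \in [a,b]: (f_i - L_i)(p) > M - \epsilon\}$, rerun the chord-vs-chord estimate with an $\epsilon$-slack, and let $\epsilon \to 0$ to recover the contradiction.
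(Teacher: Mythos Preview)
Your proof is correct and follows a genuinely different route from the paper in both halves.

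\textbf{Unilateral stopping.} The paper works interval-by-interval: for two overlapping interval equilibria $(\underline{p},\overline{p})$ and $(\underline{p}',\overline{p}')$ it shows by a direct chord comparison (no regularity on $f_i$ needed) that the chord over the union dominates $f_i$, and then passes to general open $\sprg,\sprg'$ ``by induction'' over their countably many component intervals. Your maximum-principle argument treats an entire component $(a,b)$ of $\sprg\cup\sprg'$ at once---conceptually more direct, but it needs upper semicontinuity of $f_i$ for the maximizer $p^*$ to exist. Your $\epsilon$-slack patch for merely measurable $f_i$ does not close the gap as written: even granting $(f_i-L_i)(a_0)\leq M-\epsilon$ at the left endpoint of the relevant component, plugging this together with $(f_i-L_i)(b_0)\leq M$ into the chord inequality $M-\epsilon<\lambda(f_i-L_i)(a_0)+(1-\lambda)(f_i-L_i)(b_0)$ yields only the vacuous $\lambda<1$, not a contradiction. (The paper's countable ``induction'' also tacitly needs a limit of chords and hence some continuity at component endpoints, so this is a shared technicality; $u_i$ is upper semicontinuous in all of the paper's applications anyway.)

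\textbf{Unanimous stopping.} The paper proves a strictly stronger monotonicity lemma: if player $i$'s concavification condition holds for some $\sprg$, it holds for every larger open set, so the unanimous equilibrium set is upward closed and $\sprg\cup\sprg'\supset\sprg$ finishes the argument immediately. Your direct verification is clean and regularity-free; in fact it uses only one of the two equilibria (your invocation of $\sprg'$ at $b$ could equally use $\sprg$, since $a,q_R\in[0,1]\setminus\sprg^*\subset[0,1]\setminus\sprg$), so you have essentially rediscovered the paper's monotonicity lemma without stating it in full generality.
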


As an implication of \Cref{lattice}, a \textbf{maximum equilibrium} exists. 
Under unilateral stopping, for any two comparable equilibria, the larger one is preferred by both players, since they both have the option to deviate to the smaller equilibrium 
but neither has such incentive. Hence, the maximum equilibrium  
is Pareto dominant among all equilibria.

Although a minimum equilibrium may not exist
, we can define \textbf{minimal equilibria} that are \emph{no strictly larger than} any other equilibria. 
Under unanimous stopping, 
only minimal equilibria are not Pareto dominated.

Thus for comparative statics, I often focus on the maximum equilibrium under unilateral stopping and the minimal ones under unanimous stopping.

\subsection{Inefficiencies and Conflicts of Interest}
\label{sect-comparative}

\paragraph{Pareto Inefficiencies \& the Control-sharing Effect}
A natural question is how collective information acquisition compares to the single-agent case and whether it is efficient or not. 
For any $\lambda=(\lambda_1,\lambda_2)\in\mathbb{R}_+^2$ such that $\lambda\ne\bm{0}=(0,0)$, define
$
W(\cdot;\lambda):=\lambda_1(u_1-\phi_1)+\lambda_2(u_2-\phi_2).
$
A sampling region $\sprg$ is $\lambda$-efficient if it is optimal for $W(\cdot;\lambda)$:

\begin{definition}
    A  sampling region $\sprg$ is \textbf{$\bm{\lambda}$-efficient} if 
    \[
    \sum_{i=1,2}\lambda_iU_i\left(\sprg;p\right)=\hat{W}(p;\lambda),\quad\forall p\in[0,1].
    \]
    where $\hat{W}(p;\lambda)$ is the concave closure of $W(p;\lambda)$.
\end{definition}

$\lambda$-efficient sampling regions capture all Pareto efficient sampling regions, including player $i$'s individual optimal choice, denoted by $\sprg_i^*$.

Suppose that $u_i$ are upper semi-continuous. Then a $\lambda$-efficient sampling region exists: $\sprg_\lambda^*:=\{p:\hat{W}(p;\lambda)>W(p;\lambda)\}$ is (open and) $\lambda$-efficient. For expositional simplicity, assume that $\sprg_\lambda^*$ is the unique $\lambda$-efficient sampling region for any $\lambda$.\footnote{In general, we can consider maximal $\lambda$-efficient sampling regions for unilateral stopping and minimal ones for unanimous stopping.} By comparing equilibria with $\lambda$-efficient sampling regions, I show that collective information acquisition is (weakly) Pareto inefficient: there is too little learning under unilateral stopping; and under unanimous stopping, learning is never too little but possibly too much. 

\begin{proposition}[Pareto Inefficiency]
\label{pareto}
    Suppose that $\sprg_\lambda^*$ is the unique $\lambda$-efficient sampling region for $\lambda\in\mathbb{R}_+^2,\lambda\ne\bm{0}$. 
    \begin{enumerate}
        \item[(a)] Unilateral stopping: An equilibrium is smaller than $\sprg_\lambda^*$.
        \item[(b)] Unanimous stopping: An equilibrium is no strictly smaller than $\sprg_\lambda^*$.
    \end{enumerate}
\end{proposition}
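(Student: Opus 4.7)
My plan is to use the concavification characterization in Theorem~\ref{concave} to translate each player's equilibrium conditions into a joint constraint on the $\lambda$-weighted payoff $W:=\lambda_1(u_1-\phi_1)+\lambda_2(u_2-\phi_2)$, and then compare the resulting affine chord with the concave closure $\hat{W}$.

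For part (a), fix an equilibrium $\sprg$ under unilateral stopping and any connected component $(\underline{p},\overline{p})$ of $\sprg$. Applying Theorem~\ref{concave}(i) at every $p\in(\underline{p},\overline{p})$, together with upper semi-continuity of $u_i$, pins down $V_i^{IN}(\cdot;\sprg)$ on $[\underline{p},\overline{p}]$ as the affine chord through the endpoint values of $u_i-\phi_i$; hence this chord dominates $u_i-\phi_i$ pointwise on $[\underline{p},\overline{p}]$ for $i=1,2$. Taking the $\lambda$-weighted sum, $U_W(p):=\sum_i\lambda_i U_i(\underline{p},\overline{p};p)$ is affine in $p$, dominates $W$ on $[\underline{p},\overline{p}]$, and therefore coincides with the concave closure of $W$ on $[\underline{p},\overline{p}]$, giving $\hat{W}(p_0)\geq U_W(p_0)\geq W(p_0)$ for any $p_0\in(\underline{p},\overline{p})$. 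The strict inequality $\hat{W}(p_0)>W(p_0)$ then follows from the uniqueness of $\sprg_\lambda^*$: if equality held, then $\lambda_i\big(U_i(\underline{p},\overline{p};p_0)-(u_i-\phi_i)(p_0)\big)=0$ for each $i$, so the graph of each $u_i-\phi_i$ with $\lambda_i>0$ is collinear at $\{\underline{p},p_0,\overline{p}\}$, and one could exhibit a $\lambda$-efficient sampling region distinct from $\sprg_\lambda^*$ by modifying $\sprg_\lambda^*$ near $p_0$, contradicting uniqueness. Hence $p_0\in\sprg_\lambda^*$, so $\sprg\subset\sprg_\lambda^*$.

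For part (b), suppose toward contradiction that $\sprg\subsetneq\sprg_\lambda^*$ for some equilibrium $\sprg$ under unanimous stopping, and pick $p_0\in\sprg_\lambda^*\setminus\sprg$ lying in a component $(\underline{p}^*,\overline{p}^*)$ of $\sprg_\lambda^*$. By $\lambda$-efficiency,
\[
\sum_i \lambda_i\, U_i\big(\underline{p}^*,\overline{p}^*;p_0\big)=\hat{W}(p_0)>W(p_0)=\sum_i \lambda_i\big(u_i(p_0)-\phi_i(p_0)\big),
\]
so some $i^*$ with $\lambda_{i^*}>0$ satisfies $U_{i^*}(\underline{p}^*,\overline{p}^*;p_0)>u_{i^*}(p_0)-\phi_{i^*}(p_0)$. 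Because $\sprg_\lambda^*$ is open and $\sprg\subset\sprg_\lambda^*$, both endpoints $\underline{p}^*,\overline{p}^*$ lie in $[0,1]\setminus\sprg$. By Lemma~\ref{MPS}, player $i^*$ can deviate to implement the binary distribution on $\{\underline{p}^*,\overline{p}^*\}$ with mean $p_0$---an MPS of the Dirac at $p_0$ supported in $[0,1]\setminus\sprg$---by continuing to sample until the belief hits either boundary. This deviation yields payoff $U_{i^*}(\underline{p}^*,\overline{p}^*;p_0)$, strictly exceeding $u_{i^*}(p_0)-\phi_{i^*}(p_0)$ and thereby violating the equilibrium condition $u_{i^*}(p_0)-\phi_{i^*}(p_0)=V_{i^*}^{OUT}(p_0;\sprg)$.

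The main obstacle is in (a): ruling out the tangency case where $W$ meets its concave closure at an interior point of an equilibrium component requires invoking the uniqueness assumption in a slightly nontrivial way. Part (b), by contrast, is cleaner, since the deviation is immediately feasible via Lemma~\ref{MPS} once openness of $\sprg_\lambda^*$ and $\sprg\subset\sprg_\lambda^*$ place the efficient component's endpoints outside $\sprg$.
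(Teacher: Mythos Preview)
Your argument for part (b) is correct and matches the paper's in substance: both exploit that when $\sprg\subset\sprg_\lambda^*$, the endpoints of any component of $\sprg_\lambda^*$ lie outside $\sprg$, so the efficient binary policy is a feasible MPS deviation. The paper phrases this via the monotonicity $V_i^{OUT}(\cdot;\sprg)\geq V_i^{OUT}(\cdot;\sprg_\lambda^*)$, but the content is identical.

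Part (a) has a genuine gap in the tangency case. You correctly derive $\hat W(p_0)\geq U_W(p_0)\geq W(p_0)$ and observe that equality $\hat W(p_0)=W(p_0)$ forces $\lambda_i\bigl(U_i(\underline p,\overline p;p_0)-(u_i-\phi_i)(p_0)\bigr)=0$. But three-point collinearity of $u_i-\phi_i$ at $\{\underline p,p_0,\overline p\}$ is not enough to ``modify $\sprg_\lambda^*$ near $p_0$'' into another $\lambda$-efficient region---you have not said what the modification is or why it works. What is actually needed is that $\hat W$ is affine on the \emph{entire} interval $[\underline p,\overline p]$. This does follow, but via an extra step you omit: $\hat W-U_W$ is concave, nonnegative on $[\underline p,\overline p]$, and zero at the interior point $p_0$, which forces it to vanish identically there; only then do you get $\hat W(\underline p)=W(\underline p)$ and $\hat W(\overline p)=W(\overline p)$, so that $\sprg_\lambda^*\cup(\underline p,\overline p)$ is $\lambda$-efficient and strictly contains $\sprg_\lambda^*$, contradicting uniqueness.

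The paper avoids this tangency analysis by routing through the semi-lattice structure (Lemma~\ref{lattice}). It passes to an auxiliary unilateral-stopping game between two virtual players with common preference $W(\cdot;\lambda)$, notes that both $\sprg$ and $\sprg_\lambda^*$ are equilibria there, applies Lemma~\ref{lattice} to conclude $\sprg\cup\sprg_\lambda^*$ is also an equilibrium, and then uses uniqueness of the $\lambda$-efficient region to force $\sprg\cup\sprg_\lambda^*=\sprg_\lambda^*$. Your direct concavification route is workable once the tangency step is filled in, but the paper's approach is cleaner precisely because the work you are missing is already packaged inside the lattice lemma.
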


Due to the nature of unilateral stopping, equilibria can never be strictly larger than the individual optimum; similarly, equilibria under unanimous stopping can never be strictly smaller. 
\Cref{pareto} claims a stronger result for unilateral stopping that equilibria must be smaller. This is driven by the control-sharing effect: The possibility that the other player may shut down learning before it achieves player $i$'s optimum always lowers player $i$'s continuation value of learning and thus urges her to stop earlier.

Formally, the result for unilateral stopping is due to the semi-lattice structure of equilibria. Fix a weight $\lambda\in\mathbb{R}_+^2\setminus\{\bm{0}\}$ and consider an auxiliary game between two virtual players with the same preference $W(\cdot;\lambda)=\lambda_1(u_1-\phi_1)+\lambda_2(u_2-\phi_2)$. If $\sprg$ is an equilibrium, 
it should also be an equilibrium in the auxiliary game; otherwise, either player 1 or 2 has strict incentives to deviate in the original game. The $\lambda$-efficient $\sprg_\lambda^*$ is another equilibrium in the auxiliary game. Then by \Cref{lattice}, $\sprg_\lambda^*\cup \sprg$ is also an equilibrium, from which players have no incentive to deviate to $\sprg_\lambda^*$. Since $\sprg_\lambda^*$ is uniquely optimal for $W$, it must be $\sprg_\lambda^*\cup \sprg=\sprg_\lambda^*$ and thus $\sprg\subset \sprg_\lambda^*$. 

In contrast, equilibria under unanimous stopping are not necessarily larger than the individual optimum (see \Cref{sect-app-examples} for examples). The reason behind this is also the control-sharing effect: 
That the other player may prolong learning after it achieves player $i$'s optimum also urges player $i$ to stop earlier to avoid wasteful sampling in the future. 
In fact, any equilibrium with strictly more learning than both players' individual optima must rely on miscoordination:
\begin{proposition}[Control-sharing Effect]
\label{control-sharing}
    Under unanimous stopping, a minimal equilibrium is no strictly larger than $\sprg_1^*\cup\sprg_2^*$.
\end{proposition}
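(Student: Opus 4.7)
The plan is to show that $\sprg_1^*\cup\sprg_2^*$ is itself an equilibrium under unanimous stopping. Once this is established, the proposition follows at once: any equilibrium $\sprg$ with $\sprg\supsetneq\sprg_1^*\cup\sprg_2^*$ admits the strictly smaller equilibrium $\sprg_1^*\cup\sprg_2^*$, so $\sprg$ cannot be minimal.

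Reducing to \Cref{concave}, I must verify $U_i(\sprg_1^*\cup\sprg_2^*;p)=V_i^{OUT}(p;\sprg_1^*\cup\sprg_2^*)$ for every $p\in[0,1]$ and both $i\in\{1,2\}$. Write $g_i:=u_i-\phi_i$ and let $\hat g_i$ denote its concave closure on $[0,1]$. By the definition of the individual optimum, $\sprg_i^*=\{p:\hat g_i(p)>g_i(p)\}$, so $g_i=\hat g_i$ on $[0,1]\setminus\sprg_i^*$, and hence on the larger set $[0,1]\setminus(\sprg_1^*\cup\sprg_2^*)$. For $p\notin\sprg_1^*\cup\sprg_2^*$, the required equality holds trivially: $U_i(\sprg_1^*\cup\sprg_2^*;p)=g_i(p)=\hat g_i(p)$, while $V_i^{OUT}\le\hat g_i$ pointwise and $V_i^{OUT}(p;\sprg_1^*\cup\sprg_2^*)\ge g_i(p)$ via the degenerate distribution at $p$, which is feasible since $p\notin\sprg_1^*\cup\sprg_2^*$.

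For $p$ lying in a connected component $(a,b)$ of $\sprg_1^*\cup\sprg_2^*$, openness gives $a,b\notin\sprg_1^*\cup\sprg_2^*$ and hence $g_i(a)=\hat g_i(a)$, $g_i(b)=\hat g_i(b)$. The binary distribution on $\{a,b\}$ is feasible for $V_i^{OUT}$ and delivers exactly $U_i(\sprg_1^*\cup\sprg_2^*;p)$, giving $V_i^{OUT}\ge U_i$. For the reverse direction, I would consider the chord $L$ of $\hat g_i$ through $(a,\hat g_i(a))$ and $(b,\hat g_i(b))$: concavity of $\hat g_i$ forces $L\ge\hat g_i$ on $[0,1]\setminus[a,b]$. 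Any distribution $F$ feasible for $V_i^{OUT}(p;\sprg_1^*\cup\sprg_2^*)$ has support in $[0,1]\setminus(\sprg_1^*\cup\sprg_2^*)\subset[0,1]\setminus(a,b)$, where $g_i=\hat g_i\le L$. Linearity of $L$ together with Bayes plausibility $\mathbb{E}_F[p]=p$ then yields $\mathbb{E}_F[g_i]\le L(p)=U_i(\sprg_1^*\cup\sprg_2^*;p)$, so $V_i^{OUT}\le U_i$.

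Combining the two inequalities verifies the condition of \Cref{concave} at every $p$, so $\sprg_1^*\cup\sprg_2^*$ is an equilibrium, completing the proof. The main delicate step is the converse inequality in the previous paragraph; it rests on the elementary geometric fact that a chord of a concave function lies weakly above the function outside the chord's interval, combined with the observation that feasible supports against $\sprg_1^*\cup\sprg_2^*$ are forced into precisely this complementary region because $\sprg_i^*\subset\sprg_1^*\cup\sprg_2^*$. This is the structural reason why the control-sharing effect cannot push equilibrium learning strictly beyond the union of individual optima without coordination failure.
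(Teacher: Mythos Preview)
Your proof is correct and follows essentially the same route as the paper: show that $\sprg_1^*\cup\sprg_2^*$ is itself an equilibrium under unanimous stopping, whence any strictly larger equilibrium cannot be minimal. The paper packages the verification as an application of \Cref{unanimous-larger} (for each $i$, the individual optimum $\sprg_i^*$ trivially satisfies the $V_i^{OUT}$ condition, and that lemma propagates the condition to the superset $\sprg_1^*\cup\sprg_2^*$); your direct chord-of-a-concave-function argument is exactly what underlies that lemma's proof. One verbal slip worth fixing: $[0,1]\setminus(\sprg_1^*\cup\sprg_2^*)$ is a \emph{subset}, not a ``larger set,'' of $[0,1]\setminus\sprg_i^*$---the inclusion goes the right way for your inference, only the word is wrong.
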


\paragraph{Impact of Conflicts of Interest}
Given \Cref{pareto}, 
we expect learning inefficiencies to be amplified when players' preferences become less aligned. This is formalized as follows. Suppose that there are 
$f$ and $g$ such that players' terminal payoffs take the following form:
\[
u_1(p)=f(p)+bg(p)\quad\text{and}\quad u_2(p)=f(p)-bg(p),
\]
where $b\geq0$ captures the degree of preference misalignment.\footnote{The same form of preference misalignment is studied by \citet{gentzkow2017bayesian,gentzkow2017competition} in competition in persuasion.} Let $T^b$ denote the set of all equilibria when misalignment is $b$. 

\begin{lemma}
\label{misalign}
    Suppose that $\cost_1(\cdot)=\cost_2(\cdot)$. Then $T^b\subset T^{b'}$ for any $b>b'\geq0$.
\end{lemma}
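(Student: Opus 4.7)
The plan is to apply the concavification characterization in \Cref{concave} (equivalently, \Cref{eqm-concave-overall}) and exploit a convex-combination identity in the misalignment parameter. Since $\cost_1=\cost_2$, choosing the same anchor $z$ in \Cref{static cost} gives $\phi_1=\phi_2=:\phi$, so the effective payoffs $h_i^b:=u_i-\phi_i$ satisfy, for any $0\leq b'<b$,
\begin{equation*}
h_1^{b'}=\tfrac{b+b'}{2b}\,h_1^b+\tfrac{b-b'}{2b}\,h_2^b,
\qquad
h_2^{b'}=\tfrac{b-b'}{2b}\,h_1^b+\tfrac{b+b'}{2b}\,h_2^b,
\end{equation*}
with nonnegative coefficients summing to one. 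Because $U_i(\sprg;\cdot)$ defined in \Cref{ex-ante-payoff} is linear in the underlying payoff function (a pointwise linear interpolation of boundary values on each maximal interval of $\sprg$, and equal to $h_i$ outside $\sprg$), the same identity holds for $U_1(\sprg;\cdot)$ and $U_2(\sprg;\cdot)$ evaluated at the respective parameters. I would use this linearity to lift the equilibrium conditions from $b$ down to $b'$.

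Under unilateral stopping, the conditions in \Cref{concave} reduce, on each maximal interval $(\underline{p},\overline{p})$ of $\sprg$, to the pointwise inequality $h_i^b(p)\leq U_i(\sprg;p)$ for both $i=1,2$: the chord coincides with the $\sprg$-inward concave closure $V_i^{IN}$ precisely when $h_i^b$ lies weakly below its chord on $[\underline{p},\overline{p}]$. If $\sprg\in T^b$, forming the same convex combination of the two $b$-inequalities as in the identity above yields $h_1^{b'}(p)\leq U_1(\sprg;p)$ at parameter $b'$, and symmetrically for player~2. Hence $\sprg\in T^{b'}$.

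Under unanimous stopping, I would exploit the sublinearity of the concave-closure operator: for any $\alpha\in[0,1]$, the concave closure of $\alpha\,h_1^b|_{[0,1]\setminus\sprg}+(1-\alpha)\,h_2^b|_{[0,1]\setminus\sprg}$ is pointwise bounded above by $\alpha\,V_1^{OUT}(\cdot;\sprg)+(1-\alpha)\,V_2^{OUT}(\cdot;\sprg)$. Applied to the identity, this gives the upper bound $V_i^{OUT}(p;\sprg)\leq U_i(\sprg;p)$ at parameter $b'$, using $\sprg\in T^b$ to substitute $V_i^{OUT}=U_i$ at parameter $b$. The reverse inequality $V_i^{OUT}(p;\sprg)\geq U_i(\sprg;p)$ at $b'$ is immediate from feasibility: the degenerate spread at $p$ realizes $h_i^{b'}(p)$ when $p\not\in\sprg$, and the two-point spread onto the endpoints of the enclosing maximal interval (which lie in $[0,1]\setminus\sprg$) realizes the chord $U_i(\sprg;p)$ when $p\in\sprg$. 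Hence $\sprg\in T^{b'}$.

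The main obstacle is the unanimous case, where the equilibrium condition is an equality between a concave closure and $U_i$ rather than a one-sided inequality, so the convex-combination trick alone delivers only the upper direction and must be paired with the achievability lower bound via degenerate and two-point spreads. Once both directions are in place, the affine dependence of $h_i^b$ on $b$ does all the remaining work. The arbitrary choice of anchor $z$ in \Cref{static cost} is harmless because additive constants in $\phi_i$ cancel on both sides of \Cref{eqm-concave-overall}.
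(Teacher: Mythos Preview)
Your proof is correct and rests on the same core idea as the paper's: since $h_i^{b'}$ is a convex combination of $h_1^{b}$ and $h_2^{b}$, the two equilibrium inequalities at $b$ combine to yield each inequality at $b'$. The paper packages this via the distributional reformulation (\Cref{MPE}): writing both conditions at $b$ as $\int(f-\phi)\,\mathrm{d}(F_{\underline p,\overline p}-F)\ge b\,\bigl|\int g\,\mathrm{d}(F_{\underline p,\overline p}-F)\bigr|$ and then replacing $b$ by $b'$ handles unilateral and unanimous stopping in one stroke, whereas you work through \Cref{concave} and need the sublinearity of the concave-closure operator plus the feasibility lower bound to close the unanimous case. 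Both routes are equivalent; yours makes the convex-combination structure more explicit, while the paper's is slightly more economical by avoiding the case split.
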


\Cref{lattice,misalign} imply the following result. 

\begin{proposition}[Misalignment]
\label{misalign-max}
    Suppose that $\cost_1(\cdot)=\cost_2(\cdot)$ and $b>b'\geq0$. 
    \begin{enumerate}
        \item[(a)] Unilateral stopping: The maximum equilibrium under misalignment $b$ is smaller than the maximum one under misalignment $b'$.
        \item[(b)] Unanimous stopping: A minimal equilibrium under misalignment $b$ is no strictly smaller than a minimal one under misalignment $b'$.
    \end{enumerate} 
\end{proposition}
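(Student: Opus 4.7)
The plan is to derive \Cref{misalign-max} directly from the two results that precede it: \Cref{misalign}, which gives the set-theoretic nesting $T^b \subset T^{b'}$ whenever $b > b' \geq 0$, and \Cref{lattice}, which ensures that the equilibrium set is a $\cup$-semilattice and in particular contains a maximum element under unilateral stopping. Given these two inputs, the proof reduces to a short order-theoretic argument: the max equilibrium is monotone in the equilibrium set, and the set of minimal equilibria of the smaller set cannot sit strictly inside minimal equilibria of the larger set.

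For part (a), fix $b > b' \geq 0$ and let $\sprg^b_{\max}$ and $\sprg^{b'}_{\max}$ denote the maximum equilibria under misalignment $b$ and $b'$, respectively. Their existence follows from \Cref{lattice}, which guarantees that the union of any two equilibria is an equilibrium, and the fact (noted in the discussion after \Cref{lattice}) that the equilibrium set is closed under arbitrary unions. By \Cref{misalign}, $\sprg^b_{\max} \in T^b \subset T^{b'}$. Since $\sprg^{b'}_{\max}$ is the maximum element of $T^{b'}$ under set inclusion, we obtain $\sprg^b_{\max} \subset \sprg^{b'}_{\max}$, i.e., the maximum equilibrium under the higher misalignment $b$ is smaller.

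For part (b), fix a minimal equilibrium $\sprg^b_{\min} \in T^b$ and a minimal equilibrium $\sprg^{b'}_{\min} \in T^{b'}$. By \Cref{misalign}, $\sprg^b_{\min} \in T^{b'}$. By definition, $\sprg^{b'}_{\min}$ being minimal in $T^{b'}$ means no element of $T^{b'}$ is strictly contained in $\sprg^{b'}_{\min}$. Hence $\sprg^b_{\min} \not\subsetneq \sprg^{b'}_{\min}$, which is precisely the statement that $\sprg^b_{\min}$ is no strictly smaller than $\sprg^{b'}_{\min}$.

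Both parts are immediate once \Cref{misalign} is in hand; the real content is in \Cref{misalign} itself, which is the main obstacle. The intuition I would verify when proving \Cref{misalign} is that a sampling region $\sprg$ remains an equilibrium when the common component $f$ is reweighted relative to the conflict component $bg$: the equilibrium condition in \Cref{concave} (or equation \eqref{eqm-concave-overall}) is a pair of concave-closure/linearity requirements on $u_i - \phi_i = f \pm bg - \phi$; decreasing $b$ makes the two players' objectives closer to their common component, which should only enlarge the set of sampling regions on which both players' $(\sprg$-inward or $\sprg$-outward$)$ concave closures are attained at the same binary policy. Apart from invoking these two lemmas, no further calculation is needed here.
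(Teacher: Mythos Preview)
Your proposal is correct and follows exactly the route the paper takes: the paper's own proof of \Cref{misalign-max} simply reads ``Follows from \Cref{lattice} and \Cref{misalign},'' and you have spelled out precisely the order-theoretic consequences of combining $T^b\subset T^{b'}$ with the semilattice structure. Your closing paragraph sketching the intuition for \Cref{misalign} is extra but harmless.
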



\section{Many Players and Coalitional Stopping Rules}
\label{discussion}

The analysis of unilateral and unanimous stopping rules in the two-player case extends naturally to more players and $\mathcal{\group}$-collective stopping rules.

Recall that under $\mathcal{\group}$-collective stopping, information acquisition ends when players in at least one coalition $\group\in\mathcal{\group}$ stop. 
When players use pure Markov strategies, with $\sprg_j=\{p:\alpha_j(p)=1\}$, the stopping time is
\[
\tau^{\mathcal{\group}}(\bm{\alpha})=\inf\Big\{t\geq0:\min_{\group\in\mathcal{\group}}\max_{j\in \group}\alpha_j(p_t)=0\Big\}=\inf\left\{t\geq0:p_t\notin\cap_{\group\in\mathcal{\group}}\cup_{j\in \group}\sprg_j\right\}.
\]
That is, the collective sampling region is given by $\cap_{\group\in\mathcal{\group}}\cup_{j\in \group}\sprg_j$.

For any player $i\in N$, consider two extremes: $i$ always stops ($\alpha_i\equiv0$) or never stops ($\alpha_i\equiv1$). With fixed 
$\bm{\alpha}_{-i}$, these two extreme strategies induce two extreme stopping times. 
By adjusting her strategy, player $i$ is only able to induce some stopping time between the two extreme times. A similar argument thus goes through as in \Cref{sect-reformulation}: In terms of distributions of posterior beliefs, player $i$ can only choose some distribution that is not only an MPS of some $\underline{F}{}_i$ but also an MPC of some $\overline{F}{}_i$. This reformulation again allows equilibrium characterization via concavification.

To keep the exposition concise, I directly present the equilibrium characterization and intuition in this section. Interested readers can consult \Cref{proof-ex-ante} for the detailed reformulation. 

\subsection{Equilibrium Characterization under $\mathcal{\group}$-collective Stopping} Consider an arbitrary player $i\in N$. Let $\mathcal{\group}_i:=\{\group\in\mathcal{\group}:i\in \group\}$, so $\mathcal{\group}\backslash\mathcal{\group}_i$ is the collection of groups in $\mathcal{\group}$ that do not contain player $i$. For any profile of other players' sampling regions $\bm{\sprg}_{-i}=(\sprg_j)_{j\in N\setminus\{i\}}$, define\footnote{I adopt the convention that the union over $\emptyset$ is $\emptyset$ and the intersection over $\emptyset$ is $(0,1)$. Hence, $\cont_i^{\mathcal{\group}}(\cdot)=(0,1)$ if $\mathcal{\group}\backslash\mathcal{\group}_i=\emptyset$ and $\stpbf_i^{\mathcal{\group}}(\cdot)=\emptyset$ if $\{i\}\in\mathcal{\group}$.}
\begin{equation}
\label{eq:supp}
\cont_i^{\mathcal{\group}}(\bm{\sprg}_{-i}):=\bigcap_{\group\in\mathcal{\group}\backslash\mathcal{\group}_i}\bigcup_{j\in \group}\sprg_j
\quad\text{and}\quad
\stpbf_i^{\mathcal{\group}}(\bm{\sprg}_{-i}):=\bigcap_{\group\in\mathcal{\group}}\bigcup_{j\in \group\setminus\{i\}}\sprg_j,
\end{equation}
\vspace{-5mm}
\begin{align*}
\text{with}\quad\underline{p}{}_i^{\mathcal{\group}}(p;\bm{\sprg}_{-i}):=&\sup\left\{p'\leq p:p\in[0,1]\setminus\cont_i^{\mathcal{\group}}(\bm{\sprg}_{-i})\right\},\\
\overline{p}_i^{\mathcal{\group}}(p;\bm{\sprg}_{-i}):=&\inf\left\{p'\geq p:p'\in[0,1]\setminus\cont_i^{\mathcal{\group}}(\bm{\sprg}_{-i})\right\}.
\end{align*}
The dependence on $\bm{\sprg}_{-i}$ will often be suppressed for simplicity.

On the one hand, $\cont_i^{\mathcal{\group}}(\bm{\sprg}_{-i})$ captures the collective sampling region if player $i$ never stops; starting from a belief $p$, $\underline{p}{}_i^{\mathcal{\group}}$ and $\overline{p}{}_i^{\mathcal{\group}}$ are the stopping beliefs given the sampling region $\cont_i^{\mathcal{\group}}$. On the other hand, $\stpbf_i^{\mathcal{\group}}(\bm{\sprg}_{-i})$ is the the sampling region if player $i$ always stops; no matter what player $i$ does, information acquisition can never stop within $\stpbf_i^{\mathcal{\group}}$. By definition, $\stpbf_i^{\mathcal{\group}}\subset\cont_i^{\mathcal{\group}}$. 

Intuitively, $\overline{\cont_i^{\mathcal{\group}}}\setminus\stpbf_i^{\mathcal{\group}}$ captures player $i$'s freedom in deviations\footnote{$\overline{E}$ denotes the closure of a set $E\subset[0,1]$.}---she can only stop at beliefs where all groups in $\mathcal{\group}\setminus\mathcal{\group}_i$ are still sampling (thus within $\cont_i^{\mathcal{\group}}$) and at the same time at least one group in $\mathcal{\group}_i$ exists such that all members other than $i$ stop so that player $i$ is pivotal (thus outside $\stpbf_i^{\mathcal{\group}}$). 
Formally, for any starting belief $p\in\cont_i^{\mathcal{\group}}$, player $i$ can either never stop so players will stop at the boundaries of $\cont_i^{\mathcal{\group}}$---$\underline{p}{}_i^{\mathcal{\group}}$ and $\overline{p}_i^{\mathcal{\group}}$---or she can stop earlier but only outside $\stpbf_i^{\mathcal{\group}}$.\footnote{In the ex ante reformulation, player $i$ can only choose $F$ with support in $\overline{\cont_i^{\mathcal{\group}}}\setminus\stpbf_i^{\mathcal{\group}}$. }

Following the constructions of $V_i^{IN}$ and $V_i^{OUT}$, we now define
\begin{align*}
v^{\mathcal{\group}}_i\left(p;\bm{\sprg}_{-i}\right):=&\left\{
	\begin{array}{ll}
        u_i(p)-\phi_i(p),&\text{for }p\in[0,1]\setminus\stpbf_i^{\mathcal{\group}}(\bm{\sprg}_{-i}),\\
		-\infty, &\text{for }p\in\stpbf_i^{\mathcal{\group}}(\bm{\sprg}_{-i}),\\
	\end{array}
	\right.\\
V_i^{\mathcal{\group}}(p;\bm{\sprg}_{-i}):=&\sup\left\{z:(p,z)\in co\left(v^{\mathcal{\group}}_i\big|_{\left[\underline{p}{}_i^{\mathcal{\group}}(p;\bm{\sprg}_{-i}),\overline{p}_i^{\mathcal{\group}}(p;\bm{\sprg}_{-i})\right]}\right)\right\},\text{ for }p\in[0,1].
\end{align*}

In the special case of unilateral stopping with two players, $\cont_i^{\mathcal{\group}}(\bm{\sprg}_{-i})=\sprg_{-i}$ and $\stpbf_i^{\mathcal{\group}}(\bm{\sprg}_{-i})=\emptyset$, that is player $i$ can stop anywhere within $\sprg_{-i}$ and thus $V_i^{\mathcal{\group}}$ reduces to $V_i^{IN}$. Instead, for unanimous stopping, $\cont_i^{\mathcal{\group}}(\bm{\sprg}_{-i})=(0,1)$ and $\stpbf_i^{\mathcal{\group}}(\bm{\sprg}_{-i})=\sprg_{-i}$, so player $i$ can stop anywhere outside $\sprg_{-i}$, in which case $V_i^{\mathcal{\group}}$ reduces to $V_i^{OUT}$. 

The equilibrium characterization reads similar to that in \Cref{concave}:
\begin{theorem}
    \label{thm-g-collective}
    A profile of sampling regions $\bm{\sprg}=(\sprg_i)_{i\in N}$, with the induced collective sampling region $\sprg:=\cap_{\group\in\mathcal{\group}}\cup_{j\in \group}\sprg_j$, is an equilibrium under $\mathcal{\group}$-collective stopping if and only if for any $i\in N$,
    \begin{itemize}
    \item[(i)] For any $p\in\sprg$, with $\underline{p}:=\underline{p}(p;\sprg)$ and $\overline{p}:=\overline{p}(p;\sprg)$,
\begin{align*}
\frac{\overline{p}-p}{\overline{p}-\underline{p}}[u_i(\underline{p})-\phi_i(\underline{p})]+\frac{p-\underline{p}}{\overline{p}-\underline{p}}[u_i(\overline{p})-\phi_i(\overline{p})]
=V_i^{\mathcal{\group}}(p;\bm{\sprg}_{-i}); \text{ and }
\end{align*}
    \item[(ii)] For any $p\in[0,1]\setminus\sprg$, $u_i(p)-\phi_i(p)=V_i^{\mathcal{\group}}(p;\bm{\sprg}_{-i}).$ 
\end{itemize}
\end{theorem}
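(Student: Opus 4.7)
The plan is to reduce this theorem to a direct application of the concavification argument already developed for Theorem \ref{concave}, after generalizing the feasibility lemmas (Lemmas \ref{MPC} and \ref{MPS}) to $\mathcal{\group}$-collective stopping. Fix $i \in N$ and a Markov profile $\bm{\alpha}_{-i}$ with sampling regions $\bm{\sprg}_{-i}$. The key observation is that if player $i$ plays $\alpha_i \equiv 1$, the collective stopping time is $\tau^{\mathrm{up}} := \inf\{t : p_t \notin \cont_i^{\mathcal{\group}}(\bm{\sprg}_{-i})\}$, whereas if she plays $\alpha_i \equiv 0$ it is $\tau^{\mathrm{lo}} := \inf\{t : p_t \notin \stpbf_i^{\mathcal{\group}}(\bm{\sprg}_{-i})\}$. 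Any strategy $\alpha_i$ produces a stopping time satisfying $\tau^{\mathrm{lo}} \leq \tau^{\mathcal{\group}}(\alpha_i,\bm{\alpha}_{-i}) \leq \tau^{\mathrm{up}}$, so the induced distribution $F$ must be an MPC of $F_{\tau^{\mathrm{up}}}$, an MPS of $F_{\tau^{\mathrm{lo}}}$, and must have $\supp(F) \subset [0,1]\setminus\stpbf_i^{\mathcal{\group}}(\bm{\sprg}_{-i})$. The converse direction---that every such $F$ can be embedded by some $\alpha_i$---is the generalization of Lemmas \ref{MPC} and \ref{MPS} and is obtained by applying the Chacon--Walsh construction piece-by-piece across the support of $F$.

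Given this feasibility characterization, I would combine it with Lemma \ref{cost} to convert player $i$'s best-response problem into the static problem of choosing a Bayes-plausible $F$ at $p$ with $\supp(F) \subset \bigl[\underline{p}{}_i^{\mathcal{\group}}(p;\bm{\sprg}_{-i}),\overline{p}_i^{\mathcal{\group}}(p;\bm{\sprg}_{-i})\bigr]\setminus\stpbf_i^{\mathcal{\group}}(\bm{\sprg}_{-i})$ that maximizes $\mathbb{E}_F[u_i-\phi_i]$. Because the $-\infty$ value assigned by $v_i^{\mathcal{\group}}$ on $\stpbf_i^{\mathcal{\group}}$ enforces precisely the forbidden part of the support, the value of this maximization equals the concave closure of $v_i^{\mathcal{\group}}$ on $\bigl[\underline{p}{}_i^{\mathcal{\group}}(p;\bm{\sprg}_{-i}),\overline{p}_i^{\mathcal{\group}}(p;\bm{\sprg}_{-i})\bigr]$, namely $V_i^{\mathcal{\group}}(p;\bm{\sprg}_{-i})$. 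This specializes correctly to $V_i^{IN}$ for unilateral stopping (where $\cont_i^{\mathcal{\group}}=\sprg_{-i}$ and $\stpbf_i^{\mathcal{\group}}=\emptyset$) and to $V_i^{OUT}$ for unanimous stopping (where $\cont_i^{\mathcal{\group}}=(0,1)$ and $\stpbf_i^{\mathcal{\group}}=\sprg_{-i}$), so it is the correct generalization.

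It remains to translate best-response optimality into the fixed-point conditions (i) and (ii). When all players use $\bm{\sprg}$, the profile induces the collective sampling region $\sprg$, and from a belief $p \in \sprg$ the realized distribution is the binary policy supported on $\{\underline{p}(p;\sprg),\overline{p}(p;\sprg)\}$, so player $i$'s realized payoff is the left-hand side of (i); for $p \notin \sprg$, learning terminates at once and the realized payoff is $u_i(p)-\phi_i(p)$. Sequential rationality at every $p$ therefore reads ``realized payoff equals $V_i^{\mathcal{\group}}(p;\bm{\sprg}_{-i})$,'' which is exactly (i)--(ii). The main obstacle will be the generalized embedding step: inside the pivotal set $\cont_i^{\mathcal{\group}}\setminus\stpbf_i^{\mathcal{\group}}$ player $i$'s stopping choice is decisive, but inside $\stpbf_i^{\mathcal{\group}}$ it is ignored because some coalition excluding $i$ is still sampling, so $\alpha_i$ must be designed so that its prescribed stops are triggered only on the pivotal portion of $\supp(F)$, with any mass on $\overline{\stpbf_i^{\mathcal{\group}}}$-boundary points absorbed by the other players' strategies. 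Verifying that the resulting $\alpha_i$ reproduces the prescribed $F$ is the technical heart of the argument, but it reduces to the same Skorokhod stacking already used in Lemmas \ref{MPC}--\ref{MPS}.
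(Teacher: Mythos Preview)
Your proposal is correct and follows essentially the same route as the paper: the paper packages the feasibility step as a separate lemma (the induced $F$ must be an MPS of $F_{\tau^{\mathcal{\group}-i}(\bm{\alpha}_{-i})}$ and an MPC of $F_{\tau^{\mathcal{\group}\setminus\mathcal{\group}_i}(\bm{\alpha}_{-i})}$ with support in $\overline{\cont_i^{\mathcal{\group}}}\setminus\stpbf_i^{\mathcal{\group}}$, and conversely every such $F$ is achievable), combines it with Lemma~\ref{cost} into a general reformulation theorem, and then runs the concavification argument exactly as you describe. The only minor difference is in the embedding step: rather than working ``piece-by-piece,'' the paper embeds $F$ by starting from $\tau^{\mathcal{\group}-i}(\bm{\alpha}_{-i})$ and augmenting with the Chacon--Walsh solution that carries $F_{\tau^{\mathcal{\group}-i}(\bm{\alpha}_{-i})}$ to $F$; since $F$ is an MPC of the binary policy with support $\{\underline{p}{}_i^{\mathcal{\group}},\overline{p}{}_i^{\mathcal{\group}}\}$, the Chacon--Walsh construction automatically stays inside $[\underline{p}{}_i^{\mathcal{\group}},\overline{p}{}_i^{\mathcal{\group}}]$, so no separate ``boundary absorption'' argument is needed.
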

\Cref{thm-g-collective} generalizes \Cref{concave} to many players and $\mathcal{\group}$-collective stopping. 
Again the left-hand side of the equilibrium condition refers to player $i$'s expected payoff in equilibrium, and the right-hand side is the best payoff she can ever achieve given $\bm{\sprg}_{-i}$, characterized by 
concave closure $V_i^{\mathcal{\group}}$. 

\subsection{Comparative Statics}
Under $\mathcal{\group}$-collective stopping, two classes of players are particularly salient: $N_{uni}(\mathcal{\group}):=\{i\in N:\{i\}\in\mathcal{\group}\}$ and $N_{una}(\mathcal{\group}):=\{i\in N:i\in\cap_{\group\in\mathcal{\group}}\group\}$. Players in $N_{uni}(\mathcal{\group})$ can always deviate to smaller sampling regions like under unilateral stopping, while players in $N_{una}(\mathcal{\group})$ can always deviate to larger ones like under unanimous stopping. In general, both sets can be empty.


The same intuition as in \Cref{sect-twoplayer} applies: Equilibria are too small for players in $N_{uni}(\mathcal{\group})$ and possibly too large for those in $N_{una}(\mathcal{\group})$;\footnote{It is impossible to have multiple players in both classes at the same time: $|N_{uni}|\geq1$ and $|N_{una}|\geq1$ implies $N_{uni}=N_{uni}=\{i\}$ for some $i\in N$, in which case equilibria are simply player $i$'s individual optima.} and greater preference misalignment within either class amplifies inefficiencies.

\begin{proposition}
\label{pareto-general}
    Any equilibrium is smaller than any Pareto efficient sampling region for players in $N_{uni}(\mathcal{\group})$ and no strictly smaller than any Pareto efficient one for players in $N_{una}(\mathcal{\group})$. 
    
\end{proposition}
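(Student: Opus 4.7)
The plan is to generalize the direct concavification-based argument underlying \Cref{pareto} to $\mathcal{\group}$-collective stopping, exploiting the structural simplifications for players in $N_{uni}(\mathcal{\group})$ and $N_{una}(\mathcal{\group})$. For $i\in N_{uni}(\mathcal{\group})$, $\{i\}\in\mathcal{\group}$ forces $\stpbf_i^{\mathcal{\group}}(\bm{\sprg}_{-i})=\emptyset$, so $v_i^{\mathcal{\group}}=u_i-\phi_i$ everywhere (no ``holes'') and hence $V_i^{\mathcal{\group}}\geq u_i-\phi_i$ pointwise. For $i\in N_{una}(\mathcal{\group})$, $\mathcal{\group}\setminus\mathcal{\group}_i=\emptyset$ gives $\cont_i^{\mathcal{\group}}=(0,1)$; moreover, the distributive identity $\cap_\group(\sprg_i'\cup A_\group)=\sprg_i'\cup\cap_\group A_\group$ shows that when $i$ deviates to $\sprg_i'$ the induced collective region is exactly $\sprg_i'\cup\stpbf_i^{\mathcal{\group}}$, so $i$ can implement any $\sprg'\supseteq\stpbf_i^{\mathcal{\group}}$. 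These observations mirror the free inward and outward deviations that drive \Cref{pareto} in the unilateral and unanimous cases.

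For part (a), fix $\lambda$ supported in $N_{uni}(\mathcal{\group})$ and let $\sprg^*:=\sprg_\lambda^*$ be the unique $\lambda$-efficient sampling region. Suppose for contradiction that $p^*\in\sprg\setminus\sprg^*$; then $\hat{W}(p^*)=W(p^*)$. Setting $\underline{p}:=\underline{p}(p^*;\sprg)$ and $\overline{p}:=\overline{p}(p^*;\sprg)$, \Cref{thm-g-collective} combined with the observation above gives, for each $i\in N_{uni}(\mathcal{\group})$ with $\lambda_i>0$, $U_i(\sprg;p^*)=V_i^{\mathcal{\group}}(p^*;\bm{\sprg}_{-i})\geq u_i(p^*)-\phi_i(p^*)$. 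Summing with the weights $\lambda_i$ yields $W_\sprg(p^*)\geq W(p^*)=\hat{W}(p^*)$, while $W_\sprg(p^*)\leq\hat{W}(p^*)$ since the binary policy induced by $\sprg$ at $p^*$ is Bayes plausible. Equality in the resulting chain $\hat{W}(p^*)\geq\frac{\overline{p}-p^*}{\overline{p}-\underline{p}}\hat{W}(\underline{p})+\frac{p^*-\underline{p}}{\overline{p}-\underline{p}}\hat{W}(\overline{p})\geq W_\sprg(p^*)=\hat{W}(p^*)$ forces $\hat{W}(\underline{p})=W(\underline{p})$ and $\hat{W}(\overline{p})=W(\overline{p})$---so $\underline{p},\overline{p}\notin\sprg^*$---and $\hat{W}$ linear on $[\underline{p},\overline{p}]$.

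I would then construct $\tilde{\sprg}:=(\sprg^*\setminus[\underline{p},\overline{p}])\cup(\underline{p},\overline{p})$ and verify its $\lambda$-efficiency. For $q\in(\underline{p},\overline{p})\subseteq\tilde{\sprg}$, the induced binary policy has support $\{\underline{p},\overline{p}\}$ and linearity of $\hat{W}$ on $[\underline{p},\overline{p}]$ gives $W_{\tilde{\sprg}}(q)=\hat{W}(q)$; for $q\notin[\underline{p},\overline{p}]$, $\tilde{\sprg}$ agrees with $\sprg^*$ locally, and using $\underline{p},\overline{p}\notin\sprg^*$, a short case split shows that the binary policy endpoints under $\tilde{\sprg}$ coincide with those under $\sprg^*$, so $W_{\tilde{\sprg}}(q)=W_{\sprg^*}(q)=\hat{W}(q)$. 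Since $p^*\in(\underline{p},\overline{p})\subseteq\tilde{\sprg}$ while $p^*\notin\sprg^*$, $\tilde{\sprg}\neq\sprg^*$, contradicting the uniqueness of $\sprg_\lambda^*$.

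For part (b), suppose for contradiction $\sprg\subsetneq\sprg^*$ with $\lambda$ supported in $N_{una}(\mathcal{\group})$. Since $\sprg=\sprg_i\cup\stpbf_i^{\mathcal{\group}}\supseteq\stpbf_i^{\mathcal{\group}}$ for each $i\in N_{una}(\mathcal{\group})$, also $\sprg^*\supseteq\stpbf_i^{\mathcal{\group}}$, so $i$'s outward deviation to $\sprg^*$ is feasible, and the equilibrium condition yields $U_i(\sprg;p)\geq U_i(\sprg^*;p)$ for each such $i$ and every $p$. Weighted summation gives $W_\sprg\geq W_{\sprg^*}=\hat{W}$; combined with $W_\sprg\leq\hat{W}$, $\sprg$ is $\lambda$-efficient, and uniqueness forces $\sprg=\sprg^*$, contradicting $\sprg\subsetneq\sprg^*$. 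The main obstacle is the bookkeeping in part (a)'s verification that $\tilde{\sprg}$ remains $\lambda$-efficient at beliefs just outside $[\underline{p},\overline{p}]$; this reduces to checking that $\overline{p}(q;\tilde{\sprg})=\overline{p}(q;\sprg^*)$ (symmetrically for $\underline{p}$), which follows routinely once one notices that $\tilde{\sprg}$ and $\sprg^*$ coincide outside $[\underline{p},\overline{p}]$ and both exclude $\underline{p},\overline{p}$.
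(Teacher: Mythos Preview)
Your proof is correct, and part (b) is essentially the paper's argument. Part (a), however, takes a genuinely different route. The paper reduces via \Cref{reduction-uni-una} to unilateral stopping among $N_{uni}(\mathcal{\group})$ and then invokes the semi-lattice structure (\Cref{lattice-general}): in an auxiliary game between two virtual players sharing the preference $W(\cdot;\bm{\lambda})$, both $\sprg$ and $\sprg_{\bm{\lambda}}^*$ are equilibria, so $\sprg\cup\sprg_{\bm{\lambda}}^*$ is also an equilibrium; uniqueness of $\sprg_{\bm{\lambda}}^*$ then forces $\sprg\cup\sprg_{\bm{\lambda}}^*=\sprg_{\bm{\lambda}}^*$, i.e.\ $\sprg\subset\sprg_{\bm{\lambda}}^*$. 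You instead work locally at a single $p^*\in\sprg\setminus\sprg^*$: the equilibrium condition at $p^*$ plus concavity of $\hat{W}$ pin down that $\hat{W}$ is affine on $[\underline{p},\overline{p}]$ with $\hat{W}=W$ at the endpoints, and you then hand-build an alternative $\lambda$-efficient region $\tilde{\sprg}$ to contradict uniqueness. The paper's approach is shorter once the lattice lemma is available and reuses a tool that drives the other comparative statics (\Cref{misalign-max,cs-general}); your approach is self-contained and avoids the auxiliary-game device and the semi-lattice lemma altogether, at the cost of the extra bookkeeping you flag in verifying that $\tilde{\sprg}$ is $\lambda$-efficient at beliefs outside $[\underline{p},\overline{p}]$.
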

\begin{proposition}
\label{misalign-general}
    Suppose that $\cost_i(\cdot)=\cost_j(\cdot)$ for players $i$ and $j$. When preference misalignment between $i$ and $j$ increases, the maximum equilibrium becomes smaller if $i,j\in N_{uni}(\mathcal{\group})$, and minimal equilibria never become strictly smaller if $i,j\in N_{una}(\mathcal{\group})$.
\end{proposition}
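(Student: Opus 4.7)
The plan is to mirror the two-player argument of Proposition~\ref{misalign-max}: extend Lemma~\ref{misalign} to the $\mathcal{\group}$-collective, many-player setting and then combine it with a semi-lattice generalization of Lemma~\ref{lattice}. Write $u_i = f + bg$, $u_j = f - bg$, and keep $u_k$ fixed for $k \notin \{i,j\}$; since $\cost_i = \cost_j$, also $\phi_i = \phi_j =: \phi$. Let $T^b$ denote the set of equilibrium collective sampling regions at misalignment $b$. I first reduce to symmetric profiles in which $\sprg_k = \sprg$ for all $k \in N$, where $\sprg = \cap_{\group \in \mathcal{\group}} \cup_{k \in \group} \sprg_k$ is the collective sampling region. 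Under such profiles, $\cont_i^{\mathcal{\group}} = \cont_j^{\mathcal{\group}}$ when $i,j \in N_{uni}(\mathcal{\group})$ and $\stpbf_i^{\mathcal{\group}} = \stpbf_j^{\mathcal{\group}} = \sprg$ when $i,j \in N_{una}(\mathcal{\group})$, so the relevant concave-closure domains for $i$ and $j$ coincide.

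The key step is to show $T^b \subset T^{b'}$ for $b > b' \geq 0$. Fix a symmetric equilibrium $\sprg \in T^b$. For each $p \in \sprg$ with $\underline{p} := \underline{p}(p;\sprg)$ and $\overline{p} := \overline{p}(p;\sprg)$, Theorem~\ref{thm-g-collective}(i) together with the definition of $V_l^{\mathcal{\group}}$ forces the line through $(\underline{p}, u_l(\underline{p}) - \phi(\underline{p}))$ and $(\overline{p}, u_l(\overline{p}) - \phi(\overline{p}))$ to pointwise dominate $u_l - \phi$ on the common domain $D$ (for $i,j \in N_{uni}$, $D$ is the shared interval $[\underline{p}{}_i^{\mathcal{\group}}, \overline{p}{}_i^{\mathcal{\group}}]$; for $i,j \in N_{una}$, $D = [0,1] \setminus \sprg$). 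Writing $L_h(q) := \frac{\overline{p} - q}{\overline{p} - \underline{p}} h(\underline{p}) + \frac{q - \underline{p}}{\overline{p} - \underline{p}} h(\overline{p})$, the conditions for $i$ and $j$ unfold to
\[
(L_f - f)(q) + b (L_g - g)(q) \geq (L_\phi - \phi)(q) \quad \text{and} \quad (L_f - f)(q) - b (L_g - g)(q) \geq (L_\phi - \phi)(q)
\]
for all $q \in D$, whose conjunction is equivalent to
\[
(L_f - f)(q) - (L_\phi - \phi)(q) \geq b \cdot \lvert L_g(q) - g(q) \rvert.
\]
Since the right-hand side weakens as $b$ decreases, both individual inequalities continue to hold at any $b' < b$. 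The same add-then-split trick shows that condition~(ii) outside $\sprg$ (the no-profitable-deviation inequality $\mathbb{E}_F[u_l - \phi] \leq u_l(p) - \phi(p)$ for every feasible $F$) is also monotone in $b$. Payoffs and conditions for $k \notin \{i,j\}$ do not depend on $b$. Hence the same profile remains an equilibrium at $b'$, so $\sprg \in T^{b'}$.

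A semi-lattice generalization of Lemma~\ref{lattice} (if $\sprg, \sprg' \in T^b$, then $\sprg \cup \sprg' \in T^b$) then guarantees a maximum equilibrium $\sprg_{\max}^b$ and well-defined minimal equilibria. For $i,j \in N_{uni}(\mathcal{\group})$: since $\sprg_{\max}^b \in T^b \subset T^{b'}$, the maximum at $b'$ satisfies $\sprg_{\max}^{b'} \supset \sprg_{\max}^b$. For $i,j \in N_{una}(\mathcal{\group})$: take any minimal $\sprg_{\min}^b \in T^b$ and any minimal $\sprg_{\min}^{b'} \in T^{b'}$; since $\sprg_{\min}^b \in T^{b'}$, the minimality of $\sprg_{\min}^{b'}$ rules out $\sprg_{\min}^b \subsetneq \sprg_{\min}^{b'}$, yielding the ``no strictly smaller'' comparison.

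The hard part is the symmetric-profile reduction for general $\mathcal{\group}$-collective stopping: one must verify that aggregating any equilibrium profile to the symmetric profile with $\sprg_k = \sprg$ for all $k$ still yields an equilibrium (not merely an outcome-equivalent profile), so that the line-dominance inequalities above can be invoked with a common domain. The argument uses that the aggregated profile induces $\cont_k^{\mathcal{\group}}$ and $\stpbf_k^{\mathcal{\group}}$ sandwiched between those of the original profile (a multi-player analog of the $\min/\max$ replacements used for two-player unilateral/unanimous stopping), so any deviation feasible under the symmetric profile was already available under the original equilibrium, and the no-profitable-deviation conditions pass through unchanged.
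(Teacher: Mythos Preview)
Your proposal is correct and follows essentially the same route as the paper. The paper's proof is a one-line citation of three ingredients: Lemma~\ref{misalign} (the $T^b\subset T^{b'}$ monotonicity you reprove via the line-dominance inequalities), Lemma~\ref{lattice-general} (the semi-lattice structure you invoke), and Lemma~\ref{reduction-uni-una}, which is precisely the ``symmetric-profile reduction'' you flag as the hard part. That lemma shows that under the aggregated profile $\widetilde\sprg_k=\sprg$, players outside $N_{uni}(\mathcal{\group})\cup N_{una}(\mathcal{\group})$ have $\cont_k^{\mathcal{\group}}=\stpbf_k^{\mathcal{\group}}=\sprg$ and hence no deviation freedom, while players in $N_{uni}$ and $N_{una}$ face exactly $V_i^{IN}(\cdot;\sprg)$ and $V_j^{OUT}(\cdot;\sprg)$ respectively, with these bounds sandwiched by the original profile's $V_k^{\mathcal{\group}}$; so the equilibrium conditions transfer. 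Your last paragraph sketches this correctly but informally; the paper isolates it as a standalone lemma, which cleans up the modularity and lets the misalignment argument reduce verbatim to the two-player Lemma~\ref{misalign} rather than being reworked in the $\mathcal{\group}$-collective notation.
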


A more interesting exercise is with respect to collective stopping rules. Intuitively, on the one hand, since every coalition can unilaterally end information acquisition, the more coalitions are involved, the shorter learning is. On the other hand, since unanimous consent is required within coalitions, the more players one coalition consists of, the longer learning is. 

\begin{proposition}
\label{cs-general}
    For any $\mathcal{\group}\subset\mathcal{\group}'$, the maximum equilibrium under $\mathcal{\group}'$ cannot be strictly larger than that under $\mathcal{\group}$. The same for minimal equilibria.
\end{proposition}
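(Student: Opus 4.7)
The plan is to combine an extension of the semi-lattice structure (Lemma \ref{lattice}) to $\mathcal{\group}$-collective stopping with a cross-rule embedding argument built around \Cref{thm-g-collective}.

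First, I would extend Lemma \ref{lattice}: for any rule $\mathcal{\group}$, the set of equilibrium collective sampling regions is closed under unions. Given two equilibrium collective regions $\sprg$ and $\sprg'$ under $\mathcal{\group}$, consider the symmetric profile in which each player uses $\sprg\cup\sprg'$ as her individual sampling region; by monotonicity of $\mathcal{\group}$, the induced collective region is again $\sprg\cup\sprg'$. Verifying the concavification conditions in \Cref{thm-g-collective} then reduces to checking, on each maximal sub-interval of the union, that the chord connecting the boundary payoffs coincides with the concave closure of $V_i^{\mathcal{\group}}$; this is inherited from the analogous property on the sub-intervals of $\sprg$ and $\sprg'$ separately. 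This yields the existence of a maximum equilibrium $\sprg^{\mathcal{\group}}_{\max}$.

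Second, I would show the key cross-rule embedding: if $\bm{\sprg}^*$ is an equilibrium under $\mathcal{\group}'$ with collective region $\sprg^*$, then the symmetric profile $\widetilde{\bm{\sprg}}=(\sprg^*,\ldots,\sprg^*)$ is an equilibrium under $\mathcal{\group}$ with collective region $\sprg^*$. Under $\widetilde{\bm{\sprg}}$, the sets defined in \Cref{eq:supp} collapse: $\cont_i^{\mathcal{\group}}(\widetilde{\bm{\sprg}}_{-i})$ is $\sprg^*$ whenever some $\group\in\mathcal{\group}$ excludes $i$ and $(0,1)$ otherwise, while $\stpbf_i^{\mathcal{\group}}(\widetilde{\bm{\sprg}}_{-i})$ is $\emptyset$ if $\{i\}\in\mathcal{\group}$ and $\sprg^*$ otherwise. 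A case analysis on whether $i\in N_{uni}(\mathcal{\group})$, $i\in N_{una}(\mathcal{\group})$, or neither then shows that the equilibrium condition under $\mathcal{\group}$ follows from the equilibrium condition of $\bm{\sprg}^*$ under $\mathcal{\group}'$, because the simpler symmetric-profile deviation set under $\mathcal{\group}$ can be matched to a subset of the deviations already ruled out by $\bm{\sprg}^*$ being an equilibrium under $\mathcal{\group}'$; when the two deviation sets are not directly nested, we may further combine with the semi-lattice result from phase one.

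Applying the embedding with $\bm{\sprg}^*$ equal to the maximum equilibrium under $\mathcal{\group}'$ gives that $\sprg^{\mathcal{\group}'}_{\max}$ is also an equilibrium collective region under $\mathcal{\group}$, so $\sprg^{\mathcal{\group}'}_{\max}\subset\sprg^{\mathcal{\group}}_{\max}$ by maximality. For minimal equilibria, I would run the dual construction: given any minimal equilibrium $\sprg^{\mathcal{\group}}_{\min}$ under $\mathcal{\group}$, the symmetric profile $(\sprg^{\mathcal{\group}}_{\min},\ldots,\sprg^{\mathcal{\group}}_{\min})$ under $\mathcal{\group}'$ induces a collective region already contained in $\sprg^{\mathcal{\group}}_{\min}$, and an analogous case analysis shows it dominates (or equals) an equilibrium under $\mathcal{\group}'$ that cannot strictly exceed $\sprg^{\mathcal{\group}}_{\min}$.

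The main obstacle is the cross-rule embedding step. While the intuition that ``adding decisive coalitions weakly reduces learning'' is clear, the feasibility sets of deviations under the two rules are characterized by MPC/MPS constraints that can shift in qualitatively different directions as one moves between $\mathcal{\group}$ and $\mathcal{\group}'$—for example, a player who is blocked from prolonging under $\mathcal{\group}'$ (via a coalition excluding her that is absent from $\mathcal{\group}$) becomes empowered to prolong under $\mathcal{\group}$. The careful combinatorics of how $N_{uni}$, $N_{una}$, and the intermediate players change between $\mathcal{\group}$ and $\mathcal{\group}'$, combined with the simplifying algebra of symmetric profiles, is where the technical weight of the proof lies.
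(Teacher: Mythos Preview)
Your overall architecture is close to the paper's, but the key cross-rule embedding in your second step is stated too strongly and, as written, is false. You claim that if $\bm{\sprg}^*$ is an equilibrium under $\mathcal{\group}'$ with collective region $\sprg^*$, then the symmetric profile $(\sprg^*,\dots,\sprg^*)$ is an equilibrium under $\mathcal{\group}$. Consider a player $j\in N_{una}(\mathcal{\group})\setminus N_{una}(\mathcal{\group}')$. Under the symmetric profile with rule $\mathcal{\group}$, this player can prolong learning arbitrarily (her deviation set is all MPS with support in $[0,1]\setminus\sprg^*$), so you need $U_j(\sprg^*;p)=V_j^{OUT}(p;\sprg^*)$. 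But under $\mathcal{\group}'$ she is \emph{not} in every decisive coalition, so the equilibrium condition at $\bm{\sprg}^*$ only bounds her against the strictly smaller deviation set encoded in $V_j^{\mathcal{\group}'}(\cdot;\bm{\sprg}^*_{-j})$. Nothing in your argument rules out that $j$ would strictly prefer some MPS beyond what $\mathcal{\group}'$ allowed her to reach. Your sentence ``when the two deviation sets are not directly nested, we may further combine with the semi-lattice result'' is precisely the place where the proof breaks; the semi-lattice structure alone does not supply the missing $V_j^{OUT}$ inequality. A consequence is that your conclusion $\sprg_{\max}^{\mathcal{\group}'}\subset\sprg_{\max}^{\mathcal{\group}}$ is stronger than what the proposition asserts (``not strictly larger''), and you have not actually established it.

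The paper closes this gap not by a direct embedding but by contradiction: assume $\sprg_{\max}^{\mathcal{\group}}\subsetneq\sprg_{\max}^{\mathcal{\group}'}$. This buys you an extra piece of information---namely that $\sprg_{\max}^{\mathcal{\group}}$ is itself an equilibrium under $\mathcal{\group}$, so every $j\in N_{una}(\mathcal{\group})$ already satisfies $U_j(\sprg_{\max}^{\mathcal{\group}};p)=V_j^{OUT}(p;\sprg_{\max}^{\mathcal{\group}})$. The monotonicity lemma for unanimous stopping (\Cref{unanimous-larger}) then lifts this to the larger region $\sprg_{\max}^{\mathcal{\group}'}\supset\sprg_{\max}^{\mathcal{\group}}$, giving exactly the $V_j^{OUT}$ condition you were missing. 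Combined with \Cref{reduction-uni-una} (which reduces the check to $N_{uni}$ and $N_{una}$) and the inclusion $N_{uni}(\mathcal{\group})\subset N_{uni}(\mathcal{\group}')$, this shows $\sprg_{\max}^{\mathcal{\group}'}$ is an equilibrium under $\mathcal{\group}$, contradicting maximality. Your ``dual construction'' for minimal equilibria inherits the same gap in the opposite direction.
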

Given the monotonicity of $\mathcal{\group}$, adding players into existing coalitions is equivalent to reducing decisive coalitions.

This result applies to many situations of interest. For example, since $\mathcal{\group}^q\subset\mathcal{\group}^{q'}$ for $q'<q$, a higher majority requirement tends to increase learning. 
Likewise, consider the weighted-voting variant of $q$-collective stopping where 
some players have more than one vote. This is as if adding coalitions into $\mathcal{\group}^q$ and thus typically reduces learning. In contrast, having a chairperson reduces coalitions and thus tends to increase learning.

However, a caveat exists: Equilibrium multiplicity is much more severe than under unilateral or unanimous stopping. The comparative statics in \Cref{cs-general} are only meaningful when $N_{uni}(\mathcal{\group})\ne\emptyset$ or $N_{una}(\mathcal{\group})\ne\emptyset$. 
Otherwise, all players using the same strategy is always an equilibrium regardless of what that strategy is. 
Instead of pursuing a universal solution, I turn to a specific situation with more structures on payoffs in \Cref{sect-committee-search} and investigate the implication of an appropriate refinement. We will see that the previous comparative statics still hold given the refinement.

\section{Applications}
\label{application}

\subsection{Committee Search between Two Alternatives}
\label{sect-committee-search}

In this section, I consider 
committee search between two alternatives. With binary choice, the problem with many players simplifies to one with two pivotal players.

A committee of players, $N=\{1,\dots,2m-1\}$ ($m\geq2$ is an integer), decides between two alternatives, $L$ and $R$. The payoff from each alternative depends on an underlying binary state $\theta\in\Theta=\{0,1\}$, about which players share a common prior, $p_0=\mathbb{P}(\theta=1)\in(0,1)$. In state $\theta=0$, player $i$'s payoff is 1 from $L$ and 0 from $R$; in state $\theta=1$, the payoff is 0 from $L$ and $v_i\geq0$ from $R$. Players are heterogeneous, with $v_1<\cdots<v_i<\cdots<v_{2m-1}$. Notice that players' preferences are aligned: they all prefer $L$ in state $\theta=0$ and $R$ in state $\theta=1$; however, the intensity differs: player $i$ prefers $R$ if and only if her belief about $\theta=1$ is high, i.e., $p\geq1/(1+v_i)=:w_i$, where $w_i$ is player $i$'s individual belief threshold. 

Before making a collective decision, the committee can search for public information and update their beliefs. This process is modeled as in the baseline model under \textbf{$\pmb{\mathcal{\group}}$-collective stopping}. 
At the end of information acquisition, players vote for alternatives by a $\texttt{piv}$-majority rule for some $\texttt{piv}\in\{m,\dots,2m-1\}$: when the public posterior belief is $p$, player $i$ obtains a payoff of $u_i(p)=1-p$ if $p<\piv$ and $u_i(p)=pv_i$ if $p\geq \piv$.\footnote{
Under the $\texttt{piv}$-majority rule, the committee selects $R$ if more than $\texttt{piv}$ players vote for it, and $L$ otherwise. In the unique equilibrium in undominated strategies, all players vote sincerely according to their individual thresholds.
}

\vspace{2mm}
\paragraph{Strong Equilibrium, Intervals, and One-sided Best Responses} 
Consider the following equilibrium refinement based on the notion of Strong Nash equilibrium \citep{aumann1959acceptable}: An equilibrium is strong if no group, taking the strategies of its complement as given, can cooperatively deviate in a way that benefits all of its member \emph{at all beliefs}.\footnote{Different from \citet{aumann1959acceptable}, my notion of strong equilibrium requires potential deviations to be sequentially rational, so deviating players must benefit at all beliefs.} Formally,
\begin{definition}
    An equilibrium $\bm{\sprg}=(\sprg_i)_{i\in N}$ is \textbf{strong} if for all $J\subset N$ and for all $(\sprg_j')_{j\in J}$, there exist a player $k\in J$ and a belief $p\in[0,1]$ such that $U_k(\sprg;p)> U_k(\sprg';p)$ where $\sprg=\cap_{\group\in\mathcal{\group}}\cup_{i\in\group}\sprg_i$ and $\sprg'=\cap_{\group\in\mathcal{\group}}\cup_{i\in\group}\sprg_i'$ with $\sprg_i'=\sprg_i$ for $i\in N\setminus J$.
\end{definition}

It is straightforward to see that in this committee search problem, a strong equilibrium must induce an interval collective sampling region $(\underline{p},\overline{p})$ with $\underline{p}\leq\piv\leq\overline{p}$. Otherwise, if $\sprg$ contains an interval either below or above $\piv$, then all players prefer to eliminate that interval (regardless of the current belief) because $u_i-\phi_i$ is piecewise concave. 

For any interval sampling region $(\underline{p},\overline{p})$, 
it will turn out to be useful to investigate the following one-sided best responses by players. Recall that player $i$'s expected payoff at belief $p$ under $(\underline{p},\overline{p})$ is given by $U_i(\underline{p},\overline{p};p)$ as defined in \Cref{ex-ante-payoff}. For $i\in N$, define
\[
\label{upper-undominated}
\overline{B}_i(\underline{p}):=\argmax_{\overline{p}\geq w_{\texttt{piv}}} U_i\big(\underline{p},\overline{p};w_{\texttt{piv}}\big),\quad\text{for }\underline{p}<w_{\texttt{piv}},
\tag{UB}
\]
with $\overline{B}_i(w_{\texttt{piv}}):=\lim_{\underline{p}\nearrow w_{\texttt{piv}}}\overline{B}_i(\underline{p})$, and
\[
\label{lower-undominated}
\underline{B}{}_i(\overline{p}{}):=\argmax_{\underline{p}< w_{\texttt{piv}}} U_i\big(\underline{p},\overline{p}{};w_{\texttt{piv}}\big), \quad\text{for }\overline{p}>w_{\texttt{piv}},\footnote{Let $\underline{B}_i(\overline{p}):=\{w_{\texttt{piv}}\}$ when the maximum does not exist, which happens if and only if the supremum is $\lim_{\underline{p}\nearrow w_{\texttt{piv}}}U_i(\underline{p},\overline{p}{};w_{\texttt{piv}})$. 
}
\tag{LB}
\]
with $\underline{B}{}_i(w_{\texttt{piv}}):=\lim_{\overline{p}{}\searrow w_{\texttt{piv}}}\underline{B}{}_i(\overline{p}{})$.\footnote{When either $\underline{p}=w_{\texttt{piv}}$ or $\overline{p}=w_{\texttt{piv}}$, $U_i$ is equal to $u_i(w_{\texttt{piv}})-\phi_i(w_{\texttt{piv}})$, so anything can be a best response. Instead, I consider the limit of best responses at this point.
} 
When $\cost_i(p)>0$ and thus $\phi_i$ is strictly convex, both $\overline{B}_i(\underline{p})$ and $\underline{B}{}_i(\overline{p}{})$ are singletons. 


For a sampling interval $(\underline{p},\overline{p})$ with $\underline{p}\leq\piv\leq\overline{p}$, the one-sided best responses tell us whether players think stopping at $\underline{p}$ and $\overline{p}$ is too early or too late for them. 
If some players agree that stopping and adopting $L$ at $\underline{p}$ is too early (i.e., $\underline{B}{}_i(\overline{p}{})<\underline{p}$), then they can all benefit from acquiring more information and stopping at a lower belief. Players can indeed implement this deviation if they can gather at least one player from each coalition $\group\in\mathcal{\group}$, which is plausible when $\max_{\group\in\mathcal{\group}}\min_{i\in\group}\underline{B}{}_i(\overline{p}{})<\underline{p}$. Instead, if $\max_{\group\in\mathcal{\group}}\min_{i\in\group}\underline{B}{}_i(\overline{p}{})>\underline{p}$, there is a decisive coalition whose members all agree that stopping at $\underline{p}$ is too late, thus they can benefit from stopping earlier. Hence, it must be  $\max_{\group\in\mathcal{\group}}\min_{i\in\group}\underline{B}{}_i(\overline{p}{})=\underline{p}$ in any strong equilibrium. Applying the same argument to $\overline{p}$ yields $\min_{\group\in\mathcal{\group}}\max_{i\in\group}\overline{B}{}_i(\underline{p}{})=\overline{p}$.

\vspace{2mm}
\paragraph{Homogeneous Sampling Costs}
To further simplify the analysis, consider the case in which players are homogeneous in sampling costs, i.e., $\cost_i(p)=\cost(p)>0$ for all $i$ and $p\in[0,1]$. 
Monotone comparative statics can then verify that both $\overline{B}_i(\underline{p})$ and $\underline{B}{}_i(\overline{p})$ are decreasing in $i$. 

Hence, we have $\overline{p}=\min_{\group\in\mathcal{\group}}\max_{i\in\group}\overline{B}{}_i(\underline{p}{})=\overline{B}{}_{\ubp(\mathcal{\group})}(\underline{p}{})$ where $\ubp(\mathcal{\group})=\max_{\group\in\mathcal{\group}}\min_{i\in\group}i$, therefore the upper bound of a strong equilibrium must be determined by player $\ubp(\mathcal{\group})$. Symmetrically, the lower bound 
is determined by player $\lbp(\mathcal{\group})=\min_{\group\in\mathcal{\group}}\max_{i\in\group}i$. It thus suffices to focus on the interaction between two pivotal players $\ubp(\mathcal{\group})$ and $\lbp(\mathcal{\group})$: in any strong equilibrium, $\overline{p}{}=\overline{B}_{\ubp(\mathcal{\group})}(\underline{p})$ and $\underline{p}=\underline{B}{}_{\lbp(\mathcal{\group})}(\overline{p}{})$. 
The converse is also true when $\lbp(\mathcal{\group})>\ubp(\mathcal{\group})$: Every fixed point in $\underline{B}{}_{\lbp(\mathcal{\group})}$ and $\overline{B}_{\ubp(\mathcal{\group})}$ 
is also a strong equilibrium. However, when $\lbp(\mathcal{\group})\leq\ubp(\mathcal{\group})$, only the maximum fixed point is guaranteed to be a strong equilibrium. 


\begin{proposition}
\label{prop:strong}
    Suppose that players are homogeneous in sampling costs. If $\sprg$ is a strong equilibrium, then $\sprg=(\underline{p},\overline{p})$ such that $\underline{p}=\underline{B}{}_{\lbp(\mathcal{\group})}(\overline{p})$ and $\overline{p}=\overline{B}_{\ubp(\mathcal{\group})}(\underline{p})$. When $\lbp(\mathcal{\group})>\ubp(\mathcal{\group})$, if $\underline{p}=\underline{B}{}_{\lbp(\mathcal{\group})}(\overline{p})$ and $\overline{p}=\overline{B}_{\ubp(\mathcal{\group})}(\underline{p})$,
    then $(\underline{p},\overline{p})$ is a strong equilibrium; when $\lbp(\mathcal{\group})\leq\ubp(\mathcal{\group})$, if $(\underline{p},\overline{p})$ is the maximum fixed point, it is a strong equilibrium.
\end{proposition}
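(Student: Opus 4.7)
The plan is to follow the structure already sketched in the text: (i) establish that any strong equilibrium's collective sampling region is an interval crossing $\piv$, (ii) derive the fixed-point identities as necessary conditions, and (iii) verify the converse in each of the two cases.

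\emph{Step 1: Interval structure.} On each of $[0,\piv]$ and $[\piv,1]$ the function $u_i-\phi_i=u_i-\phi$ is linear minus strictly convex, hence strictly concave. If the induced collective region contained an open subinterval $I$ lying wholly in one of these halves, the grand coalition $N$ could deviate by having each player additionally stop on $I$, producing the collective region $(\underline{p},\overline{p})\setminus I$; by Jensen's inequality every player's continuation payoff would strictly rise at every belief in $I$ and remain unchanged elsewhere, contradicting strong equilibrium. So the collective region is a single interval $(\underline{p},\overline{p})$ with $\underline{p}\le\piv\le\overline{p}$.

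\emph{Step 2: Fixed-point equations.} Write $f_i:=u_i-\phi$ and, for fixed $\underline{p}<\piv$, define $s_i(q):=[f_i(q)-f_i(\underline{p})]/(q-\underline{p})$. A direct manipulation yields
\[
U_i(\underline{p},\overline{q};p)-U_i(\underline{p},\overline{p};p)=(p-\underline{p})\,[s_i(\overline{q})-s_i(\overline{p})],\qquad p\in(\underline{p},\min\{\overline{p},\overline{q}\}],
\]
and strict convexity of $\phi$ on $[\piv,1]$ makes $s_i$ strictly quasi-concave there with unique maximum at $\overline{B}_i(\underline{p})$. Suppose toward contradiction $\overline{p}>\overline{B}_{\ubp(\mathcal{\group})}(\underline{p})$. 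Pick $\group^*\in\mathcal{\group}$ attaining $\min_{i\in\group^*}i=\ubp(\mathcal{\group})$ and let $J:=\group^*$ deviate with each member adopting the common sampling region $(\underline{p},\overline{B}_{\ubp(\mathcal{\group})}(\underline{p}))$; using \Cref{eq:supp} one checks that the induced collective region equals $(\underline{p},\overline{B}_{\ubp(\mathcal{\group})}(\underline{p}))$. By monotonicity, every $i\in\group^*$ has $\overline{B}_i(\underline{p})\le\overline{B}_{\ubp(\mathcal{\group})}(\underline{p})<\overline{p}$, so the identity and strict quasi-concavity of $s_i$ give strict improvement on $(\underline{p},\overline{B}_{\ubp(\mathcal{\group})}(\underline{p}))$; and for $p$ in $(\overline{B}_{\ubp(\mathcal{\group})}(\underline{p}),\overline{p})$, since both $p$ and $\overline{p}$ exceed $\overline{B}_i$, the strict decrease of $s_i$ on $[\overline{B}_i,1]$ forces $s_i(p)>s_i(\overline{p})$, i.e.\ $f_i(p)>U_i(\underline{p},\overline{p};p)$, so the deviation (which ends sampling at $p$ for payoff $f_i(p)$) again strictly improves every member. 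This contradicts strong equilibrium. The case $\overline{p}<\overline{B}_{\ubp(\mathcal{\group})}(\underline{p})$ is analogous: the blocking coalition $J:=\{1,\dots,\ubp(\mathcal{\group})\}$ meets every $\group\in\mathcal{\group}$ (since $\min_{i\in\group}i\le\ubp(\mathcal{\group})$) and, by adopting the extended interval $(\underline{p},\overline{B}_{\ubp(\mathcal{\group})}(\underline{p}))$, achieves uniform Pareto improvement. A symmetric argument on the lower boundary yields $\underline{p}=\underline{B}_{\lbp(\mathcal{\group})}(\overline{p})$.

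\emph{Step 3: Converse.} Given a fixed point $(\underline{p},\overline{p})$, let every player adopt $(\underline{p},\overline{p})$. Consider any coalition $J$ with a deviation inducing $\mathcal{O}'$; by Step 1, I may reduce to an interval deviation $(\underline{p}',\overline{p}')$. For $J$ to move $\overline{p}$ downward, $J$ must contain some $\group\in\mathcal{\group}$, and any such $\group$ contains a player $i^*$ with $i^*\le\ubp(\mathcal{\group})$; her $\overline{B}_{i^*}(\underline{p})\ge\overline{B}_{\ubp(\mathcal{\group})}(\underline{p})=\overline{p}$, so by strict quasi-concavity of $s_{i^*}$ she is strictly harmed. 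For $J$ to move $\overline{p}$ upward, $J$ must be blocking and hence contain some $i^*\ge\ubp(\mathcal{\group})$; her $\overline{B}_{i^*}\le\overline{p}$ likewise implies strict opposition. Symmetric arguments rule out one-sided moves on $\underline{p}$. When $\lbp(\mathcal{\group})>\ubp(\mathcal{\group})$, the two one-sided analyses decouple: a player pivotal on the upper side has index $\le\ubp<\lbp$, while one pivotal on the lower side has index $\ge\lbp>\ubp$, so any two-sided move is strictly opposed by at least one deviator. When $\lbp(\mathcal{\group})\le\ubp(\mathcal{\group})$, a single player may belong simultaneously to both pivotal groups and a coordinated two-sided deviation can evade individual opposition; however, at the maximum fixed point no two-sided expansion remains feasible (existence of one would contradict maximality), and any two-sided shrinkage must still be vetoed by the pivotal players holding one of the boundary identities, so the maximum fixed point survives as a strong equilibrium.

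\emph{Main obstacle.} The delicate part is Step 3 under $\lbp(\mathcal{\group})\le\ubp(\mathcal{\group})$: one must check that every joint two-sided shrinkage by a coalition containing an ``overlap'' player is strictly opposed, using that at the maximum fixed point both boundaries sit at the $s_i$-maximizer of the respective pivotal players and that a strict loss on one boundary cannot be compensated by a change on the other side. The rest of the argument follows cleanly from the identity for $U_i$ in Step 2, the strict quasi-concavity of $s_i$, and the piecewise strict concavity of $u_i-\phi$ used in Step 1.
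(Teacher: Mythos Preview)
Your Steps 1 and 2 follow the paper's approach and are essentially correct. Step 3, however, has a genuine gap in the treatment of two-sided deviations.

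For deviations that shrink on at least one side (e.g.\ $\underline{p}'>\underline{p}$, with $\overline{p}'$ arbitrary), your one-sided quasi-concavity argument only shows that the pivotal deviator $j\ge\lbp(\mathcal{\group})$ is harmed \emph{holding $\overline{p}$ fixed}. If $\overline{p}'$ also moves, $j$ might gain on the upper side what she loses on the lower. The paper closes this with a geometric chord argument (its \Cref{lm:BR-extension}): from $U_j(\underline{p}',\overline{p}';p)\ge U_j(\underline{p},\overline{p};p)$ for all $p$ it deduces $U_j(\underline{p}',\overline{p};p)\ge U_j(\underline{p},\overline{p};p)$, forcing $\underline{B}_j(\overline{p})>\underline{p}$ and yielding the contradiction. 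Your ``decouple'' claim when $\lbp>\ubp$ does not substitute for this step.

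For pure expansions $(\underline{p}',\overline{p}')\supsetneq(\underline{p},\overline{p})$, your sentence ``no two-sided expansion remains feasible (existence of one would contradict maximality)'' is the crux of the gap: a profitable expansion need not be a fixed point, so its existence does not directly contradict maximality. The paper's argument is substantive here. It shows that if the required deviators $j\le\lbp$ and $k\ge\ubp$ both weakly benefit everywhere from $(\underline{p}',\overline{p}')$, then $(\underline{p}',\overline{p}')$ is an equilibrium under \emph{unilateral stopping between $j$ and $k$}; hence the maximum such equilibrium, which by \Cref{existence-OSBR} is the maximum fixed point in $(\underline{B}_j,\overline{B}_k)$, strictly contains $(\underline{p},\overline{p})$. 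But the proof of \Cref{cs-committee} (using single-dippedness of $\overline{B}_i$ and single-peakedness of $\underline{B}_i$, derived from the first-order conditions) shows that with $\underline{B}_j\ge\underline{B}_{\lbp}$ and $\overline{B}_k\le\overline{B}_{\ubp}$, no fixed point of $(\underline{B}_j,\overline{B}_k)$ can be strictly larger than the relevant fixed point of $(\underline{B}_{\lbp},\overline{B}_{\ubp})$---any one when $\lbp>\ubp$, the maximum one when $\lbp\le\ubp$. This is precisely where the hypothesis ``maximum fixed point'' enters, and it requires the chain \Cref{existence-OSBR} $\to$ \Cref{cs-committee} rather than a direct appeal to maximality.
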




This phenomenon that only two players are pivotal in collective decisions with binary choice and a one-dimensional state space is also present in \citet{compte2010bargaining} and \citet{chan2018deliberating}.


Adding more decisive coalitions typically reduces learning: For $\mathcal{\group}\subset\mathcal{\group}'$, $\lbp(\mathcal{\group})\geq\lbp(\mathcal{\group}')$ and $\ubp(\mathcal{\group})\leq\ubp(\mathcal{\group}')$, thus $\underline{B}{}_{\lbp(\mathcal{\group})}\leq\underline{B}{}_{\lbp(\mathcal{\group}')}$ and $\overline{B}_{\ubp(\mathcal{\group})}\geq\overline{B}_{\ubp(\mathcal{\group}')}$. Hence, strong equilibria under $\mathcal{\group}'$ are expected to be smaller than under $\mathcal{\group}$.
\begin{proposition}
\label{cs-committee}
    Suppose that players have homogeneous sampling costs. For any $\mathcal{\group}\subset\mathcal{\group}'$, when $\lbp(\mathcal{\group})>\ubp(\mathcal{\group})$, any strong equilibrium under $\mathcal{\group}'$ is not strictly larger than that under $\mathcal{\group}$; when $\lbp(\mathcal{\group})\leq\ubp(\mathcal{\group})$, any strong equilibrium under $\mathcal{\group}'$ is not strictly larger than the maximum one under $\mathcal{\group}$.
\end{proposition}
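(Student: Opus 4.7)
The plan is to reduce strong equilibria to fixed points of the paired best-response map $T_{\mathcal{G}}(\underline{p},\overline{p}) := (\underline{B}_{\lbp(\mathcal{\group})}(\overline{p}),\overline{B}_{\ubp(\mathcal{\group})}(\underline{p}))$ via Proposition~\ref{prop:strong}, and then transport any $\mathcal{G}'$-strong equilibrium to a weakly larger fixed point of $T_{\mathcal{G}}$ by a Tarski-style monotonicity argument.

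First I collect the pointwise orderings already observed immediately before the statement: $\mathcal{\group}\subset\mathcal{\group}'$ gives $\lbp(\mathcal{\group})\ge\lbp(\mathcal{\group}')$ and $\ubp(\mathcal{\group})\le\ubp(\mathcal{\group}')$, while homogeneous costs together with the increasing sequence $v_i$ imply, by Topkis-type monotone comparative statics applied to $U_i(\underline{p},\overline{p};\piv)$, that $\underline{B}_i(\overline{p})$ and $\overline{B}_i(\underline{p})$ are weakly decreasing in $i$. Hence $\underline{B}_{\lbp(\mathcal{\group})}\le\underline{B}_{\lbp(\mathcal{\group}')}$ and $\overline{B}_{\ubp(\mathcal{\group})}\ge\overline{B}_{\ubp(\mathcal{\group}')}$ pointwise. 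Taking any $\mathcal{G}'$-strong equilibrium $(\underline{p}',\overline{p}')$, which by Proposition~\ref{prop:strong} satisfies $\underline{p}'=\underline{B}_{\lbp(\mathcal{\group}')}(\overline{p}')$ and $\overline{p}'=\overline{B}_{\ubp(\mathcal{\group}')}(\underline{p}')$, this immediately yields $T_{\mathcal{G}}(\underline{p}',\overline{p}')\supseteq(\underline{p}',\overline{p}')$ in the interval-inclusion order (where larger intervals sit above smaller ones).

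The key step, and the main obstacle, is a monotonicity lemma for $T_{\mathcal{G}}$: I need $\underline{B}_i(\overline{p})$ to be weakly decreasing in $\overline{p}$ and $\overline{B}_i(\underline{p})$ to be weakly decreasing in $\underline{p}$, so that $T_{\mathcal{G}}$ is monotone with respect to interval inclusion on the lattice $\{(\underline{p},\overline{p}):\underline{p}\in[0,\piv],\ \overline{p}\in[\piv,1]\}$. I would prove this by showing that $U_i(\underline{p},\overline{p};\piv)$ from \eqref{ex-ante-payoff} is supermodular in $(-\underline{p},\overline{p})$ and applying Topkis's theorem; the supermodularity reduces to a cross-partial computation that exploits the strict convexity of $\phi$ and the piecewise-linear form of $u_i$ at $\piv$. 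Given monotonicity of $T_{\mathcal{G}}$, Tarski's fixed point theorem applied to the complete sublattice of intervals containing $(\underline{p}',\overline{p}')$ produces a fixed point $(\underline{p}^*,\overline{p}^*)$ of $T_{\mathcal{G}}$ with $(\underline{p}^*,\overline{p}^*)\supseteq(\underline{p}',\overline{p}')$, and indeed a maximum such fixed point.

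Finally, I split into the two cases of the statement. When $\lbp(\mathcal{\group})>\ubp(\mathcal{\group})$, Proposition~\ref{prop:strong} guarantees that every fixed point of $T_{\mathcal{G}}$ is a strong equilibrium under $\mathcal{G}$, so $(\underline{p}^*,\overline{p}^*)$ itself witnesses the claim. When $\lbp(\mathcal{\group})\le\ubp(\mathcal{\group})$, only the maximum fixed point of $T_{\mathcal{G}}$ is guaranteed to be a strong equilibrium; but that maximum fixed point weakly dominates $(\underline{p}^*,\overline{p}^*)$ in inclusion, and hence dominates $(\underline{p}',\overline{p}')$, yielding the stated comparison. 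The delicate part throughout is verifying the supermodularity/monotonicity of the best responses, without which the Tarski iteration cannot be pushed through and the maximum-fixed-point identification in the second case would fail.
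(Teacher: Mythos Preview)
Your Tarski-based approach has a genuine gap at exactly the step you flag as delicate: the claimed monotonicity of the one-sided best responses is false. The paper's own proof derives, via the first-order condition,
\[
\overline{B}_i'(\underline{p})=\frac{v_i-\phi'(\overline{B}_i(\underline{p}))+1+\phi'(\underline{p})}{-[\overline{B}_i(\underline{p})-\underline{p}]\,\phi''(\overline{B}_i(\underline{p}))},
\]
and concludes that $\overline{B}_i$ is \emph{single-dipped} in $\underline{p}$ (and symmetrically $\underline{B}_i$ is single-peaked in $\overline{p}$), not globally decreasing as your argument requires. Equivalently, $U_i(\underline{p},\overline{p};\piv)$ is \emph{not} supermodular in $(-\underline{p},\overline{p})$: the numerator above, and with it the relevant cross-partial, changes sign. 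So the Topkis step cannot be completed, and without it neither the Tarski iteration from $(\underline{p}',\overline{p}')$ nor the identification of a dominating fixed point goes through. Your preliminary inequality $T_{\mathcal{\group}}(\underline{p}',\overline{p}')\supseteq(\underline{p}',\overline{p}')$ is correct, but that alone does not yield a larger fixed point.

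The paper's proof instead exploits the single-dipped/single-peaked geometry directly. It shows that when $\ubp(\mathcal{\group})<\lbp(\mathcal{\group})$ every fixed point of $(\underline{B}_{\lbp(\mathcal{\group})},\overline{B}_{\ubp(\mathcal{\group})})$ necessarily lies on the decreasing arm of $\underline{B}_{\lbp(\mathcal{\group})}$ and the increasing arm of $\overline{B}_{\ubp(\mathcal{\group})}$, which pins such points to the ``outer frontier'' of the region $\{(\underline{p},\overline{p}):\overline{p}\le\overline{B}_{\ubp(\mathcal{\group})}(\underline{p}),\ \underline{p}\ge\underline{B}_{\lbp(\mathcal{\group})}(\overline{p})\}$. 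The pointwise orderings $\underline{B}_{\lbp(\mathcal{\group}')}\ge\underline{B}_{\lbp(\mathcal{\group})}$ and $\overline{B}_{\ubp(\mathcal{\group}')}\le\overline{B}_{\ubp(\mathcal{\group})}$ force any $\mathcal{\group}'$-fixed point into this region, so it cannot strictly dominate a frontier point. The case $\ubp(\mathcal{\group})\ge\lbp(\mathcal{\group})$ is handled analogously, with fixed points lying on the opposite arms so that they nest and the maximum one again sits on the frontier. The argument is therefore geometric---locating fixed points on the boundary of a sublevel-type set---rather than order-theoretic via a monotone operator.
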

However, the effect of increased diversity/conflict in the committee is less clear and depends on the collective stopping rule $\mathcal{\group}$. First, only when $\lbp(\mathcal{\group})>m>\ubp(\mathcal{\group})$ or $\lbp(\mathcal{\group})\leq m\leq\ubp(\mathcal{\group})$ (i.e., both the leftists and rightists of the committee are represented in $\mathcal{\group}$), a more dispersed distribution of $v_i$'s for sure increases the conflict between the pivotal players. Second, even if so, when $\lbp(\mathcal{\group})>m>\ubp(\mathcal{\group})$ (unanimous stopping between two players), an increased conflict often increases learning; when $\lbp(\mathcal{\group})\leq m\leq\ubp(\mathcal{\group})$ (unilateral stopping), the opposite is true.

\vspace{2mm}
\paragraph{Quota Rules
} When $\mathcal{\group}=\mathcal{\group}^q=\{\group\subset N:|\group|\geq q\}$, $\lbp(\mathcal{\group}^q)=q$ and $\ubp(\mathcal{\group}^q)=2m-q$. \Cref{cs-committee} implies that an increase in $q$ typically increase committee learning. When $q>m$, $\lbp(\mathcal{\group}^q)>m>\ubp(\mathcal{\group}^q)$, an increase in diversity thus often increases learning; the opposite is true when $q\leq m$ and thus $\lbp(\mathcal{\group}^q)\leq m\leq\ubp(\mathcal{\group}^q)$.

\vspace{2mm}

\paragraph{Quota Rules with a Chairperson} Suppose that a chairperson $i^*\in N$ can veto stopping on top of $q$-collective stopping. Then $\mathcal{\group}=\mathcal{\group}^q\cap\mathcal{N}_{i^*}$ where $\mathcal{N}_{i^*}:=\{\group\subset N:i^*\in \group\}$. Without loss assume $1\leq i^*\leq m$.

By \Cref{cs-committee}, having a chairperson never leads to less learning. In particular, $\lbp(\mathcal{\group})=\max\{q,i^*\}\geq q=\lbp(\mathcal{\group}^q)$ and $\ubp(\mathcal{\group})=\min\{i^*,2m-q\}\leq2m-q=\ubp(\mathcal{\group}^q)$. A larger $\lbp$ means a lower $\underline{B}{}_{\lbp}$ and a smaller $\ubp$ means a higher $\underline{B}{}_{\ubp}$. Hence, when $q\leq i^*$, the committee learns more with higher accuracy upon selecting either alternative. When $i^*<q<2m-i^*$, the equilibrium sampling region shifts upward, so the committee only has higher accuracy upon selecting $\beta$, but also accepts $\alpha$ more easily. 

Not surprisingly, a higher majority requirement, i.e., a higher $q$, undermines the chairperson's power: When $q\leq i^*$, chairperson $i^*$ fully controls the learning process; when $q>2m-i^*$, she plays no role at all.




\vspace{2mm}
\paragraph{Connection to \citet{chan2018deliberating}} 
This application on committee search is similar to the committee deliberation model in \citet{chan2018deliberating}. Except for some modeling differences, 
\citeauthor{chan2018deliberating} only consider (super-)majority rules, while I study general deliberation rules and provide a general comparison between different rules. 
My analysis also reveals that their result with respect to diversity (Propositions 4(3)) relies on their focus on (super-)majority rules. 
I do not further explore properties of the one-sided best responses. By pursuing it, one should be able to derive sharper and richer results as in \citet{chan2018deliberating}.\footnote{Methodologically, \citet{chan2018deliberating} use dynamic programming, while my analysis relies on the ex ante approach developed in the baseline model. 
}  

\subsection{Competition in Persuasion}
\label{sect-competition-persuasion}
In this section, I discuss another application on competition in persuasion and show how dynamic incentive shapes information revelation differently from its static analogue.

Consider a game in which a group of senders, denoted by $J$, compete to persuade a receiver who cares about an unknown binary state $\theta\in\Theta=\{0,1\}$. 
The senders persuade the receiver by dynamically disclosing public information about the state, with the public belief process exogenously given by $(p_t)_{t\geq0}$ satisfying \Cref{assumption}. The competition in persuasion is modeled by \textbf{unanimous stopping} among the senders.\footnote{To focus on competition, I mute the receiver's incentive. 
See the previous version of this paper \citet{zhou2023collective} for an application on dynamic persuasion between a sender and a receiver where the receiver is free to stop and make decisions at any time.} When all senders stop, the game ends and sender $i\in J$ collects a payoff $u_i(p)$ when the (receiver's) public posterior belief is $p$; otherwise, everyone, including the receiver, continues to learn the state and senders bear flow costs $\cost_i(p_t)$. 

The following result ensues as a corollary of \Cref{pareto-general,misalign-general,cs-general}: Increasing competition by adding senders or reducing the alignment of senders' preferences leads to more information revelation.

\begin{corollary}
\label{GK17}
Regardless of preferences $u_i$ and $c_i$,
\begin{enumerate}
    \item[(a)] A collusive sampling region is no strictly larger than an equilibrium;
    \item[(b)] A minimal equilibrium when the set of senders is $J'$ is no strictly larger than a minimal equilibrium when the set of senders is $J\supset J'$;
    \item[(c)] A minimal equilibrium cannot become strictly smaller when preference misalignment between any two senders increases.
\end{enumerate}
\end{corollary}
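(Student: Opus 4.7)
The plan is to derive each part by specializing the propositions of \Cref{discussion} to the competition-in-persuasion setting, where the collective stopping rule among senders in $J$ is $\mathcal{\group} = \{J\}$ (together with its monotone closure when embedded in a larger player set). Under this rule, $N_{una}(\mathcal{\group}) = J$ and $N_{uni}(\mathcal{\group}) = \emptyset$, so every sender behaves like a player with effective veto power over stopping.

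For part (a), I would invoke \Cref{pareto-general}: since every sender lies in $N_{una}(\mathcal{\group})$, any equilibrium is no strictly smaller than any Pareto efficient sampling region for the set $J$, i.e., any collusive sampling region. For part (c), I would invoke \Cref{misalign-general}: for any pair of senders $i,j \in N_{una}(\mathcal{\group}) = J$ with a common flow cost, increasing the misalignment parameter between them (in the sense specified before \Cref{misalign}) cannot make a minimal equilibrium strictly smaller.

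The substantive step is part (b), where I translate the comparison between sender sets $J' \subset J$ into a comparison between two $\mathcal{\group}$-collective stopping rules on the fixed player set $J$. Embed the $J'$-game into the player set $J$ by treating each sender in $J \setminus J'$ as a dummy whose strategy is to stop immediately; the induced collection of decisive coalitions is $\mathcal{\group}_{J'} := \{\group \subset J : \group \supset J'\}$, whereas the genuine $J$-sender game corresponds to $\mathcal{\group}_{J} := \{J\}$. Since $J \supset J'$, we have $\mathcal{\group}_{J} \subset \mathcal{\group}_{J'}$, and \Cref{cs-general} then yields that a minimal equilibrium under $\mathcal{\group}_{J}$ is no strictly smaller than one under $\mathcal{\group}_{J'}$, i.e., an equilibrium with more senders is no strictly smaller than one with fewer senders.

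The main obstacle is justifying this embedding, namely that the PSMPE outcomes of the original $J'$-game coincide with those of the embedded game on $J$ under $\mathcal{\group}_{J'}$. This reduces to observing that under $\mathcal{\group}_{J'}$ every decisive coalition contains all of $J'$, so the collective sampling region $\cap_{\group \in \mathcal{\group}_{J'}} \cup_{i \in \group} \sprg_i$ is entirely determined by the active senders' strategies $(\sprg_i)_{i \in J'}$, while the dummy senders' own best-response conditions are vacuous because their stopping choices do not affect the induced stopping time. Once this equivalence is established, \Cref{cs-general} applies cleanly and closes the argument.
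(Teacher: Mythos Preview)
Your proposal is correct and matches the paper's own route: the corollary is stated to follow directly from \Cref{pareto-general,misalign-general,cs-general}, and you invoke exactly these three results for parts (a), (c), and (b) respectively. The embedding you spell out for part (b)—placing the $J'$-game inside the player set $J$ via $\mathcal{\group}_{J'}=\{\group\subset J:\group\supset J'\}$ and checking that players in $J\setminus J'$ have $\cont_k^{\mathcal{\group}_{J'}}=\stpbf_k^{\mathcal{\group}_{J'}}$ (hence vacuous best responses)—is precisely the way to make the paper's appeal to \Cref{cs-general} rigorous, since that proposition is stated for a fixed player set.
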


\paragraph{Unanimous Stopping and Connectedness}
\citet{gentzkow2017bayesian,gentzkow2017competition} study static competition in persuasion. 
They introduce a ``Blackwell-connectedness'' condition on the information environment 
and establish results on the impact of competition under this condition. \Cref{GK17} is similar to their results despite derived in a dynamic setup. This is because my dynamic model has a similar ``connectedness'' property in terms of sampling regions due to unanimous stopping.\footnote{For a given prior, sampling regions can be transformed into belief distributions, but Blackwell-connectedness is not satisfied due to the support restriction in \Cref{MPS}. 
} Let $\langle\bm{\alpha}\rangle$ denote the sampling region induced by a profile of pure Markov stopping strategies $\bm{\alpha}$. For any sampling region $\sprg$, any player $i$, and any $\bm{\alpha}_{-i}$ such that $\sprg\supset\sprg_{-i}:=\langle\bm{\alpha}_{-i}\rangle$, a stopping strategy $\alpha_i$ exists such that $\sprg=\langle\alpha_i,\bm{\alpha}_{-i}\rangle$. 

\vspace{2mm}

\paragraph{Comparison to the Static Model}
Compared to \citeauthor{gentzkow2017competition}'s static model, information revelation is less in dynamic persuasion: The set of equilibrium outcomes in static persuasion is a subset of that in dynamic persuasion. Therefore, for any equilibrium distribution over posterior beliefs in dynamic persuasion, we can find a more informative equilibrium in static persuasion. These follow directly from the ex ante perspective in \Cref{MPE}, for senders can only deviate to MPS satisfying the support restriction in dynamic persuasion while they can choose any MPS in static persuasion. Hence, dynamic interactions limit the scope for competition and reduce information revelation due to the control-sharing effect. 

\vspace{2mm}

\paragraph{A Special Case}
\citet{gul2012war} study a similar, but more specific, dynamic model of war of information in which competition happens between two parties with opposite interests. I show in \Cref{GP12} that my model can also be applied to their setup and simplify the analysis.

\section{Concluding Remarks}
\label{conclusion}
I study collective dynamic information acquisition under general collective stopping rules and develop a method to characterize equilibria from an ex ante perspective via concavification. 
Given its generality and simplicity, I believe this model will prove useful for studying many other interesting problems than the applications I consider.



To conclude, I discuss some possible extensions of the current model.

\vspace{2mm}
\paragraph{Reversible vs. Irreversible Individual Stopping} In the baseline model, individual stopping is assumed to be reversible. 
What if players must stop irreversibly? For simplicity, consider unilateral and unanimous stopping. First, this changes nothing for unilateral stopping where individual stopping immediately triggers the termination of information acquisition. 

In contrast, under unanimous stopping, if players must stop irreversibly, 
they are able to choose any MPS of the distribution of posterior beliefs induced by others' stopping strategies, without the support restriction. As the set of deviations becomes larger, the set of PSMPE under irreversible stopping is smaller. Inefficiency is also more severe. This highlights the role of the control-sharing effect, which vanishes under irreversible stopping since players will get back full control after all other players stop. 


\vspace{2mm}
\paragraph{Beyond MPE} 
The ex ante perspective also has implications for Nash equilibrium. Every strategy profile induces a stopping time and a distribution over posteriors, and players still can only deviate to implement an MPC/MPS. However, for general strategy profiles, not every MPC/MPS is feasible, hence the ex ante perspective can only provide sufficient conditions for equilibria. For instance, every solution to Problem (\ref{unilateral}) or (\ref{unanimous}) with $p_0$ corresponds to a Nash equilibrium under unilateral or unanimous stopping with prior $p_0$; but the converse is not true. 

\vspace{2mm}
\paragraph{What Information to Acquire}
This model can be enriched to allow for players' information choice, as long as the belief process still satisfies Assumption \ref{assumption} and players use Markov strategies.\footnote{Since conclusive Poisson signals can be accommodated (see \Cref{Poisson}), we can also consider 
information choices of the kind studied by \citet{che2019optimal}.
} To illustrate, assume that $\mathrm{d}p_t=p_t(1-p_t)(2/\sigma_t)\mathrm{d}B_t$ as in \Cref{diffusion}, where 
$1/\sigma_t$ is the accuracy of signals. At each moment, players can invest in the level of experimentation 
\citep[cf.][]{moscarini2001optimal} via $1/\sigma_t\equiv R(e_{i,t},e_{-i,t})$ (e.g., by conducting more experiments) with flow costs $K_i(e_{i,t})$. If we focus on Markov strategies, then $e_{i,t}=e_i(p_t)$. 
Let $(e_i^*(\cdot))_{i\in N}$ denote the equilibrium effort profile.
Then the problem reverts back to a stopping problem in the baseline model with flow costs $\cost_i(p)+K_i(e_i^*(p))$. 

\vspace{2mm}
\paragraph{Many States}
The binary state assumption is critical for the results. 
In particular, when $|\Theta|>3$, not all distributions of posterior beliefs can be embedded by some stopping strategy (not mention Markov strategies), because 
continuous Markov martingales cannot hit every point in $\mathbb{R}^n$ almost surely when $n>2$. 
\citet{rost1971stopping} provides a necessary and sufficient condition for feasible distributions, but it is hard to work with.\footnote{See \citet[Proposition 7]{morris2019wald} or \citet[Section 3.5]{obloj2004skorokhod}. 
} 
Hence, concavification arguments analogous to \Cref{thm-g-collective} may only provide sufficient conditions for equilibria. An interesting question is whether the sets of feasible deviations are already rich enough to generate the relevant concave closures, so that those concavification conditions are also necessary.


\vspace{2mm}
\paragraph{Discounting} My model only allows for flow costs but not discounting. To incorporate discounting in this approach, one has to characterize feasible joint distributions over time and posterior beliefs that can be induced by some stopping strategy given the other player's strategy. A recent work by \citet{sannikov2024exploration} gives such a characterization in the single-agent case with flexible information acquisition. The challenging task to extend their results to a strategic setting is left for future research. 

{{\bibliographystyle{ecta}
\bibliography{sampling}}}

\appendix

\section{Proofs for the Baseline model}
\label{proofs}
\subsection{Skorokhod Embedding}
\label{proof-skorokhod}
I start with implementable distributions of posterior distributions in the single-agent case. Fix a prior $p_0$. Let $F_\tau$ denote the distribution over posterior beliefs induced by stopping time $\tau$.
\begin{lemma}
\label{skorokhod}
	A stopping time $\tau$ exists s.t. $F_\tau=F$ if and only if $\mathbb{E}_F[p]=p_0$.
\end{lemma}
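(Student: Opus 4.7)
The plan is to prove the two implications separately, leveraging the properties of $(p_t)_{t\geq0}$ in Assumption \ref{assumption}.

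For necessity, since $(p_t)_{t\geq0}$ takes values in $[0,1]$ and hence is uniformly bounded, the Optional Stopping Theorem applies to any stopping time $\tau$. When $\tau<\infty$ almost surely, $\mathbb{E}[p_\tau]=p_0$ follows directly. When $\tau=\infty$ has positive probability, Assumption \ref{assumption}(ii) guarantees that $p_\infty:=\lim_{t\to\infty}p_t$ exists almost surely in $\{0,1\}$; applying the bounded convergence theorem to $p_{\tau\wedge n}\to p_\tau$ as $n\to\infty$ (with $p_\tau := p_\infty$ on $\{\tau=\infty\}$) still gives $\mathbb{E}[p_\tau]=p_0$. In either case, $\mathbb{E}_F[p]=\mathbb{E}[p_\tau]=p_0$.

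For sufficiency, I would invoke the Skorokhod embedding theory for continuous martingales. By the Dambis--Dubins--Schwarz theorem (applicable because Assumption \ref{assumption}(iii)--(iv) ensure $(p_t)$ is a continuous martingale with integrable quadratic variation on finite intervals), we may write $p_t = W_{\langle p\rangle_t}$, where $(W_s)_{s\geq0}$ is a Brownian motion starting at $p_0$ and absorbed upon hitting $\{0,1\}$; note that Assumption \ref{assumption}(ii) guarantees $\langle p\rangle_\infty$ is large enough for $W$ to reach $\{0,1\}$. Given any Bayes-plausible $F$ on $[0,1]$, it then suffices to exhibit a stopping time $\sigma$ of the filtration generated by $W$ with $W_\sigma\sim F$; the time-changed $\tau:=\inf\{t\geq0:\langle p\rangle_t\geq\sigma\}$ is a stopping time of $\mathbb{F}$ with $p_\tau\sim F$. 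Existence of such a $\sigma$ is provided by the \citet{chacon1976one} embedding: decompose $F$ as a limit of distributions obtained by iteratively spreading mass along chords of the convex minorant connecting the potential of $\delta_{p_0}$ to that of $F$, with each step corresponding to the exit of $W$ from a sub-interval. The iterated exit times converge monotonically to a stopping time whose distribution is $F$, because the associated potentials converge to the potential of $F$.

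The main obstacle will be the sufficiency direction, specifically verifying that (i) the Chacon--Walsh iteration converges to a stopping time (possibly taking the value $\infty$) that embeds $F$ exactly, (ii) the target distribution $F$ can charge the absorbing endpoints $\{0,1\}$, and (iii) the time-change back to the original clock preserves measurability with respect to $\mathbb{F}$. The boundedness of $[0,1]$ together with Assumption \ref{assumption}(ii)--(iv) is precisely what allows all three to go through: $(p_t)$ is a uniformly integrable martingale, $\langle p\rangle_\infty$ is essentially unbounded above where needed, and the quadratic variation is strictly increasing wherever $(p_t)$ is non-degenerate, so the inverse time change is well-defined.
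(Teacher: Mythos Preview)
Your proposal lands on the same embedding (Chacon--Walsh) as the paper, but you route through an unnecessary Dambis--Dubins--Schwarz time change. The paper applies the Chacon--Walsh construction \emph{directly} to the continuous martingale $M_t:=p_t-p_0$: since $(p_t)$ is continuous (Assumption~\ref{assumption}(iii)) and $p_\infty\in\{0,1\}$ a.s.\ (Assumption~\ref{assumption}(ii)), the first-exit times $\tau_{a,b}=\inf\{t\geq0:M_t\notin(a,b)\}$ are almost-surely finite $\mathbb{F}$-stopping times, and the iterated exits $\tau_1=\tau_{a_1,b_1}$, $\tau_n=\tau_{n-1}+\tau_{a_n,b_n}\circ\theta_{\tau_{n-1}}$ already live on the original filtered space. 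Convergence of the potentials $U\mu_n\to U\mu$ then gives $F_{\lim\tau_n}=F$ with no time change at all.

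Your DDS detour is not wrong in spirit, but it creates exactly the headaches you flag in your ``main obstacle'' paragraph and a couple more you understate. First, DDS for a bounded martingale does \emph{not} yield ``Brownian motion absorbed at $\{0,1\}$''; it yields a standard Brownian motion on an enlarged space run up to the (a.s.\ finite) clock $\langle p\rangle_\infty$, so your description of the representation is off and your line ``$\langle p\rangle_\infty$ is essentially unbounded above where needed'' is simply false. Second, the inverse time change $\tau=\inf\{t:\langle p\rangle_t\geq\sigma\}$ being an $\mathbb{F}$-stopping time requires that $\sigma$ be a stopping time of the time-changed filtration and that the enlargement needed for DDS not spoil $\mathbb{F}$-adaptedness; neither is automatic. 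The paper's direct approach sidesteps all of this: no enlargement, no absorbed process, no inverse time change, and hence none of your obstacles (i)--(iii) arise in the first place.
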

This is related to the classical Skorokhod embedding problem \citep[see][for an excellent survey]{obloj2004skorokhod}.
    One proof of this result \citep[see Proposition 5.1,][]{obloj2004skorokhod} is given by \citet{chacon1976one} via potential theory. 
\begin{proof}
    Define $M_t:=p_t-p_0$. Hence, every Bayes plausible distribution over posterior beliefs $F$ corresponds to a centered measure $\mu$ over $[-p_0,1-p_0]$.

	The potential of a centered probability measure $\mu$ on $\mathbb{R}$ is defined as $U\mu(x)=-\int_\mathbb{R}|x-y|\mathrm{d}\mu(y)$; note that $U\mu$ is concave. For a sequence of measures $(\mu_n)$, if $U\mu_n(x)\to U\mu(x)$, then $\mu_n\Rightarrow\mu$ (weak convergence). Therefore, in order to embed $\mu$, I can instead find a sequence of measures $\mu_n$ and stopping times $\tau_n$ such that $U\mu_n\to U\mu$ and $M_{\tau_n}\sim\mu_n$. If $\tau_n$ converges to some stopping time $\tau$ a.s., then we have $M_\tau\sim\mu$. 

	
    Let $\mu_0=\delta_0$ be the Dirac measure at $0$. Choose a point between the graphs of $U\mu_0$ ($U\mu_0(x)=-|x|$) and $U\mu$, and draw a line $l_1$ through this point that stays above $U\mu$, which is possible since $U\mu$ is concave. This line cuts the potential $U\mu_0$ in two points $a_1<0<b_1$. Consider the new potential $U\mu_1$ which is $U\mu_0$ on $(-\infty,a_1]\cup[b_1,\infty)$ and linear on $[a_1,b_1]$. Choose a point between $U\mu_1$ and $U\mu$, and draw a line $l_2$ through this point that stays above $U\mu$ and cuts $U\mu_1$ in two points $a_2$ and $b_2$ such that either $a_2<a_1<b_2$ or $a_2<b_1<b_2$. Iterate this procedure. 
    We can draw lines in such a way that $U\mu_n\to U\mu$, because $U\mu$ is a concave function that can be represented as the infimum of countably many affine functions. 
 
    The last step is to find the corresponding stopping times $\tau_n$ and the limit. Consider first escape times $\tau_{a,b}=\inf\{t\geq0:M_t\notin(a,b)\}$ for $a<0<b$. 
    As $(p_t)_{t\geq0}$ is a continuous martingale and $p_\infty\in\{0,1\}$, $\tau_{a,b}\leq\infty$ a.s. and $M_{\tau_{a,b}}$ can only take values at $a$ or $b$.  Since $(M_{\tau_{a,b}\wedge t})_{t\geq0}$ is bounded, by the Martingale Convergence Theorem, $0=\mathbb{E}[M_{\tau_{a,b}}]=\mathbb{P}(M_{\tau_{a,b}}=a)\cdot a+(1-\mathbb{P}(M_{\tau_{a,b}}=a))\cdot b$, which implies $\mathbb{P}(M_{\tau_{a,b}}=a)=\frac{b}{b-a}$. Therefore, $U\mu_{\tau_{a,b}}$ is piecewise linear. Let $\tau_1=\tau_{a_1,b_1}$, then $M_{\tau_1}\sim\mu_1$. Let $\theta_t$ denote the standard shift operator and define $\tau_2=\tau_1+\tau_{a_2,b_2}\circ\theta_{\tau_1},\dots,\tau_n=\tau_{n-1}+\tau_{a_n,b_n}\circ\theta_{\tau_{n-1}},\dots$ and $\tau=\lim_{n\to\infty}\tau_n$. One can check that the limit is finite almost surely. As a result, one can embed $\mu$ via $\tau$.
\end{proof}

\begin{lemma}
\label{MPS-st}
If $\tau\geq\rho$ a.s., then $F_\tau$ is an MPS of $F_\rho$. Conversely, with $\rho$ fixed, if $F$ is an MPS of $F_\rho$, there exists $\tau\geq\rho$ a.s. such that $F_\tau=F$.
\end{lemma}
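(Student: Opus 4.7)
For the forward direction, I would use the optional stopping/Jensen argument. Since $(p_t)_{t\ge0}$ is a bounded $[0,1]$-valued martingale and $\rho\le\tau$ a.s., the optional stopping theorem (applied first to $\tau\wedge n$ and then passing to the limit using bounded convergence, with the a.s.\ existence of $p_\tau,p_\rho$ guaranteed by Assumption \ref{assumption}(ii)) gives $\mathbb{E}[p_\tau\mid\mathcal{F}_\rho]=p_\rho$ a.s. Hence for any convex $\phi\colon[0,1]\to\mathbb{R}$, Jensen's inequality yields $\mathbb{E}[\phi(p_\tau)\mid\mathcal{F}_\rho]\ge\phi(p_\rho)$, and taking expectations gives $\mathbb{E}_{F_\tau}[\phi]\ge\mathbb{E}_{F_\rho}[\phi]$. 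Since $\mathbb{E}_{F_\tau}[p]=p_0=\mathbb{E}_{F_\rho}[p]$, this is exactly the definition of $F_\tau$ being a mean-preserving spread of $F_\rho$.

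For the converse, the plan is to combine Strassen's theorem with the strong Markov property and the single-agent embedding of Lemma \ref{skorokhod}. Since $F$ is an MPS of $F_\rho$, Strassen's theorem produces a martingale coupling: a stochastic kernel $K(x,\cdot)$ on $[0,1]$ such that $\int y\,K(x,\mathrm{d}y)=x$ for $F_\rho$-a.e.\ $x$, and such that if $X\sim F_\rho$ and $Y\mid X\sim K(X,\cdot)$, then $Y\sim F$. The idea is to realize $Y$ as $p_\tau$ for some $\tau\ge\rho$. By Assumption \ref{assumption}(i), the strong Markov property at $\rho$ implies that, conditional on $\mathcal{F}_\rho$, the shifted process $(p_{\rho+s})_{s\ge0}$ is again a continuous $[0,1]$-valued martingale satisfying Assumption \ref{assumption}, starting from $p_\rho$. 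For each $x\in[0,1]$, $K(x,\cdot)$ is centered at $x$, so Lemma \ref{skorokhod} (via the Chacon--Walsh construction used in its proof) produces a stopping time $\sigma(x)$ for the shifted process such that $p_{\rho+\sigma(x)}\sim K(x,\cdot)$ when the shifted process starts at $x$. Setting $\tau:=\rho+\sigma$, where $\sigma$ is obtained by applying this construction with $x=p_\rho$, I then verify $\tau\ge\rho$ a.s.\ and that for any bounded measurable $g$,
\[
\mathbb{E}[g(p_\tau)]=\mathbb{E}\bigl[\mathbb{E}[g(p_{\rho+\sigma})\mid\mathcal{F}_\rho]\bigr]=\mathbb{E}\!\left[\int g(y)\,K(p_\rho,\mathrm{d}y)\right]=\int g\,\mathrm{d}F,
\]
so $F_\tau=F$.

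The main obstacle is the measurable selection: I need the Chacon--Walsh embedding stopping time $\sigma(x)$ to depend measurably on the starting point $x$, so that $\sigma$ is a genuine stopping time with respect to the shifted filtration $(\mathcal{F}_{\rho+s})_{s\ge0}$ (and hence $\tau=\rho+\sigma$ is a stopping time of $\mathbb{F}$). To handle this I would argue that the Chacon--Walsh iteration in the proof of Lemma \ref{skorokhod} is explicit: the sequence of affine supporting lines can be chosen (e.g.\ via a countable dense family of tangents to the concave potential $U K(x,\cdot)$) as Borel functions of $x$, so that the successive escape times $\tau_n(x)$ and their limit $\sigma(x)$ are measurable in $x$. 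With this, $\sigma$ becomes a stopping time in the appropriate enlargement of $\mathbb{F}$ and the a.s.\ finiteness established in the proof of Lemma \ref{skorokhod} carries over. A minor secondary step is to verify that the shifted process genuinely inherits Assumption \ref{assumption}, which follows directly from the Markov and martingale properties together with $p_\infty\in\{0,1\}$ a.s.
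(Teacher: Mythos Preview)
Your forward direction is essentially the paper's argument: the paper applies Jensen to the specific convex function $y\mapsto(p-y)^+$ and obtains the integral condition $\int_0^pF_\tau\ge\int_0^pF_\rho$, while you state it for arbitrary convex $\phi$. These are equivalent characterizations of the convex order, so there is no real difference.

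For the converse you take a genuinely different route. You disintegrate the MPS via Strassen's theorem into a kernel $K(x,\cdot)$, then embed each $K(x,\cdot)$ in the post-$\rho$ process via a pointwise Chacon--Walsh construction, and finally confront the measurable-selection problem of making $\sigma(x)$ depend measurably on $x$. The paper avoids this entirely by working at the level of potentials. Since $F$ is an MPS of $F_\rho$, the potentials satisfy $U\mu_F\le U\mu_{F_\rho}$; the Chacon--Walsh method (Proposition~5.2 in \citet{obloj2004skorokhod}) then applies directly to the shifted process $(\tilde p_t)$ with \emph{random} initial law $F_\rho$, producing a single global stopping time $\tau'$ with $\tilde p_{\tau'}\sim F$. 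The lines are drawn between the two fixed concave potentials $U\mu_{F_\rho}$ and $U\mu_F$, so no pointwise-in-$x$ construction or selection argument is needed; setting $\tau=\rho+\tau'\circ\theta_\rho$ finishes the proof.

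Your approach is correct in principle---the selection you outline can be made to work---but it is more laborious and the measurability step is the kind of thing that invites scrutiny. The potential-theoretic version is both shorter and cleaner, and is worth knowing since the same trick (embedding with a non-Dirac initial law by comparing potentials) is what drives the proofs of Lemmas~\ref{MPC-markov-st} and~\ref{MPC-MPS} later in the appendix.
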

\begin{proof}
    For the first part, for $(p_t)$ is a martingale and $\mathcal{F}_{\rho}\subseteq\mathcal{F}_{\tau}$, by Jensen's inequality, $\mathbb{E}[(p-p_{\tau})^+|\mathcal{F}_\rho]\geq (p-\mathbb{E}[p_{\tau}|\mathcal{F}_\rho])^+=(p-p_\rho)^+$. Hence, $\mathbb{E}[(p-p_{\tau})^+]=\mathbb{E}[\mathbb{E}[(p-p_{\tau})^+|\mathcal{F}_\rho]]\geq\mathbb{E}[(p-p_\rho)^+]$ for $p\in[0,1]$. This is equivalent to $\int_0^p F_{\tau}(x)dx\geq\int_0^p F_{\rho}(x)dx$ for any $p\in[0,1]$, so $F_{\tau}$ is an MPS of $F_\rho$.

    For the second part, since $F$ is an MPS of $F_\rho$, $U\mu_F(x)=-\int_{\mathbb{R}}|x-y|\mathrm{d}F(y)\leq-\int_{\mathbb{R}}|x-y|\mathrm{d}F_\rho(y)=U\mu_{F_\rho}(x)$. Construct a stochastic process $(\tilde{p}_t)$ with $\tilde{p}_0\sim F_\rho$ and $\tilde{p}_t=p_t(\tilde{p}_0)$ for any realization of $\tilde{p}_0$. Then by Proposition 5.2 in \citet{obloj2004skorokhod}, there exists a stopping time $\tau'$ such that $\tilde{p}_{\tau'}\sim F$. Let $\tau=\rho+\tau'\circ\theta_\rho$. It is easy to verify that $\tau$ is also a stopping time and $p_\tau\sim F$, i.e., $F_{\tau}=F$.
\end{proof}
\begin{lemma}
\label{MPC-st}
	If $\tau\leq\rho$ a.s., $F_{\tau}$ is an MPC of $F_\rho$. 
	Conversely, if $F$ is an MPC of $H$ and $\mathbb{E}_F[p]=\mathbb{E}_H[p]=p_0$,  $\tau\leq\rho$ exist such that $F_\tau=F$ and $F_{\rho}=H$.
\end{lemma}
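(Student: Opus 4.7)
The plan is to derive both directions as direct consequences of Lemma \ref{MPS-st}, exploiting the standard duality that, for distributions sharing a common mean, ``$F$ is an MPC of $H$'' is equivalent to ``$H$ is an MPS of $F$''.

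For the forward direction, suppose $\tau \leq \rho$ almost surely. Applying the first part of Lemma \ref{MPS-st} with the roles of the two stopping times interchanged yields that $F_\rho$ is an MPS of $F_\tau$. Since both $F_\tau$ and $F_\rho$ have mean $p_0$ (by the Optional Stopping Theorem applied to the bounded martingale $(p_t)$), the duality above delivers that $F_\tau$ is an MPC of $F_\rho$.

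For the converse direction, I will construct $(\tau,\rho)$ in two stages. Given $F$ and $H$ with equal means $p_0$ and $F$ an MPC of $H$, Lemma \ref{skorokhod} first furnishes a stopping time $\tau$ with $F_\tau = F$. By the duality, $H$ is then an MPS of $F = F_\tau$, so the second part of Lemma \ref{MPS-st}, applied with the ``base'' stopping time taken to be $\tau$, produces a stopping time $\rho \geq \tau$ almost surely such that $F_\rho = H$. The pair $(\tau,\rho)$ has the required properties.

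I do not anticipate a substantive obstacle: the heavy lifting has already been done in Lemmas \ref{skorokhod} and \ref{MPS-st}, and the present statement is essentially their dualization under mean equality, so no new Chacon-type embedding construction is needed. The only thing worth stating explicitly in the write-up is the MPS/MPC equivalence under equal means, which can be recorded as the single inequality $\int_0^p F(x)\,dx \geq \int_0^p H(x)\,dx$ for all $p\in[0,1]$ being equivalent to $\int_0^p H(x)\,dx \leq \int_0^p F(x)\,dx$ for all $p\in[0,1]$.
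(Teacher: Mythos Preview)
Your proposal is correct and follows essentially the same approach as the paper: the forward direction is reduced to the first part of Lemma~\ref{MPS-st} via the MPC/MPS duality, and the converse is obtained by first embedding $F$ via Lemma~\ref{skorokhod} and then applying the second part of Lemma~\ref{MPS-st} to embed $H$ as an MPS of $F_\tau$. The paper's write-up is terser (and in fact contains a small typo in the second part), but the structure is identical.
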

\begin{proof}
	The first part follows from \Cref{MPS-st}. For the second part, according to \Cref{skorokhod}, there exists a stopping time $\tau$ such that $F_\tau=F$. Let $\rho=\tau+\tau'\circ\theta_{\tau}$ where $\tau'$ is a stopping time such that $\tilde{p}_{\tau'}\sim F$ with $\tilde{p}_0\sim F_\rho$; refer to the second part in the proof of \Cref{MPS-st}.
\end{proof}


In general, with $\rho$ fixed, if $F$ is an MPC of $F_\rho$, there does not necessarily exist $\tau\leq\rho$ a.s. such that $F_\tau=F$; see an example in Supplementary \Cref{counterexample-MPC}.
However, such a $\tau$ does exist when $\rho$ is a first escape time.
\begin{lemma}
\label{MPC-markov-st}

    Suppose that $\rho=\inf\{t\geq0:p_t\notin(\underline{p},\overline{p})\}$. With $\rho$ fixed, if $F$ is an MPC of $F_\rho$, there exists $\tau\leq\rho$ a.s. such that $F_\tau=F$.
\end{lemma}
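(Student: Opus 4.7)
The plan is to apply the Chacon--Walsh embedding procedure used in the proof of Lemma~\ref{skorokhod}, but starting from the degenerate measure $\delta_{p_0}$ and carefully choosing the successive supporting lines so that every escape interval lies inside $[\underline{p},\overline{p}]$. Since $\rho$ is the first escape time from $(\underline{p},\overline{p})$, this will force $\tau\le\rho$ almost surely.

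First I would establish a preliminary support property: if $F$ is an MPC of $F_\rho$, where $F_\rho$ is supported on $\{\underline{p},\overline{p}\}$, then $\supp(F)\subset[\underline{p},\overline{p}]$. This follows either from applying the MPC characterization to the concave functions $x\mapsto -(x-\overline{p})^+$ and $x\mapsto -(\underline{p}-x)^+$, or equivalently, from comparing the potentials: $U\mu_F(x)\le U\mu_{\delta_{p_0}}(x)=-|x-p_0|$ always, and $U\mu_F\ge U\mu_{F_\rho}$ by MPC, where $U\mu_{F_\rho}(x)=-|x-p_0|$ for $x\notin(\underline{p},\overline{p})$. Consequently $U\mu_F$ agrees with $U\mu_{\delta_{p_0}}$ outside $(\underline{p},\overline{p})$, and in particular $U\mu_F(\underline{p})=\underline{p}-p_0$ and $U\mu_F(\overline{p})=p_0-\overline{p}$.

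Next I would run the Chacon--Walsh construction from the proof of Lemma~\ref{skorokhod} verbatim, with $\mu_0=\delta_{p_0}$ in place of $\delta_0$, producing a sequence of lines $l_n$ lying above $U\mu_F$ and a sequence of approximating potentials $U\mu_n$ converging pointwise to $U\mu_F$. The key induction claim is that $U\mu_n(x)=-|x-p_0|$ for every $x\notin(\underline{p},\overline{p})$, and that each intersection interval $[a_n,b_n]$ is contained in $[\underline{p},\overline{p}]$. For the inductive step, since $l_n\ge U\mu_F$ everywhere and $U\mu_F$ coincides with $U\mu_n=U\mu_{\delta_{p_0}}$ at $\underline{p}$ and $\overline{p}$ by the induction hypothesis, we have $l_n(\underline{p})\ge U\mu_n(\underline{p})$ and $l_n(\overline{p})\ge U\mu_n(\overline{p})$; concavity of $U\mu_n$ then forces the set $\{x:l_n(x)\le U\mu_n(x)\}=[a_n,b_n]$ to sit inside $[\underline{p},\overline{p}]$. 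Since the update only changes the potential on $[a_n,b_n]$, the equality $U\mu_{n+1}=-|x-p_0|$ on the complement of $(\underline{p},\overline{p})$ is preserved.

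Finally, the associated iterative stopping times $\tau_n$ in the Chacon--Walsh construction are finite sums of escape times $\tau_{a_n^{(k)},b_n^{(k)}}$ with $[a_n^{(k)},b_n^{(k)}]\subset[\underline{p},\overline{p}]$, so each $\tau_n\le\rho$ almost surely by strong monotonicity of escape times in the interval (any path escaping $[\underline{p},\overline{p}]$ must first escape each smaller subinterval). Passing to the limit $\tau=\lim_n\tau_n$, which is a.s.\ finite by the same argument as in Lemma~\ref{skorokhod}, yields $\tau\le\rho$ a.s.\ and $F_\tau=F$ by the Chacon--Walsh convergence argument. The main obstacle is the inductive step ensuring $[a_n,b_n]\subset[\underline{p},\overline{p}]$; once one sees that the boundary values of $U\mu_F$ already agree with those of $U\mu_{\delta_{p_0}}$, the concavity argument closes the loop.
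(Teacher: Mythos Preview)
Your proposal is correct and follows essentially the same approach as the paper: run the Chacon--Walsh construction for $F$ and observe that, because $F$ is an MPC of $F_\rho$, the supporting lines lie above $U\mu_{F_\rho}$, which coincides with $U\mu_{\delta_{p_0}}$ outside $(\underline{p},\overline{p})$, forcing every cut interval $[a_n,b_n]$ into $[\underline{p},\overline{p}]$ and hence $\tau_n\le\rho$. The paper's proof is a terse four-sentence version of exactly this argument; your write-up fills in the inductive bookkeeping (minor indexing slip: in the inductive step you write $U\mu_n$ where you mean the potential being cut, $U\mu_{n-1}$) but is otherwise the same.
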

\begin{proof}
For any MPC $F$ of $F_\rho$, the Chacon and Walsh solution $\tau$ described in the proof of \Cref{skorokhod} implements $F_\tau=F$ and $\tau\leq\rho$. Recall that the sequence of $a_n$ and $b_n$ is constructed by some lines above $U\mu_F$. As $F$ is an MPC of $F_\rho$, we have $U\mu_{F_\rho}\leq U\mu_F$, so those lines are also above $U\mu_{F_\rho}$ and $a_n,b_n\in[\underline{p},\overline{p}]$. Therefore, by construction, $\tau_n\leq\rho$ for all $n$. Given that $\tau_n\to\tau$ a.s., $\tau\leq\rho$ a.s. too.
\end{proof}

\subsection{Ex Ante Reformulation}
\label{proof-ex-ante}
Here I establish the ex ante reformulation under $\mathcal{\group}$-collective stopping, which proves the results in \Cref{sect-reformulation}.

Focus on an arbitrary player $i$. 
Suppose that all other players are using pure Markov strategies $\bm{\alpha}_{-i}$ with sampling regions $(\sprg_j)_{j\ne i}$. Consider the stopping times induced by $\bm{\alpha}_{-i}$ and either $\alpha_i\equiv1$ (i.e., always stopping) or $\alpha_i\equiv0$ (i.e., never stopping) and define
\begin{align*}
\tau^{\mathcal{\group}-i}(\bm{\alpha}_{-i}):=&\tau^{\mathcal{\group}}(\alpha_i\equiv1,\bm{\alpha}_{-i})=\inf\left\{t\geq0:p_t\notin\stpbf_i^{\mathcal{\group}}(\bm{\sprg}_{-i})\right\}\\
\tau^{\mathcal{\group}\setminus\mathcal{\group}_i}(\bm{\alpha}_{-i}):=&\tau^{\mathcal{\group}}(\alpha_i\equiv0,\bm{\alpha}_{-i})=\inf\left\{t\geq0:p_t\notin\cont_i^{\mathcal{\group}}(\bm{\sprg}_{-i})\right\}
\end{align*}
where $\stpbf_i^{\mathcal{\group}}(\bm{\sprg}_{-i})$ and $\cont_i^{\mathcal{\group}}(\bm{\sprg}_{-i})$ are as defined in \Cref{eq:supp}. 

By definition, $\tau^{\mathcal{\group}-i}(\bm{\alpha}_{-i})\leq\tau^{\mathcal{\group}}(\alpha_i,\bm{\alpha}_{-i})\leq\tau^{\mathcal{\group}\setminus\mathcal{\group}_i}(\bm{\alpha}_{-i})$ for any stopping strategy $\alpha_i$ (even non-Markov), therefore by \Cref{MPS-st,MPC-st}, for a fixed prior $p_0$, 
$F_{\tau^{\mathcal{\group}}(\alpha_i,\bm{\alpha}_{-i})}$ is an MPS of $F_{\tau^{\mathcal{\group}-i}(\bm{\alpha}_{-i})}$ and an MPC of $F_{\tau^{\mathcal{\group}\backslash\mathcal{\group}_i}(\bm{\alpha}_{-i})}$. Notice that 
$\supp(F_{\tau^{\mathcal{\group}\backslash\mathcal{\group}_i}(\bm{\alpha}_{-i})})=\{\underline{p}{}_i^{\mathcal{\group}}(p_0;\bm{\sprg}_{-i}),\overline{p}{}_i^{\mathcal{\group}}(p_0;\bm{\sprg}_{-i})\}$. Hence, the support of $F_{\tau^{\mathcal{\group}}(\alpha_i,\bm{\alpha}_{-i})}$, as an MPC of $F_{\tau^{\mathcal{\group}\backslash\mathcal{\group}_i}(\bm{\alpha}_{-i})}$, must be contained in $[\underline{p}{}_i^{\mathcal{\group}}(p_0;\bm{\sprg}_{-i}),\overline{p}{}_i^{\mathcal{\group}}(p_0;\bm{\sprg}_{-i})]\subset\overline{\cont_i^{\mathcal{\group}}(\bm{\sprg}_{-i})}$; moreover, it is also contained in $[0,1]\setminus\stpbf_i^{\mathcal{\group}}(\bm{\sprg}_{-i})$ by definition of $\tau^{\mathcal{\group}}(\bm{\alpha})$. 
Hence, $F_{\tau^{\mathcal{\group}}(\alpha_i,\bm{\alpha}_{-i})}$ is an MPS of $F_{\tau^{\mathcal{\group}-i}(\bm{\alpha}_{-i})}$ and an MPC of $F_{\tau^{\mathcal{\group}\backslash\mathcal{\group}_i}(\bm{\alpha}_{-i})}$, with support in $\overline{\cont_i^{\mathcal{\group}}(\bm{\sprg}_{-i})}\setminus\stpbf_i^{\mathcal{\group}}(\bm{\sprg}_{-i})$.

\vspace{2mm}
\paragraph{$\mathcal{\group}$-collective Stopping} A distribution of posterior beliefs $F$ is a solution to Problem (\ref{G-collective}) with $(p_0,\underline{F}{}_i,\overline{F}_i,E_i)$ if
\begin{equation}
\begin{aligned}
\label{G-collective}
	F\in&\argmax_{H} \int_0^1[u_i(p)-\phi_i(p)]\mathrm{d}H(p)\\
 &\text{s.t. }H \text{ is an MPS of }\underline{F}{}_i \text{ and an MPC of }\overline{F}_i,\supp(H)\subset E_i.
\end{aligned}
\tag{S-$\mathcal{\group}$-$i$}
\end{equation}
\begin{theorem}
\label{MPE-general}
    A profile of pure Markov strategies $\bm{\alpha}$ is sequentially rational for player $i$ at $p_t=p$ under $\mathcal{\group}$-collective stopping if and only if $F_{\tau^{\mathcal{\group}}(\bm{\alpha})}$ is a binary policy solution to Problem (\ref{G-collective}) with $(p_0,\underline{F}{}_i,\overline{F}_i,E_i)$, where $p_0=p$, $\underline{F}{}_i=F_{\tau^{\mathcal{\group}-\{i\}}(\bm{\alpha}_{-i})}$, $\overline{F}_i=F_{\tau^{\mathcal{\group}\backslash\mathcal{\group}_i}(\bm{\alpha}_{-i})}$, and $E_i=\overline{\cont_i^{\mathcal{\group}}(\bm{\alpha}_{-i})}\setminus\stpbf_i^{\mathcal{\group}}(\bm{\alpha}_{-i})$.
\end{theorem}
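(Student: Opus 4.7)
The plan is to reformulate player $i$'s sequential rationality as an optimization over distributions of posterior beliefs. By \Cref{cost}, her expected continuation payoff from $(\alpha_i,\bm{\alpha}_{-i})$ starting at $p_t=p$ equals $\int[u_i-\phi_i]\,\mathrm{d}F_{\tau^{\mathcal{\group}}(\alpha_i,\bm{\alpha}_{-i})}+\phi_i(p)$; since the last term is a constant in $\alpha_i$, maximizing the continuation payoff reduces to maximizing $\int[u_i-\phi_i]\,\mathrm{d}H$ over all distributions $H$ that $i$ can induce given $\bm{\alpha}_{-i}$.

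The core claim is that this induced set coincides with the feasible set of Problem (\ref{G-collective}): all $H$ that are an MPS of $\underline{F}_i$, an MPC of $\overline{F}_i$, and supported in $E_i$. Necessity follows from the bracketing $\tau^{\mathcal{\group}-i}\leq\tau^{\mathcal{\group}}(\alpha_i,\bm{\alpha}_{-i})\leq\tau^{\mathcal{\group}\setminus\mathcal{\group}_i}$, obtained by considering $i$'s extreme strategies of always stopping versus never stopping, combined with \Cref{MPS-st,MPC-st}; the support restriction comes from the definitions of $\stpbf_i^{\mathcal{\group}}$ and $\cont_i^{\mathcal{\group}}$---collective stopping cannot occur while $p_t\in\stpbf_i^{\mathcal{\group}}$ because no decisive coalition can form even if $i$ stops, and it must have occurred once $p_t$ leaves $\cont_i^{\mathcal{\group}}$ regardless of $i$'s action.

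For sufficiency, given any feasible $H$, I would invoke \Cref{MPC-markov-st}: since $\tau^{\mathcal{\group}\setminus\mathcal{\group}_i}$ is the first-escape time from the open set $\cont_i^{\mathcal{\group}}$ and $H$ is an MPC of $\overline{F}_i=F_{\tau^{\mathcal{\group}\setminus\mathcal{\group}_i}}$, there exists a stopping time $\rho\leq\tau^{\mathcal{\group}\setminus\mathcal{\group}_i}$ with $F_\rho=H$. Let $\alpha_i$ be the strategy of sampling until $\rho$ and then stopping. I would then verify $\tau^{\mathcal{\group}}(\alpha_i,\bm{\alpha}_{-i})=\rho$: for $t<\rho$ the belief lies in $\cont_i^{\mathcal{\group}}$, so no coalition in $\mathcal{\group}\setminus\mathcal{\group}_i$ is fully stopped, while $i$'s sampling prevents any coalition in $\mathcal{\group}_i$ from being decisive; at $t=\rho$, the support condition $\supp(H)\subset E_i$ guarantees $p_\rho\notin\stpbf_i^{\mathcal{\group}}$, so some $\group\in\mathcal{\group}_i$ has $\group\setminus\{i\}$ all stopped, and $i$'s stopping renders $\group$ decisive. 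The binary policy property of $F_{\tau^{\mathcal{\group}}(\bm{\alpha})}$ follows automatically, because pure Markov strategies make $\tau^{\mathcal{\group}}(\bm{\alpha})$ a first-escape time from $\cap_{\group\in\mathcal{\group}}\cup_{j\in \group}\sprg_j$.

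The main obstacle is respecting both the MPC and MPS brackets simultaneously when constructing $\alpha_i$. The MPC bound $\rho\leq\tau^{\mathcal{\group}\setminus\mathcal{\group}_i}$ is delivered directly by \Cref{MPC-markov-st}. The MPS bound $\rho\geq\tau^{\mathcal{\group}-i}$ is, fortunately, forced by the support condition: since $p_\rho\notin\stpbf_i^{\mathcal{\group}}$ almost surely, while $p_t\in\stpbf_i^{\mathcal{\group}}$ for all $t<\tau^{\mathcal{\group}-i}$ by continuity of the belief process (trivially if $p_0\notin\stpbf_i^{\mathcal{\group}}$, in which case $\tau^{\mathcal{\group}-i}=0$), we must have $\rho\geq\tau^{\mathcal{\group}-i}$, which via \Cref{MPS-st} makes $F_\rho$ an MPS of $\underline{F}_i$ as required. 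This closes the equivalence and yields the theorem.
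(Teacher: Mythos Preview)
Your proposal is correct and follows essentially the same approach as the paper: reduce the payoff via \Cref{cost}, then characterize the feasible set of induced posterior distributions as those sandwiched between $\underline{F}_i$ and $\overline{F}_i$ with support in $E_i$ (this is the paper's \Cref{MPC-MPS}), with necessity from the bracketing $\tau^{\mathcal{\group}-i}\leq\tau^{\mathcal{\group}}(\alpha_i,\bm{\alpha}_{-i})\leq\tau^{\mathcal{\group}\setminus\mathcal{\group}_i}$ and sufficiency from a Chacon--Walsh embedding. The only minor variation is the order of the construction in the sufficiency step: the paper first waits until $\tau^{\mathcal{\group}-i}$ and then appends a Chacon--Walsh solution (so the MPS lower bound is built in and the MPC upper bound is argued afterward), whereas you apply \Cref{MPC-markov-st} directly from $p_0$ and then recover $\rho\geq\tau^{\mathcal{\group}-i}$ from the support condition $p_\rho\notin\stpbf_i^{\mathcal{\group}}$---both constructions are valid and yield the same conclusion.
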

\Cref{MPE} is a special case of this result. To prove this equivalence, we need \Cref{cost} and another lemma (\Cref{MPC-MPS}).
\begin{proof}[Proof of \Cref{cost}]
    
    By Assumption \ref{assumption}, $\mathrm{d}\langle p\rangle_s/\mathrm{d}s$ is well-defined, i.e.,
	\[
	\frac{\mathrm{d}\langle p\rangle_s}{\mathrm{d}s}=\lim_{h\to0}\frac{\langle p\rangle_{s+h}-\langle p\rangle_s}{h}=\lim_{h\to0}\frac{\langle p\rangle_{s}-\langle p\rangle_{s-h}}{h}\quad a.s.
	\]
	Notice that $\mathrm{d}\langle p\rangle_s/\mathrm{d}s$ is locally integrable given Condition (iv): For any $t'>t\geq0$, by Tonelli's theorem,
	\[
	\int_{[t,t']}\mathbb{E}\left[\left|\mathrm{d}\langle p\rangle_s/\mathrm{d}s\right|\right]\mathrm{d}s=\mathbb{E}[\langle p\rangle_{t'}-\langle p\rangle_{t}]\leq\mathbb{E}[\langle p\rangle_{t'}]<\infty.
	\]
	Therefore, I can interchange differentiation and expectation as follows:
    {\small
	\begin{align*}
		\frac{\mathrm{d}\langle p\rangle_s}{\mathrm{d}s}=\lim_{h\to0}\frac{\langle p\rangle_{s}-\langle p\rangle_{s-h}}{h}=\lim_{h\to0}\mathbb{E}\left[\frac{\langle p\rangle_{s}-\langle p\rangle_{s-h}}{h}|\mathcal{F}_s\right]=\mathbb{E}\left[\lim_{h\to0}\frac{\langle p\rangle_{s}-\langle p\rangle_{s-h}}{h}|\mathcal{F}_s\right]\\
		=\mathbb{E}\left[\lim_{h\to0}\frac{\langle p\rangle_{s+h}-\langle p\rangle_{s}}{h}|\mathcal{F}_s\right]=\lim_{h\to0}\mathbb{E}\left[\frac{\langle p\rangle_{s+h}-\langle p\rangle_{s}}{h}|\mathcal{F}_s\right]=\lim_{h\to0}\mathbb{E}\left[\frac{\langle p\rangle_{s+h}-\langle p\rangle_{s}}{h}|p_s\right].
	\end{align*}}
	The last equality is because $(p_t)$ is Markov. In sum, $\mathrm{d}\langle p\rangle_s/\mathrm{d}s$ is $\sigma(p_s)$-measurable, i.e., a function $\quadvar$ exists such that $\mathrm{d}\langle p\rangle_s/\mathrm{d}s=\quadvar(p_s)$ a.s.. By Condition (iv) in Assumption \ref{assumption}, $\quadvar(p_s)$ must be strictly positive a.s..
    
    Let $\tau$ be an arbitrary a.s. finite stopping time. Applying Ito's lemma,
	\begin{equation*}
		\phi_i(p_t)=\phi_i(p_0)+\int_0^t\phi_i'(p_s)\mathrm{d}p_s+\int_0^t\frac12\phi_i''(p_s)\mathrm{d}\langle p\rangle_s.
	\end{equation*}
	By \Cref{static cost}, $\phi_i''(p_s)=2\cost_i(p_s)/\quadvar(p_s)$.  Hence, 
	\[
	\phi_i(p_t)=\phi_i(p_0)+\int_0^t\phi_i'(p_s)\mathrm{d}p_s+\int_0^t\cost_i(p_s)\mathrm{d}s.
	\]
    Since $(p_t)$ is a continuous martingale, $\int_0^{t\wedge r_n}\phi_i'(p_s)\mathrm{d}p_s$ is also a martingale. Intuitively, by the Optional Stopping Theorem, the expectation of this integral at $\tau$ is equal to zero. Because the stochastic integral is not necessarily integrable and $\tau$ is not necessarily finite, the rigorous argument has to resort to finite approximations of $\tau$, which is the same as that in the proof of Proposition 2 in \citet{morris2019wald} and omitted for brevity.
\end{proof}
\begin{lemma}
\label{MPC-MPS}
    Fix a prior $p_0$, a player $i$, and pure Markov strategies $\bm{\alpha}_{-i}$. A strategy $\alpha_i$ exists such that $F_{\tau^{\mathcal{\group}}(\alpha_i,\bm{\alpha}_{-i})}=F$ if and only if $F$ is an MPS of $F_{\tau^{\mathcal{\group}-i}(\bm{\alpha}_{-i})}$ and an MPC of $F_{\tau^{\mathcal{\group}\backslash\mathcal{\group}_i}(\bm{\alpha}_{-i})}$ with support in $\overline{\cont_i^{\mathcal{\group}}(\bm{\sprg}_{-i})}\setminus\stpbf_i^{\mathcal{\group}}(\bm{\sprg}_{-i})$.
\end{lemma}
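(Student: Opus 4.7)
The \emph{only if} direction is fully argued in the discussion immediately preceding the lemma: the sandwich $\tau^{\mathcal{\group}-i}(\bm{\alpha}_{-i})\leq\tau^{\mathcal{\group}}(\alpha_i,\bm{\alpha}_{-i})\leq\tau^{\mathcal{\group}\setminus\mathcal{\group}_i}(\bm{\alpha}_{-i})$ together with \Cref{MPS-st,MPC-st} yields the MPS and MPC conclusions, and the support statement follows because $F_{\tau^{\mathcal{\group}\setminus\mathcal{\group}_i}}$ is binary (forcing $\supp(F)\subset\overline{\cont_i^{\mathcal{\group}}}$) and collective stopping cannot occur in the open set $\stpbf_i^{\mathcal{\group}}$. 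My focus is thus on the \emph{if} direction.

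The plan for the \emph{if} direction is to first reduce to constructing a stopping time $\tau$ with $\tau^{\mathcal{\group}-i}\leq\tau\leq\tau^{\mathcal{\group}\setminus\mathcal{\group}_i}$ a.s.\ and $F_\tau=F$, and then to set $\alpha_{i,t}:=\mathbbm{1}\{t<\tau\}$. Verifying $\tau^{\mathcal{\group}}(\alpha_i,\bm{\alpha}_{-i})=\tau$ then requires two pieces. For $t<\tau$, $p_t\in\cont_i^{\mathcal{\group}}$, so every $\group\in\mathcal{\group}\setminus\mathcal{\group}_i$ still has a sampling member while every $\group\in\mathcal{\group}_i$ contains the sampling player $i$, ruling out any decisive coalition. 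At $t=\tau$, either $\tau=\tau^{\mathcal{\group}\setminus\mathcal{\group}_i}$, in which case some $\group\in\mathcal{\group}\setminus\mathcal{\group}_i$ is automatically decisive, or $\tau<\tau^{\mathcal{\group}\setminus\mathcal{\group}_i}$ and the support hypothesis gives $p_\tau\notin\stpbf_i^{\mathcal{\group}}$, producing a $\group\in\mathcal{\group}_i$ whose non-$i$ members are all stopped and that becomes decisive upon $\alpha_{i,\tau}=0$.

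To construct $\tau$, let $(\underline{p},\overline{p})$ be the connected component of $p_0$ in $\cont_i^{\mathcal{\group}}$ and $(a,b)$ be its connected component in $\stpbf_i^{\mathcal{\group}}\subset\cont_i^{\mathcal{\group}}$. The boundary cases $p_0\notin\cont_i^{\mathcal{\group}}$ and $p_0\notin\stpbf_i^{\mathcal{\group}}$ are handled respectively by $\tau\equiv 0$ and by \Cref{MPC-markov-st} directly, so I focus on $p_0\in\stpbf_i^{\mathcal{\group}}$. By continuity of $(p_t)$, $F_{\tau^{\mathcal{\group}-i}}$ is binary on $\{a,b\}$ with weights $(b-p_0)/(b-a)$ and $(p_0-a)/(b-a)$. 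The MPS hypothesis combined with Strassen's theorem furnishes a martingale decomposition $F=\tfrac{b-p_0}{b-a}F^a+\tfrac{p_0-a}{b-a}F^b$ with $F^x$ of mean $x$ for $x\in\{a,b\}$; since each $F^x$ is dominated by $F$ up to scale, both are supported in $\supp(F)\subset[\underline{p},\overline{p}]\setminus\stpbf_i^{\mathcal{\group}}$, and in particular each is an MPC of the binary distribution on $\{\underline{p},\overline{p}\}$ with mean $x$. Applying \Cref{MPC-markov-st} to the process restarted at $a$ and at $b$ yields stopping times $\tau^a,\tau^b$ embedding $F^a,F^b$ that are a.s.\ bounded above by the first escape from $(\underline{p},\overline{p})$ of the restarted process. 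Setting $\tau:=\tau^{\mathcal{\group}-i}+\tau^{p_{\tau^{\mathcal{\group}-i}}}\circ\theta_{\tau^{\mathcal{\group}-i}}$ and invoking the strong Markov property delivers $\tau^{\mathcal{\group}-i}\leq\tau\leq\tau^{\mathcal{\group}\setminus\mathcal{\group}_i}$ together with $F_\tau=F$.

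The main obstacle is that neither the upper bound $\tau\leq\tau^{\mathcal{\group}\setminus\mathcal{\group}_i}$ nor the lower bound $\tau\geq\tau^{\mathcal{\group}-i}$ alone follows from a single application of \Cref{MPC-markov-st} or \Cref{MPS-st}: the former delivers only the upper bound and the latter only the lower bound. The structural fact unlocking the combined embedding is that $F_{\tau^{\mathcal{\group}-i}}$ is binary---a consequence of path continuity and the first-escape definition---so Strassen's decomposition cleanly splits $F$ into two sub-problems, each reduced to the one-sided MPC embedding already solved by \Cref{MPC-markov-st}.
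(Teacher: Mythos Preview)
Your proof is correct and follows essentially the same two-step route as the paper: first run to $\tau^{\mathcal{\group}-i}$, then embed $F$ via the Chacon--Walsh construction while staying inside $[\underline{p},\overline{p}]$. The only cosmetic difference is that the paper invokes the Chacon--Walsh embedding directly with random initial position $p_0\sim F_{\tau^{\mathcal{\group}-i}}$ (as in the proof of \Cref{MPS-st}) rather than first splitting $F$ via Strassen into $F^a$ and $F^b$, and it omits your explicit verification that $\alpha_{i,t}=\mathbbm{1}\{t<\tau\}$ actually yields $\tau^{\mathcal{\group}}(\alpha_i,\bm{\alpha}_{-i})=\tau$.
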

\begin{proof}
    Since the previous discussion already shows the ``only if'' direction, it remains to show the ``if'' direction. Recall that $F_{\tau^{\mathcal{\group}\backslash\mathcal{\group}_i}(\bm{\alpha}_{-i})}$ is a binary policy with support $\{\underline{p}{}_i^{\mathcal{\group}}(p_0;\bm{\sprg}_{-i}),\overline{p}{}_i^{\mathcal{\group}}(p_0;\bm{\sprg}_{-i})\}\subset\overline{\cont_i^{\mathcal{\group}}(\bm{\sprg}_{-i})}$. Hence, any $F$ satisfying the condition must have its support in $[\underline{p}{}_i^{\mathcal{\group}}(p_0;\bm{\sprg}_{-i}),\overline{p}{}_i^{\mathcal{\group}}(p_0;\bm{\sprg}_{-i})]\setminus\stpbf_i^{\mathcal{\group}}(\bm{\sprg}_{-i})$. Then according to \Cref{MPS-st,MPC-markov-st}, a stopping time $\tau$ exists such that $\tau^{\mathcal{\group}-i}(\bm{\alpha}_{-i})\leq\tau\leq\tau^{\mathcal{\group}\backslash\mathcal{\group}_i}(\bm{\alpha}_{-i})$ and $F_\tau=F$. In particular, $\tau$ can be $\tau^{\mathcal{\group}-i}(\bm{\alpha}_{-i})+\tau'\circ\theta_{\tau^{\mathcal{\group}-i}(\bm{\alpha}_{-i})}$ where $\tau'$ is the Chacon and Walsh solution that embeds $F$ with $p_0\sim F_{\tau^{\mathcal{\group}-i}(\bm{\alpha}_{-i})}$ (i.e., augmenting $\tau^{\mathcal{\group}-i}(\bm{\alpha}_{-i})$ with the Chacon and Walsh solution). Because the support of $F$ is contained in $[\underline{p}{}_i^{\mathcal{\group}}(p_0;\bm{\sprg}_{-i}),\overline{p}{}_i^{\mathcal{\group}}(p_0;\bm{\sprg}_{-i})]$, it holds that the Chacon and Walsh solution $\tau'\leq\tau^{\mathcal{\group}\backslash\mathcal{\group}_i}(\bm{\alpha}_{-i})-\tau^{\mathcal{\group}-i}(\bm{\alpha}_{-i})$. It completes the proof.
\end{proof}

\begin{proof}[Proof of \Cref{MPE-general}]
    Under $\mathcal{\group}$-collective stopping, given $\bm{\alpha}_{-i}$, player $i$'s expected continuation payoff of using $\alpha_i$ at time $t$ given $p_t=p$ is 
	\begin{align*}
    &\mathbb{E}\left[u_i(p_{{\tau}^{\mathcal{\group}}(\alpha_i,\bm{\alpha}_{-i})})\mathbbm{1}_{{\tau}^{\mathcal{\group}}(\alpha_i,\bm{\alpha}_{-i})<\infty}-\int_0^{{\tau}^{\mathcal{\group}}(\alpha_i,\bm{\alpha}_{-i})} \cost_i(p_s)\mathrm{d}s\bigg|p_0=p\right]\\
    &=\int_0^1[u_i(\tilde{p})-\phi_i(\tilde{p})]\mathrm{d}F_{{\tau}^{\mathcal{\group}}(\alpha_i,\bm{\alpha}_{-i})}(\tilde{p})+\phi_i(p),
	\end{align*}
	by \Cref{cost}. According to \Cref{MPC-MPS}, $\alpha_i$ is a best response of player $i$ to $\bm{\alpha}_{-i}$ at time $t$ 
    if and only if $F_{{\tau}^{\mathcal{\group}}(\alpha_i,\bm{\alpha}_{-i})}$ (which is a binary policy) 
    is her best choice among all distributions that are MPS of $F_{\tau^{\mathcal{\group}-i}(\bm{\alpha}_{-i})}$ and MPC of $F_{\tau^{\mathcal{\group}\backslash\mathcal{\group}_i}(\bm{\alpha}_{-i})}$ with support in $\overline{\cont_i^{\mathcal{\group}}(\bm{\sprg}_{-i})}\setminus\stpbf_i^{\mathcal{\group}}(\bm{\sprg}_{-i})$. Therefore, $\alpha_i$ is sequentially rational for player $i$ at $p_t=p$ under unilateral stopping if and only if $F_{{\tau}^{\mathcal{\group}}(\alpha_i,\bm{\alpha}_{-i})}$ is a binary policy solution to Problem (\ref{G-collective}) with $p_0=p$, $\underline{F}{}_i=F_{\tau^{\mathcal{\group}-\{i\}}(\bm{\alpha}_{-i})}$, $\overline{F}_i=F_{\tau^{\mathcal{\group}\backslash\mathcal{\group}_i}(\bm{\alpha}_{-i})}$, and $E_i=\overline{\cont_i^{\mathcal{\group}}(\bm{\alpha}_{-i})}\setminus\stpbf_i^{\mathcal{\group}}(\bm{\alpha}_{-i})$.
\end{proof}

\subsection{Equilibrium Analysis}

\begin{proof}[Proof of \Cref{concave,thm-g-collective}] 
Conditions (i) and (ii) are summarized as
\[
U_i\left(\underline{p}(p;\sprg),\overline{p}(p;\sprg);p\right)=V_i^{\mathcal{\group}}\left(p;\bm{\sprg}_{-i}\right),\quad\text{for all }p\in[0,1],
\]
where $U_i$ is defined in \Cref{ex-ante-payoff} with $\underline{p}(p;\sprg):=\sup\{p':p'\in[0,p]\setminus\sprg\}$ and $\overline{p}(p;\sprg):=\inf\{p':p'\in[p,1]\setminus\sprg\}$.
 
For the ``if'' part, fix an arbitrary $p\in[0,1]$. Let $\underline{p}:=\underline{p}(p;\sprg)$, $\overline{p}:=\overline{p}(p;\sprg)$, and $F_{\underline{p},\overline{p}}$ be the binary policy with support $\{\overline{p},\underline{p}\}$ and $\mathbb{E}_{F_{\underline{p},\overline{p}}}[\tilde{p}]=p$. Let $\overline{F}_i$ denote the binary policy with support $\{\underline{p}{}_i^{\mathcal{\group}}(p;\bm{\sprg}_{-i}),\overline{p}{}_i^{\mathcal{\group}}(p;\bm{\sprg}_{-i})\}$ and $\mathbb{E}_{\overline{F}_i}[\tilde{p}]=p$. By construction, $V_i^{\mathcal{\group}}(p;\bm{\sprg}_{-i})\geq\int_0^1[u_i(\tilde{p})-\phi_i(\tilde{p})]\mathrm{d}F(\tilde{p})$ for any $F$ that is an MPC of $\overline{F}_i$ with support in $\overline{\cont_i^{\mathcal{\group}}(\bm{\sprg}_{-i})}\setminus\stpbf_i^{\mathcal{\group}}(\bm{\sprg}_{-i})$. Note that $\overline{F}_i=F_{\tau^{\mathcal{\group}\backslash\mathcal{\group}_i}(\bm{\alpha}_{-i})}$, and any $F$ with $\supp(F)\subset\overline{\cont_i^{\mathcal{\group}}(\bm{\sprg}_{-i})}\setminus\stpbf_i^{\mathcal{\group}}(\bm{\sprg}_{-i})$ must be an MPS of $F_{\tau^{\mathcal{\group}-\{i\}}(\bm{\alpha}_{-i})}$. Since $\int_0^1[u_i(\tilde{p})-\phi_i(\tilde{p})]\mathrm{d}F_{\underline{p},\overline{p}}(\tilde{p})=U_i(\underline{p},\overline{p};p)=V_i^{\mathcal{\group}}(p;\bm{\sprg}_{-i})$ where the second equation is by condition (i) or (ii), $F_{\underline{p},\overline{p}}$ is optimal among all distributions that are MPS of $F_{\tau^{\mathcal{\group}-i}(\bm{\alpha}_{-i})}$ and MPC of $F_{\tau^{\mathcal{\group}\backslash\mathcal{\group}_i}(\bm{\alpha}_{-i})}$ with support in $\overline{\cont_i^{\mathcal{\group}}(\bm{\sprg}_{-i})}\setminus\stpbf_i^{\mathcal{\group}}(\bm{\sprg}_{-i})$. 
\Cref{MPE-general} then implies that the strategy profile associated with $\bm{\sprg}$, i.e., $\bm{\alpha}$ where $\alpha_i(p)=\mathbbm{1}_{p\in\sprg_i}$, is sequentially rational for player $i$ at $p_t=p$ since $F_{{\tau}^{\mathcal{\group}}(\bm{\alpha})}=F_{\underline{p},\overline{p}}$. This is true for any $p$ and any $i\in N$, hence $\bm{\sprg}$ is an equilibrium.

For the ``only if'' part, again fix an arbitrary $p\in [0,1]$ with $\underline{p}=\underline{p}(p;\sprg)$ and $\overline{p}=\overline{p}(p;\sprg)$. By definition, $U_i(\underline{p},\overline{p};p)\leq V_i^{\mathcal{\group}}(p;\bm{\sprg}_{-i})$ for all $i\in N$. Suppose that one of the inequalities is strict, say $U_1(\underline{p},\overline{p};p)<V_1^{\mathcal{\group}}(p;\bm{\sprg}_{-1})$. By the definition of concave closure, for any $\epsilon>0$, there exists a distribution $F$ over $[\underline{p}{}_i^{\mathcal{\group}}(p;\bm{\sprg}_{-i}),\overline{p}{}_i^{\mathcal{\group}}(p;\bm{\sprg}_{-i})]\setminus\stpbf_i^{\mathcal{\group}}(\bm{\sprg}_{-i})\subset\overline{\cont_i^{\mathcal{\group}}(\bm{\sprg}_{-i})}\setminus\stpbf_i^{\mathcal{\group}}(\bm{\sprg}_{-i})$ such that $\mathbb{E}_F[\tilde{p}]=p$ and $\int_{0}^1[u_1(\tilde{p})-\phi_1(\tilde{p})]\mathrm{d}F(\tilde{p})\in\big(\underline{p},\overline{p};p),V_1^{\mathcal{\group}}(p;\bm{\sprg}_{-1})\big]$. 
Hence, $F$ is strictly better than $F_{\underline{p},\overline{p}}$ for player 1. Since $F$ must be an MPS of $F_{\tau^{\mathcal{\group}-\{i\}}(\bm{\alpha}_{-i})}$ 
and an MPC of $F_{\tau^{\mathcal{\group}\backslash\mathcal{\group}_i}(\bm{\alpha}_{-i})}$, 
this contradicts to $F_{\underline{p},\overline{p}}=F_{{\tau}^{\mathcal{\group}}(\bm{\alpha})}$ where $\alpha_i(p)=\mathbbm{1}_{p\in\sprg_i}$ being a solution to Problem (\ref{G-collective}) with $p_0=p$, $\underline{F}{}_i=F_{\tau^{\mathcal{\group}-\{i\}}(\bm{\alpha}_{-i})}$, $\overline{F}_i=F_{\tau^{\mathcal{\group}\backslash\mathcal{\group}_i}(\bm{\alpha}_{-i})}$, and $E_i=\overline{\cont_i^{\mathcal{\group}}(\bm{\alpha}_{-i})}\setminus\stpbf_i^{\mathcal{\group}}(\bm{\alpha}_{-i})$ by \Cref{MPE-general}. Therefore, conditions (i) and (ii) must hold.
\end{proof}

Before proving the comparative statics, it is useful to introduce some lemmas. \Cref{reduction-uni-una} shows that we only need to consider players in $N_{uni}(\mathcal{\group})$ and $N_{una}(\mathcal{\group})$. \Cref{lattice-general} establishes the semi-lattice structure.
\begin{lemma}
\label{reduction-uni-una}
    $\sprg$ is an equilibrium under $\mathcal{\group}$-collective stopping if and only if for any $p\in[0,1]$, $U_i(\underline{p}(p;\sprg),\overline{p}(p;\sprg);p)=V_i^{IN}(p)$ for $i\in N_{uni}(\mathcal{\group})$ and $U_j(\underline{p}(p;\sprg),\overline{p}(p;\sprg);p)=V_j^{OUT}(p)$ for $j\in N_{una}(\mathcal{\group})$.
\end{lemma}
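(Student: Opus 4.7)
The plan is to prove both directions, relying on \Cref{thm-g-collective} as the main characterization. For the forward direction, if $\sprg$ is an equilibrium, some profile $\bm{\sprg}$ induces collective region $\sprg$ with $U_k(\sprg;p)=V_k^{\mathcal{\group}}(p;\bm{\sprg}_{-k})$ for every $k$ and $p$ by \Cref{thm-g-collective}. Specializing to $i\in N_{uni}(\mathcal{\group})$, $\{i\}\in\mathcal{\group}$ forces $\stpbf_i^{\mathcal{\group}}=\emptyset$ and $v_i^{\mathcal{\group}}=u_i-\phi_i$, while $\sprg\subset\cont_i^{\mathcal{\group}}$ yields the interval inclusion $[\underline{p}(p;\sprg),\overline{p}(p;\sprg)]\subset[\underline{p}{}_i^{\mathcal{\group}}(p),\overline{p}{}_i^{\mathcal{\group}}(p)]$; monotonicity of the concave closure then gives $V_i^{IN}(p;\sprg)\leq V_i^{\mathcal{\group}}(p;\bm{\sprg}_{-i})$. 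Sandwiched with the trivial lower bound $U_i(\sprg;p)\leq V_i^{IN}(p;\sprg)$ (coming from feasibility of the boundary binary policy) and the equilibrium equality, we get $U_i(\sprg;p)=V_i^{IN}(p;\sprg)$. A symmetric argument for $j\in N_{una}(\mathcal{\group})$ uses $\cont_j^{\mathcal{\group}}=(0,1)$ together with $\stpbf_j^{\mathcal{\group}}\subset\sprg$ (from the identity $\sprg=\sprg_j\cup\stpbf_j^{\mathcal{\group}}$) to conclude $v_j^{OUT}\leq v_j^{\mathcal{\group}}$ pointwise, hence $V_j^{OUT}(p;\sprg)\leq V_j^{\mathcal{\group}}(p;\bm{\sprg}_{-j})$ and the desired equality.

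For the reverse direction, I would produce the explicit symmetric candidate profile $\bm{\sprg}=(\sprg,\ldots,\sprg)$, which induces collective region $\cap_{\group\in\mathcal{\group}}\cup_{j\in\group}\sprg=\sprg$, and verify the equilibrium condition of \Cref{thm-g-collective} for every player $k$ case by case. The key observation is that with all others using $\sprg$, each player's deviation space collapses in a clean way. If $k\in N_{uni}(\mathcal{\group})\setminus N_{una}(\mathcal{\group})$, direct computation yields $\cont_k^{\mathcal{\group}}=\sprg$ and $\stpbf_k^{\mathcal{\group}}=\emptyset$, so $V_k^{\mathcal{\group}}$ collapses precisely to $V_k^{IN}(p;\sprg)$, which equals $U_k(\sprg;p)$ by hypothesis. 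If $k\in N_{una}(\mathcal{\group})\setminus N_{uni}(\mathcal{\group})$, symmetrically $\cont_k^{\mathcal{\group}}=(0,1)$ and $\stpbf_k^{\mathcal{\group}}=\sprg$, giving $V_k^{\mathcal{\group}}=V_k^{OUT}(p;\sprg)=U_k(\sprg;p)$. If $k\notin N_{uni}(\mathcal{\group})\cup N_{una}(\mathcal{\group})$, I would show $\cont_k^{\mathcal{\group}}=\stpbf_k^{\mathcal{\group}}=\sprg$, so the only feasible distribution for $k$ is the boundary binary policy itself, trivially implying $V_k^{\mathcal{\group}}(p;\bm{\sprg}_{-k})=U_k(\sprg;p)$.

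The main obstacle is the dictator case $k\in N_{uni}(\mathcal{\group})\cap N_{una}(\mathcal{\group})$, in which $\cont_k^{\mathcal{\group}}=(0,1)$, $\stpbf_k^{\mathcal{\group}}=\emptyset$, and $V_k^{\mathcal{\group}}(p)$ is the unrestricted concave closure of $u_k-\phi_k$ over $[0,1]$. Here I must show that the two hypotheses $V_k^{IN}(p;\sprg)=U_k(\sprg;p)$ and $V_k^{OUT}(p;\sprg)=U_k(\sprg;p)$ together imply no Bayes-plausible distribution on $[0,1]$ with mean $p$ can beat $U_k(\sprg;p)$. Using a Carath\'eodory reduction to binary policies $\{q_1,q_2\}$, I would split on whether each endpoint lies in $\sprg$. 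Binary policies with both endpoints outside $\sprg$ are bounded by $V_k^{OUT}$; those with both endpoints inside the component of $\sprg$ containing the mean are bounded by $V_k^{IN}$. For the remaining mixed cases, I would apply the bound $v(q_\ell)\leq $ secant from $V_k^{IN}=U_k$ to replace each in-$\sprg$ endpoint by its Bayes-plausible binarization on the boundaries of the component it occupies, yielding an upper-bounding distribution supported entirely in $[0,1]\setminus\sprg$; this distribution then decomposes as a convex combination of binary policies outside $\sprg$ and is bounded by $V_k^{OUT}=U_k$. This splitting-and-decomposition is the delicate part of the proof; the rest reduces to bookkeeping with $\cont_k^{\mathcal{\group}}$ and $\stpbf_k^{\mathcal{\group}}$ under the symmetric profile.
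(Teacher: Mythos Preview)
Your proposal is correct and follows essentially the same route as the paper: a sandwich argument for the forward direction (comparing $V_k^{\mathcal{\group}}(p;\bm{\sprg}_{-k})$ with $V_k^{IN}$ or $V_k^{OUT}$ via the inclusions $\stpbf_k^{\mathcal{\group}}\subset\sprg\subset\cont_k^{\mathcal{\group}}$), and the symmetric profile $(\sprg,\ldots,\sprg)$ with case-by-case computation of $\cont_k^{\mathcal{\group}},\stpbf_k^{\mathcal{\group}}$ for the reverse direction. The paper writes the forward sandwich as $U_i\leq V_i^{\mathcal{\group}}(p;\bm{\widetilde\sprg}_{-i})\leq V_i^{\mathcal{\group}}(p;\bm{\sprg}_{-i})$ with $\bm{\widetilde\sprg}=(\sprg,\ldots,\sprg)$, which is exactly your $V_i^{IN}\leq V_i^{\mathcal{\group}}$ in different notation.

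The only point of divergence is the dictator case $k\in N_{uni}(\mathcal{\group})\cap N_{una}(\mathcal{\group})$. The paper dispatches it in the ``only if'' direction by observing that since $V_k^{\mathcal{\group}}$ is the unrestricted concave closure regardless of $\bm{\sprg}_{-k}$, equilibrium forces $\sprg$ to be $k$'s individual optimum, whence $V_k^{IN}=V_k^{OUT}=U_k$ trivially; for the ``if'' direction it is terse. Your Carath\'eodory-and-splitting argument for this case is correct but heavier than needed. A one-line shortcut: $V_k^{OUT}(\cdot;\sprg)$ is by construction concave (it is a concave closure), so the hypothesis $U_k(\sprg;\cdot)=V_k^{OUT}(\cdot;\sprg)$ makes $U_k(\sprg;\cdot)$ concave; the hypothesis $U_k(\sprg;\cdot)=V_k^{IN}(\cdot;\sprg)$ gives $u_k-\phi_k\leq U_k(\sprg;\cdot)$ on each component and hence everywhere. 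Thus $U_k(\sprg;\cdot)$ is a concave majorant of $u_k-\phi_k$, so it dominates---and therefore equals---the full concave closure $V_k^{\mathcal{\group}}$.
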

\begin{proof}[Proof of \Cref{reduction-uni-una}]
    Fix an equilibrium $(\sprg_i)_{i\in N}$ with $\sprg=\cap_{\group\in\mathcal{\group}}\cup_{j\in \group}\sprg_j$. When all players use $\widetilde{\sprg}_i=\sprg$, for any player $i\not\in N_{uni}(\mathcal{\group})\cup N_{una}(\mathcal{\group})$, $\cont_i^{\mathcal{\group}}(\bm{\widetilde\sprg}_{-i})=\stpbf_i^{\mathcal{\group}}(\bm{\widetilde\sprg}_{-i})=\sprg$. Therefore, we only need to consider the incentive of players in $N_{uni}(\mathcal{\group})\cup N_{una}(\mathcal{\group})$. The ``if'' part is thus straightforward as $\bm{\tilde{\sprg}}$ is then an equilibrium. It remains to prove the ``only if'' part.

    First, consider $i\in N_{uni}\setminus N_{una}$. Then, $\cont_i^{\mathcal{\group}}(\bm{\widetilde\sprg}_{-i})=\sprg\subset\cont_i^{\mathcal{\group}}(\bm{\sprg}_{-i})$ and $\stpbf_i^{\mathcal{\group}}(\bm{\widetilde\sprg}_{-i})=\stpbf_i^{\mathcal{\group}}(\bm{\sprg}_{-i})=\emptyset$. Hence, $U_i(\underline{p}(p;\sprg),\overline{p}(p;\sprg);p)\leq V_i^{\mathcal{\group}}(p;\bm{\widetilde\sprg}_{-i})\leq V_i^{\mathcal{\group}}(p;\bm{\sprg}_{-i})$ for any $p$. Because $\bm{\sprg}$ is an equilibrium, $U_i(\underline{p}(p;\sprg),\overline{p}(p;\sprg);p)=V_i^{\mathcal{\group}}(p;\bm{\sprg}_{-i})$, so $U_i(\underline{p}(p;\sprg),\overline{p}(p;\sprg);p)=V_i^{\mathcal{\group}}(p;\bm{\widetilde\sprg}_{-i})=V_i^{IN}(p;\sprg)$ for any $p$.

    Second, for $i\in N_{una}\setminus N_{uni}$, $\cont_i^{\mathcal{\group}}(\bm{\widetilde\sprg}_{-i})=\cont_i^{\mathcal{\group}}(\bm{\sprg}_{-i})=(0,1)$ and $\stpbf_i^{\mathcal{\group}}(\bm{\widetilde\sprg}_{-i})=\sprg\supset\stpbf_i^{\mathcal{\group}}(\bm{\sprg}_{-i})$. Hence, $U_i(\underline{p}(p;\sprg),\overline{p}(p;\sprg);p)\leq V_i^{\mathcal{\group}}(p;\bm{\widetilde\sprg}_{-i})\leq V_i^{\mathcal{\group}}(p;\bm{\sprg}_{-i})=U_i(\underline{p}(p;\sprg),\overline{p}(p;\sprg);p)$. Furthermore, $V_i^{\mathcal{\group}}(p;\bm{\widetilde\sprg}_{-i})=V_i^{OUT}(p;\sprg)$ for any $p$.

    Finally, for $i\in N_{uni}\cap N_{una}$, $\cont_i^{\mathcal{\group}}=(0,1)$ and $\stpbf_i^{\mathcal{\group}}=[0,1]$, 
    so $\sprg$ must be player $i$'s individual optimum. Hence, she has no incentive to deviate from $\bm{\widetilde\sprg}$, and obviously $U_i(\underline{p}(p;\sprg),\overline{p}(p;\sprg);p)=V_i^{IN}(p;\sprg)=V_i^{OUT}(p;\sprg)$. 
\end{proof}

\begin{lemma}
\label{lattice-general}
    If $\sprg$ and $\sprg'$ are equilibria, $\sprg\cup\sprg'$ is also an equilibrium.
\end{lemma}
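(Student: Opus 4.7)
The plan is to begin by invoking Lemma~\ref{reduction-uni-una}, which characterizes equilibria under $\mathcal{\group}$-collective stopping via the conditions $U_i(\sprg;p) = V_i^{IN}(p;\sprg)$ for $i \in N_{uni}(\mathcal{\group})$ and $U_j(\sprg;p) = V_j^{OUT}(p;\sprg)$ for $j \in N_{una}(\mathcal{\group})$, for all $p \in [0,1]$. Given that $\sprg$ and $\sprg'$ already satisfy these conditions, my task reduces to verifying them for $\sprg\cup\sprg'$. Writing $\tilde{f}_i := u_i - \phi_i$, the $V^{IN}$ condition is equivalent to chord-dominance: on each connected component $(a,b)$ of the sampling region, $\tilde{f}_i$ lies weakly below the chord $L_{ab}^i(q):=\tfrac{b-q}{b-a}\tilde{f}_i(a)+\tfrac{q-a}{b-a}\tilde{f}_i(b)$. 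The $V^{OUT}$ condition admits the dual chord-majorization characterization: on each component $(a,b)$ of the sampling region, no convex combination of $\tilde{f}_j$ supported in the sampling region's complement with mean $p\in(a,b)$ exceeds $L_{ab}^j(p)$.

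The key algebraic step for the $V^{IN}$ case is a two-interval extension lemma: if $\tilde{f}$ is chord-dominated on $[a,b]$ and on $[c,d]$ with $a\leq c<b<d$, then it is chord-dominated on $[a,d]$. I would establish this by first showing $\tilde{f}(b)\leq L^*(b)$, where $L^*$ denotes the chord on $[a,d]$. Substituting chord-dominance on $[a,b]$ at $q=c$ into chord-dominance on $[c,d]$ at $q=b$, and simplifying via the algebraic identity $(d-c)(b-a)-(d-b)(c-a)=(d-a)(b-c)$, yields $\tilde{f}(b)(d-a)\leq(d-b)\tilde{f}(a)+(b-a)\tilde{f}(d)$, i.e., $\tilde{f}(b)\leq L^*(b)$. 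Since $L_{ab}$ agrees with $L^*$ at $a$ and satisfies $L_{ab}(b)=\tilde{f}(b)\leq L^*(b)$, linearity gives $L_{ab}\leq L^*$ on $[a,b]$, so $\tilde{f}\leq L_{ab}\leq L^*$ on $[a,b]$; a symmetric argument handles $[b,d]$.

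The main obstacle I anticipate is extending this two-interval result to a connected component $(\underline{p}^*,\overline{p}^*)$ of $\sprg\cup\sprg'$, which in general may be built from countably many overlapping intervals alternating between $\sprg$ and $\sprg'$. My strategy is a contradiction argument: if chord-dominance fails at some interior $q$, then $q\in(a,b)$ for some component interval of $\sprg$ or $\sprg'$, and chord-dominance on $[a,b]$ combined with $\tilde{f}(q)>L^*(q)$ forces $\tilde{f}(a)>L^*(a)$ or $\tilde{f}(b)>L^*(b)$. These endpoints are either the component endpoints $\underline{p}^*,\overline{p}^*$ (where the chord $L^*$ matches $\tilde{f}$, yielding a contradiction) or interior to an interval of the other family, in which case the argument iterates. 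Controlling this propagation by considering the supremum of the violation over the countable set of interval boundary points, and noting that this supremum cannot be attained strictly inside the component under the propagation, closes the argument. For the $V^{OUT}$ case, I would run the dual argument: decompose any external distribution with mean in the union component into sub-distributions bracketing $\underline{p}^*$ and $\overline{p}^*$, propagate bounds via chord-majorization on the underlying intervals of $\sprg$ and $\sprg'$, and use the equilibrium conditions at the boundary points (where $V^{OUT}$ equals $\tilde{f}_j$) to pin down the overall bound by $L^*$.
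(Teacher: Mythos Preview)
Your treatment of the $V^{IN}$ (unilateral) half is essentially the paper's proof: the paper also invokes Lemma~\ref{reduction-uni-una}, establishes the same two-interval extension for overlapping intervals (your identity $(d-c)(b-a)-(d-b)(c-a)=(d-a)(b-c)$ is a tidier write-up of what the paper does ``by some algebra''), and then passes to general open sets by breaking them into their countably many component intervals. Where you propose a propagation/supremum argument for the countable step, the paper simply writes ``by induction''; neither is fully airtight for infinitely many intervals without some regularity of $\tilde f_i$ at limit endpoints, but the level of rigor is comparable.

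The $V^{OUT}$ (unanimous) half is where you diverge, and your sketch is underspecified. The paper does \emph{not} decompose external distributions or propagate bounds across the underlying intervals of $\sprg$ and $\sprg'$. Instead it proves a one-sided monotonicity lemma (Lemma~\ref{unanimous-larger}): whenever $\sprg\subset\sprg''$ and $U_j(\sprg;\cdot)=V_j^{OUT}(\cdot;\sprg)$, automatically $U_j(\sprg'';\cdot)=V_j^{OUT}(\cdot;\sprg'')$. The argument uses two facts you do not exploit. First, enlarging the sampling region shrinks the outward closure: $V_j^{OUT}(\cdot;\sprg'')\le V_j^{OUT}(\cdot;\sprg)$. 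Second, $V_j^{OUT}(\cdot;\sprg)$ is a \emph{concave} function that, by the equilibrium hypothesis on $\sprg$, equals $\tilde f_j$ at the endpoints $\underline p,\overline p$ of any component of $\sprg''$ (these endpoints lie in $[0,1]\setminus\sprg''\subset[0,1]\setminus\sprg$). The secant through $(\underline p,\tilde f_j(\underline p))$ and $(\overline p,\tilde f_j(\overline p))$ therefore lies above $V_j^{OUT}(\cdot;\sprg)$ for all $\tilde p\notin(\underline p,\overline p)$, making it a global affine majorant of $v_j^{OUT}(\cdot;\sprg'')$ and forcing $V_j^{OUT}(\cdot;\sprg'')\le L^*$ in one stroke. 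Taking $\sprg''=\sprg\cup\sprg'$ finishes the unanimous case using only the equilibrium hypothesis on $\sprg$---the hypothesis on $\sprg'$ is not needed. Your decomposition route might be salvageable, but the step ``propagate bounds via chord-majorization on the underlying intervals'' is not concrete, and without using concavity of $V_j^{OUT}(\cdot;\sprg)$ you are missing the lever that makes the paper's argument short.
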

\begin{proof}
    By \Cref{reduction-uni-una}, we only need consider $N_{uni}$ and $N_{una}$: Players in $N_{uni}$ can deviate by stopping earlier, while those in $N_{una}$ can deviate by stopping later. Since it is impossible to have many players in both sets and the case with $N_{uni}=N_{una}=\{i\}$ is trivial, the problem reduces to unilateral stopping among $N_{uni}$ or unanimous stopping among $N_{una}$.
    
\paragraph{Unilateral Stopping} First, consider the simple case: 
$\sprg=(\underline{p},\overline{p})$ and $\sprg'=(\underline{p}',\overline{p}{}')$. Let $\sprg'':=\sprg\cup\sprg'$. Recall that equilibrium condition (ii) in \Cref{concave} for $p\in[0,1]\setminus\sprg''$ is always satisfied under unilateral stopping given the definition of $V_i^{IN}$. It thus suffices to check condition (i) for $p\in\sprg''$.

    The case where either $\sprg''=(\underline{p},\overline{p})$ or $\sprg''=(\underline{p}',\overline{p}{}')$ is trivial. When $(\underline{p},\overline{p})\cap(\underline{p}',\overline{p}{}')=\emptyset$, condition (i) holds true since for $p\in \sprg=(\underline{p},\overline{p})$,
    \[
    V_i^{IN}\left(p;\sprg''\right)=V_i^{IN}\left(p;\sprg\right)=U_i\big(\underline{p}(p;\sprg),\overline{p}(p;\sprg);p\big)=U_i\big(\underline{p}(p;\sprg''),\overline{p}(p;\sprg'');p\big).
    \]
    Similarly for $p\in\sprg'=(\underline{p}',\overline{p}{}')$.
    
    It remains to consider the case where $(\underline{p},\overline{p})$ and $(\underline{p}',\overline{p}{}')$ are intersecting but not nested. Without loss assume $\underline{p}\leq\underline{p}'\leq\overline{p}\leq\overline{p}{}'$, so $\sprg''=(\underline{p},\overline{p}{}')$. Since $\sprg=(\underline{p},\overline{p})$ is an equilibrium, by \Cref{concave}, for any $p\in(\underline{p},\overline{p})$,
    \[
    \frac{\overline{p}-p}{\overline{p}-\underline{p}}[u_i(\underline{p})-\phi_i(\underline{p})]+\frac{p-\underline{p}}{\overline{p}-\underline{p}}[u_i(\overline{p})-\phi_i(\overline{p})]=V_i^{IN}\left(p;\sprg\right)\geq u_i(p)-\phi_i(p)
    \]
    Therefore, the line connecting points $(\underline{p},u_i(\underline{p})-\phi_i(\underline{p}))$ and $(\overline{p},u_i(\overline{p})-\phi_i(\overline{p}))$ is above the graph of $u_i-\phi_i$ within $(\underline{p},\overline{p})$.
    
    Similarly, for $\sprg'=(\underline{p}',\overline{p}{}')$, we also have: for any  $p\in(\underline{p}',\overline{p}{}'),$
    \[
    \frac{\overline{p}{}'-p}{\overline{p}{}'-\underline{p}'}[u_i(\underline{p}')-\phi_i(\underline{p}')]+\frac{p-\underline{p}'}{\overline{p}{}'-\underline{p}'}[u_i(\overline{p}{}')-\phi_i(\overline{p}{}')]=V_i^{IN}\left(p;\sprg'\right)\geq u_i(p)-\phi_i(p),
    \]
    i.e., the line connecting points $(\underline{p}',u_i(\underline{p}')-\phi_i(\underline{p}'))$ and $(\overline{p}{}',u_i(\overline{p}{}')-\phi_i(\overline{p}{}'))$ is above the graph of $u_i-\phi_i$ within $(\underline{p}',\overline{p}{}')$.
    
    By the above two inequalities and some algebra, we can derive
    \[
    \frac{\overline{p}{}'-p}{\overline{p}{}'-\underline{p}}[u_i(\underline{p})-\phi_i(\underline{p})]+\frac{p-\underline{p}}{\overline{p}{}'-\underline{p}}[u_i(\overline{p}{}')-\phi_i(\overline{p}{}')]\geq u_i(p)-\phi_i(p),\quad\forall p\in(\underline{p},\overline{p}{}').
    \]
    \[
    \Rightarrow\frac{\overline{p}{}'-p}{\overline{p}{}'-\underline{p}}[u_i(\underline{p})-\phi_i(\underline{p})]+\frac{p-\underline{p}}{\overline{p}{}'-\underline{p}}[u_i(\overline{p}{}')-\phi_i(\overline{p}{}')]=V_i^{IN}\left(p;(\underline{p},\overline{p}{}')\right),\forall p\in(\underline{p},\overline{p}{}').
    \]
    As a result, $\sprg''=(\underline{p},\overline{p}{}')$ is also an equilibrium.

    In general, $\sprg$ and $\sprg'$ 
    are open sets, which are unions of countably many (disjoint) open intervals. 
    The previous argument shows that the union of any two interval equilibria is also an equilibrium. By induction, the union of countably many interval equilibria is also an equilibrium.

\paragraph{Unanimous Stopping.}
It suffices to prove a stronger result: 
\begin{lemma}
\label{unanimous-larger}
    For $\sprg\subset\sprg'$, if $U_i(\underline{p}(p;\sprg),\overline{p}(p;\sprg);p)=V_i^{OUT}(p;\sprg)$ for all $p\in[0,1]$, then $U_i(\underline{p}(p;\sprg'),\overline{p}(p;\sprg');p)=V_i^{OUT}(p;\sprg')$ for all $p\in[0,1]$.
\end{lemma}
\begin{proof}
By definition of $V_i^{OUT}$, for any $\sprg\subset\sprg'$,
\[
V_i^{OUT}\left(p;\sprg'\right)\leq V_i^{OUT}\left(p;\sprg\right),\quad\forall p\in[0,1].
\]
For $p\in[0,1]\setminus\sprg'$, $u_i(p)-\phi_i(p)\leq V_i^{OUT}(p;\sprg')\leq V_i^{OUT}(p;\sprg_i^*)=u_i(p)-\phi_i(p)$, where the first inequality is by the definition of $V_i^{OUT}$ and the equality is by assumption and $p\in[0,1]\setminus\sprg$. Therefore, $V_i^{OUT}(p;\sprg')=u_i(p)-\phi_i(p)$.

    Then for $p\in\sprg'$, let $\underline{p}:=\underline{p}(p;\sprg')$ and $\overline{p}:=\overline{p}(p;\sprg')$; note that $\underline{p},\overline{p}\in[0,1]\setminus\sprg'\subset[0,1]\setminus\sprg$. Since $V_i^{OUT}(\cdot;\sprg)$ is concave and $V_i^{OUT}(\tilde{p};\sprg)=u_i(\tilde{p})-\phi_i(\tilde{p})$ at $\tilde{p}=\underline{p},\overline{p}$ by assumption, it holds that for any $\tilde{p}\in[0,1]\setminus\sprg'$,
    \[
    \frac{\overline{p}-\tilde{p}}{\overline{p}-\underline{p}}[u_i(\underline{p})-\phi_i(\underline{p})]+\frac{\tilde{p}-\underline{p}}{\overline{p}-\underline{p}}[u_i(\overline{p})-\phi_i(\overline{p})]\geq V_i^{OUT}(\tilde{p};\sprg)\geq V_i^{OUT}(\tilde{p};\sprg').
    \]
    As a result, by the definition of $V_i^{OUT}$, it must be that within $(\underline{p},\overline{p})$, $V_i^{OUT}(\cdot;\sprg')$ coincides with the line connecting $(\underline{p},u_i(\underline{p})-\phi_i(\underline{p}))$ and $(\overline{p},u_i(\overline{p})-\phi_i(\overline{p}))$, therefore $V_i^{OUT}(p;\sprg')=\frac{\overline{p}-p}{\overline{p}-\underline{p}}[u_i(\underline{p})-\phi_i(\underline{p})]+\frac{p-\underline{p}}{\overline{p}-\underline{p}}[u_i(\overline{p})-\phi_i(\overline{p})]$.

    In sum, $U_i(\underline{p}(p;\sprg'),\overline{p}(p;\sprg');p)=V_i^{OUT}(p;\sprg')$ for all $p\in[0,1]$.
\end{proof}
As an implication of \Cref{unanimous-larger}, $\sprg\cup\sprg'\supset\sprg$ is also an equilibrium.
\end{proof}
\begin{proof}[Proof of \Cref{pareto,pareto-general}]
    Again given \Cref{reduction-uni-una}, the case with $N_{uni}=N_{una}=\{i\}$ for $i\in N$ is trivial, and the problem reduces to either unilateral or unanimous stopping.

    \paragraph{Unilateral Stopping} Fix a Pareto weight $\bm{\lambda}\in\mathbb{R}^N_+\setminus\{\bm{0}\}$. Construct two virtual players with 
    the same preference $W(\cdot;\bm{\lambda})=\sum_{i\in N}\lambda_i(u_i-c_i)$. Both $\sprg$ and the $\bm{\lambda}$-efficient sampling region $\sprg_{\bm{\lambda}}^*$ are equilibria under unilateral stopping between these two virtual players. Therefore, by \Cref{lattice-general}, $\sprg\cup \sprg_{\bm{\lambda}}^*$ is also an equilibrium, implying that the virtual player with preference $W(\cdot;\bm{\lambda})$ prefers $\sprg\cup\sprg_{\bm{\lambda}}^*$ to $\sprg_{\bm{\lambda}}^*$. Because $\sprg_{\bm{\lambda}}^*$ is uniquely $\bm{\lambda}$-efficient, it must be $\sprg\cup \sprg_{\bm{\lambda}}^*=\sprg_{\bm{\lambda}}^*$ and thus $\sprg\subset\sprg_{\bm{\lambda}}^*$.
    

    \paragraph{Unanimous Stopping.} Suppose $\sprg\subset \sprg_{\bm{\lambda}}^*$, then $V_{i}^{OUT}(\cdot;\sprg)\geq V_{i}^{OUT}(\cdot;\sprg_{\bm{\lambda}}^*)$. Since $\sprg$ is an equilibrium, $\forall p\in[0,1]$,
    \[
    \sum_{i\in N}\lambda_iU_i\big(\underline{p}(p;\sprg),\overline{p}(p;\sprg);p\big)=\sum_{i\in N}\lambda_iV_{i}^{OUT}(p;\sprg)\geq\sum_{i\in N}\lambda_iV_{i}^{OUT}(p;\sprg_{\bm{\lambda}}^*)=\hat{W}(p;\bm{\lambda}).
    \]
    Hence, $\sprg$ is also $\bm{\lambda}$-efficient. By uniqueness, it must be $\sprg=\sprg_{\bm{\lambda}}^*$.
\end{proof}
\begin{proof}[Proof of \Cref{control-sharing}]
    Suppose that $\sprg\supsetneq\sprg_i^*\cup\sprg_{-i}^*$. By \Cref{unanimous-larger}, $\sprg_i^*\cup\sprg_{-i}^*\supset\sprg_i^*$ is an equilibrium. A contradiction to $\sprg$ being minimal.
\end{proof}
\begin{proof}[Proof of \Cref{misalign}] 
    Let $T^b$ denote the set of equilibria under misalignment $b$. 
    Consider any $\sprg\in T^b$. Fix a prior $p$ with $\underline{p}:=\underline{p}(p;\sprg)$ and $\overline{p}:=\overline{p}(p;\sprg)$. Let $F_{\underline{p},\overline{p}}$ denote the corresponding binary policy. Since $c_1(\cdot)=c_2(\cdot)$, let $\phi(\cdot):=\phi_i(\cdot)$. By \Cref{MPE}, $\int_0^1[u_i(\tilde{p};b)-\phi(\tilde{p})]\mathrm{d}F_{\underline{p},\overline{p}}(\tilde{p})\geq\int_0^1[u_i(\tilde{p};b)-\phi(\tilde{p})]\mathrm{d}F(\tilde{p})$ for any $F$ that is an MPC (resp., MPS with support in $[0,1]\setminus \sprg$) of $F_{\underline{p},\overline{p}}$. Since $u_1=f+bg$ and $u_2=f-bg$, for any $F$ that is an MPC (resp., MPS with support in $[0,1]\setminus \sprg$) of $F_{\underline{p},\overline{p}}$ and $b'\in[0,b)$,
    {\small
    \[
    \int_0^1\left[f(\tilde{p})-\phi(\tilde{p})\right]\mathrm{d}(F_{\underline{p},\overline{p}}-F)(\tilde{p})\geq b\left|\int_0^1g(\tilde{p})\mathrm{d}(F_{\underline{p},\overline{p}}-F)(\tilde{p})\right|\geq b'\left|\int_0^1g(\tilde{p})\mathrm{d}(F_{\underline{p},\overline{p}}-F)(\tilde{p})\right|.
    \]}
    \noindent Hence, $\int_0^1[u_i(\tilde{p};b')-\phi(\tilde{p})]\mathrm{d}F_{\underline{p},\overline{p}}(\tilde{p})\geq\int_0^1[u_i(\tilde{p};b')-\phi(\tilde{p})]\mathrm{d}F(\tilde{p})$ for any MPC (resp., MPS with support in $[0,1]\setminus \sprg$) $F$ of $F_{\underline{p},\overline{p}}$. This is true for any belief $p$. As a result, $\sprg\in T^{b'}$. Hence, $T^{b}\subset T^{b'}$ for any $b>b'\geq0$.
\end{proof}

\begin{proof}[Proof of \Cref{misalign-max,misalign-general}] 
    Follows from \Cref{lattice,misalign,reduction-uni-una}.
\end{proof}

\begin{proof}[Proof of \Cref{cs-general}]
    Note that with $\mathcal{\group}\subset\mathcal{\group}'$, $N_{uni}(\mathcal{\group})\subset N_{uni}(\mathcal{\group}')$ and $N_{una}(\mathcal{\group})\supset N_{una}(\mathcal{\group}')$. The result is then due to the following lemma.
\end{proof}
\begin{lemma}
    If $N_{uni}(\mathcal{\group})\subset N_{uni}(\mathcal{\group}')$ and $N_{una}(\mathcal{\group})\supset N_{una}(\mathcal{\group}')$, then the maximum equilibrium under $\mathcal{\group}'$ cannot be strictly larger than the maximum equilibrium under $\mathcal{\group}$. The same applies to minimal equilibria.
\end{lemma}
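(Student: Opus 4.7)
The plan is to invoke Lemma~\ref{reduction-uni-una}, which reduces equilibrium verification under $\mathcal{\group}$ to the $V_i^{IN}$ concavification conditions for each $i\in N_{uni}(\mathcal{\group})$ and the $V_j^{OUT}$ conditions for each $j\in N_{una}(\mathcal{\group})$, and then to exploit two preservation facts already in the paper: the $V^{IN}$ condition is preserved under unions of sampling regions (the unilateral-stopping portion of the proof of Lemma~\ref{lattice-general}), while Lemma~\ref{unanimous-larger} shows the $V^{OUT}$ condition is preserved under enlargement of the sampling region.

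For the maximum, I would let $\sprg^{\max}$ and ${\sprg'}^{\max}$ denote the maximum equilibria under $\mathcal{\group}$ and $\mathcal{\group}'$, respectively, and show that $\sprg'':=\sprg^{\max}\cup{\sprg'}^{\max}$ is an equilibrium under $\mathcal{\group}$. For each $i\in N_{uni}(\mathcal{\group})\subseteq N_{uni}(\mathcal{\group}')$, the $V_i^{IN}$ condition holds at both $\sprg^{\max}$ (directly) and ${\sprg'}^{\max}$ (by the inclusion of index sets), hence at $\sprg''$ by preservation under unions. For each $j\in N_{una}(\mathcal{\group})$, the $V_j^{OUT}$ condition holds at $\sprg^{\max}$ and thus also at $\sprg''\supseteq\sprg^{\max}$ by Lemma~\ref{unanimous-larger}. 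Then $\sprg''$ is an equilibrium under $\mathcal{\group}$, and maximality of $\sprg^{\max}$ forces $\sprg''\subseteq\sprg^{\max}$, so ${\sprg'}^{\max}\subseteq\sprg^{\max}$.

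For minimal equilibria the symmetric ``intersection'' argument fails, since $V^{IN}$ is not generally preserved under subsets. I would instead run a case analysis on the structure of $\mathcal{\group}'$, using the footnote observation that $N_{uni}(\mathcal{\group}')$ and $N_{una}(\mathcal{\group}')$ can be simultaneously non-empty only when both equal a common singleton $\{i\}$. When $N_{uni}(\mathcal{\group}')=\emptyset$ (purely unanimous), also $N_{uni}(\mathcal{\group})=\emptyset$, and the $V^{OUT}$ requirements for $N_{una}(\mathcal{\group})\supseteq N_{una}(\mathcal{\group}')$ imply those for $N_{una}(\mathcal{\group}')$, so every equilibrium under $\mathcal{\group}$ remains one under $\mathcal{\group}'$; a strict inclusion of a minimal ${\sprg'}^{\min}$ over a minimal $\sprg^{\min}$ would then contradict minimality on the $\mathcal{\group}'$ side. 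When $N_{una}(\mathcal{\group}')=\emptyset$ and $N_{uni}(\mathcal{\group}')\neq\emptyset$, $\sprg=\emptyset$ satisfies every requirement and is the unique minimum under $\mathcal{\group}'$, trivially not strictly containing anything. In the dictator subcase $\{i\}=N_{uni}(\mathcal{\group}')=N_{una}(\mathcal{\group}')$, the unique equilibrium under $\mathcal{\group}'$ is $\sprg_i^*$; because $\{i\}\subseteq N_{una}(\mathcal{\group})$, any putative minimal equilibrium under $\mathcal{\group}$ strictly inside $\sprg_i^*$ would violate player $i$'s $V^{OUT}$ condition via the outward deviation to $\sprg_i^*$ (using the uniqueness of $i$'s individual optimum), ruling it out.

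The main obstacle is precisely this minimal case, where the union-based fixed-point trick used for maxima has no direct analogue, and intersections of equilibria need not be equilibria. Closing the gap relies on the rigid structural dichotomy between $N_{uni}$ and $N_{una}$, which prevents the two classes from being simultaneously large and so lets each branch of the case analysis be handled separately.
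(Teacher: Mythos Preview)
Your maximum-case argument is essentially the paper's: both show the larger region is a $\mathcal{\group}$-equilibrium using the same two ingredients ($V^{IN}$ inheritance from $\mathcal{\group}'$ via $N_{uni}(\mathcal{\group})\subseteq N_{uni}(\mathcal{\group}')$, and Lemma~\ref{unanimous-larger} for the $V^{OUT}$ side). You package it as a union argument rather than a proof by contradiction, which in fact yields the slightly stronger conclusion ${\sprg'}^{\max}\subseteq\sprg^{\max}$.

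For minimal equilibria the paper merely writes ``similar and omitted.'' You correctly observe that the naive dual---showing $\sprg$ is a $\mathcal{\group}'$-equilibrium by letting the $V^{IN}$ condition descend from $\sprg'\supsetneq\sprg$---fails, because the chord-above-graph property on an interval does not pass to subintervals. Your case analysis on the structure of $N_{uni}(\mathcal{\group}')$ and $N_{una}(\mathcal{\group}')$ is correct and is the natural way to close the gap: when $N_{uni}(\mathcal{\group}')=\emptyset$ the $V^{IN}$ obstacle simply disappears and $\mathcal{\group}$-equilibria are automatically $\mathcal{\group}'$-equilibria; when $N_{uni}(\mathcal{\group}')\neq\emptyset$ but $N_{una}(\mathcal{\group}')=\emptyset$ the empty set is the unique minimal $\mathcal{\group}'$-equilibrium; and the dictator subcase reduces to the Pareto comparison. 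One small refinement: in the dictator subcase you do not actually need uniqueness of $\sprg_i^*$. The proof of Proposition~\ref{pareto}(b) shows that any $\mathcal{\group}$-equilibrium $\sprg\subsetneq\sprg_i^*$ (with $i\in N_{una}(\mathcal{\group})$) would itself be $i$-optimal, hence a $\mathcal{\group}'$-equilibrium strictly inside $\sprg'=\sprg_i^*$, contradicting minimality of $\sprg'$ directly.
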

\begin{proof}
    \emph{For maximum equilibria.} Suppose not, then $\sprg\subsetneq\sprg'$ where $\sprg$ and $\sprg'$ are maximum equilibria under $\mathcal{\group}$ and $\mathcal{\group}'$, respectively. I want to show $\sprg'$ is also an equilibrium under $\mathcal{\group}$.  It suffices to consider players in $N_{uni}(\mathcal{\group})$ and $N_{una}(\mathcal{\group})$. On the one hand, since $\sprg'$ is an equilibrium under $\mathcal{\group}'$, players in $N_{uni}(\mathcal{\group})\subset N_{uni}(\mathcal{\group}')$ and $N_{una}(\mathcal{\group}')$ have no incentive to deviate. On the other hand, given that $\sprg'\supset\sprg$ and $\sprg$ is an equilibrium under $\mathcal{\group}$, players in $N_{una}(\mathcal{\group})$ have no incentive to prolong learning beyond $\sprg$, thus by \Cref{unanimous-larger} they also have no incentive to deviate from $\sprg'$. As a result, $\sprg'$ is also an equilibrium under $\mathcal{\group}$, contradicting to $\sprg$ being the maximum equilibrium.

    \paragraph{For minimal equilibria} The proof is similar and omitted for brevity.
\end{proof}

\newpage
\clearpage
\pagenumbering{arabic}
\renewcommand*{\thepage}{B-\arabic{page}}
\section{Supplementary Appendix 
}
\label{supp-app}
\Cref{counterexample-MPC} provides a counterexample for \Cref{MPC} with non-Markov strategies; \Cref{sect-app-examples} provides additional examples for equilibrium analysis under unanimous stopping; and \Cref{proofs-app} proves the results in the committee search application. \Cref{Poisson} adapts the baseline model to conclusive Poisson learning. 
Finally, \Cref{GP12} discusses how the competition in persuasion application connects to \citeauthor{gul2012war}'s (\citeyear{gul2012war}) model of war of information.

\subsection{An Example of Infeasible MPC with Non-markov Strategies}
\label{counterexample-MPC}
The following example shows not every MPC is feasible under unilateral stopping when players use non-Markov strategies.
\begin{example}
Suppose $p_0=1/2$ and consider $\rho=\inf\{t\geq\tau_0: p_t\in(1/2,1)^c\}$ where $\tau_0=\inf\{t\geq0:p_t\in(0,2/3)^c\}$. Then the induced distribution of $p_\rho$ is $\mu_{\rho}=\frac14\delta_{0}+\frac12\delta_{1/2}+\frac14\delta_{1}$ where $\delta_p$ is the Dirac measure at $p$. Notice that $\mu=\frac12\delta_{1/4}+\frac12\delta_{3/4}$ is an MPC of $\mu_\rho$. However, if the sample path is such that after passing $2/3$, $p_t$ hits $1/2$ before hitting $3/4$, the probability of which conditional on $p_s=2/3$ is $1/3$, the belief process must be trapped at $1/2$ as required by $\rho$, instead of arriving at $3/4$. Hence, over these sample paths, any stopping time yielding $\mu$ must be greater than $\rho$, making it infeasible to implement $\mu$ under unilateral stopping. 
\end{example}

\subsection{Additional Examples for Unanimous Stopping}
\label{sect-app-examples}
In this section, I present two additional examples under unanimous stopping where non-interval equilibria exist and some of them are not comparable to the Pareto efficient sampling regions.

The first example uses the same payoff structure $u_1-\phi_1$ I used for player 1 when illustrating \Cref{concave}. One can find it in \Cref{fig-unanimous}. Consider unanimous stopping between two players with the same preference $u_1-\phi_1$. As demonstrated by \Cref{fig-twointerval}, $\sprg=(b'',\hat{b})\cup(\hat{b},B'')$ can be an equilibrium which consists of two intervals: players only sample within $(b'',\hat{b})$ and $(\hat{b},B'')$; they stop at $\hat{b}$ or whenever the belief is below $b''$ or above $B''$. Importantly, this equilibrium $\sprg$ is not comparable to the Pareto efficient sampling region $\sprg^*=(b',B')$ (recall that $(b',B')$ is player 1's favorite) since $\hat{b}\in\sprg^*$ but $\hat{b}\notin\sprg$. Notice that $\sprg$ in this example is not even minimal, but still not comparable to the Pareto optimum. For the semi-lattice structure, given that both $\sprg$ and $\sprg^*$ are equilibria, one can easily verify that $\sprg\cup\sprg^*=(b'',B'')$ is also an equilibrium but $\sprg\cap\sprg^*=(b',\hat{b})\cup(\hat{b},B')$ is not. 

\begin{figure}[!t]
    \centering
    \includegraphics[scale=0.35]{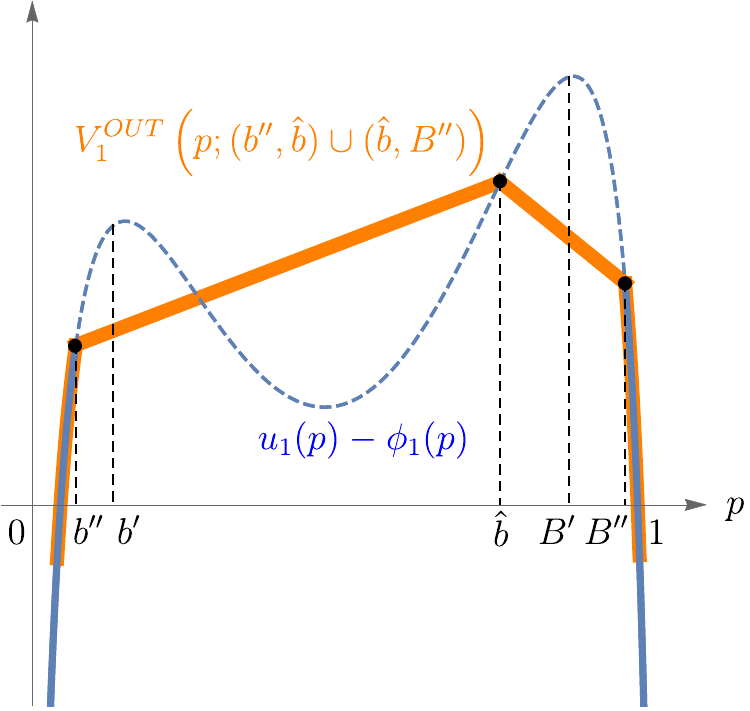}
    \caption{An equilibrium with $\sprg=(b'',\hat{b})\cup(\hat{b},B'')$.}
    \label{fig-twointerval}
\end{figure}

The second example presents an opposite situation where the Pareto efficient sampling region consists of two intervals but there is an interval equilibrium that is not comparable to the optimum. Two players have the same preference given by $u-\phi$ in \Cref{fig-unnested}. On the one hand, \Cref{fig-unnested-eqm} shows an interval equilibrium with $\sprg=(\underline{p},\overline{p})$ and it is minimal; on the other hand, \Cref{fig-unnested-opt} shows the Pareto optimum $\sprg^*=(\underline{p}_1^*,\overline{p}_1^*)\cup(\underline{p}_2^*,\overline{p}_2^*)$ derived from concavification. Notice that $\sprg$ is neither nesting $\sprg^*$ nor nested by $\sprg^*$. Since $\sprg^*$ is also an equilibrium in this case, $\sprg\cup\sprg^*=(\underline{p}_1^*,\overline{p}_2^*)$ is an equilibrium, but again $\sprg\cap\sprg^*=(\underline{p},\overline{p}_1^*)\cup(\underline{p}_2^*,\overline{p})$ is not.

\begin{figure}[t]
	\centering
	\begin{subcaptionblock}{0.48\textwidth}
		\centering
		\includegraphics[scale=0.35]{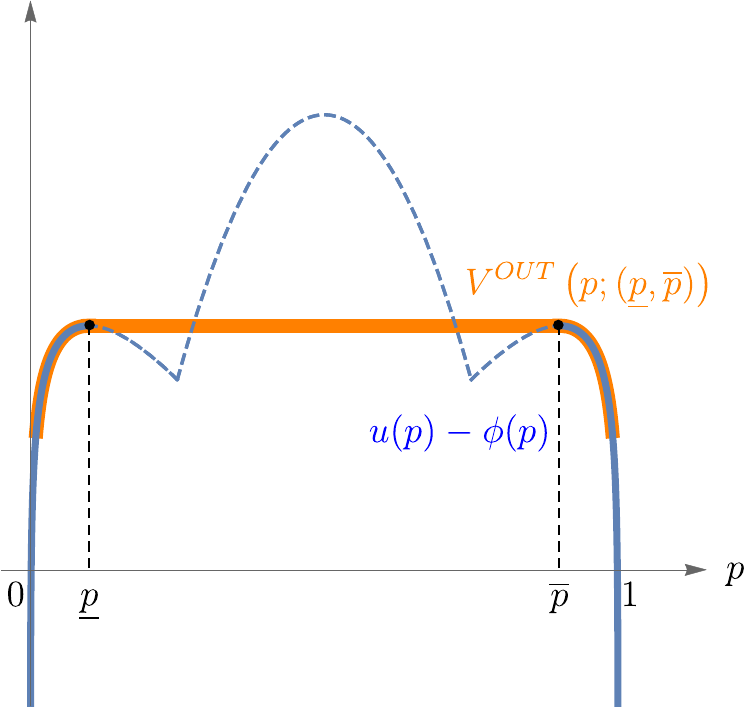}
            \caption{An equilibrium with $\sprg=(\underline{p},\overline{p})$.}
        \label{fig-unnested-eqm}
	\end{subcaptionblock}
        \begin{subcaptionblock}{0.48\textwidth}
		\centering
		\includegraphics[scale=0.35]{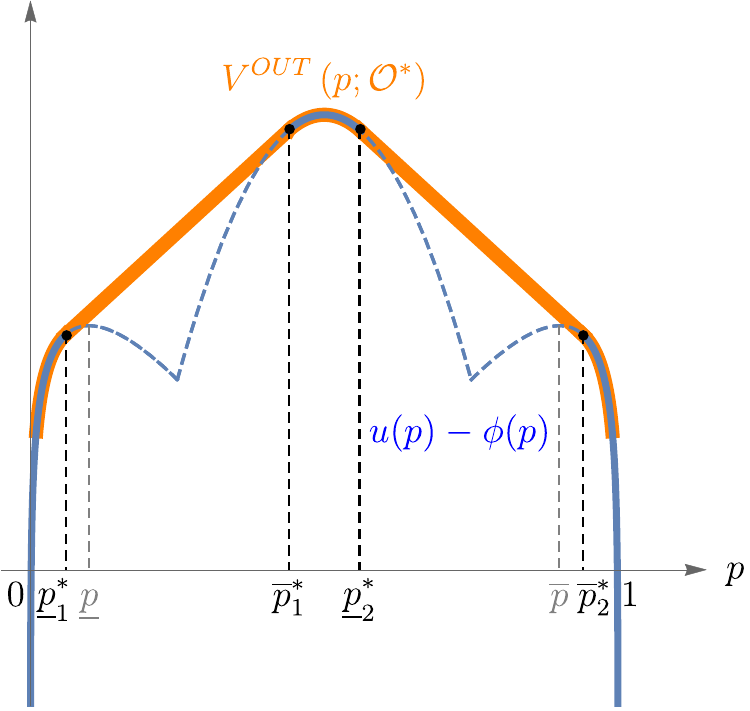}
            \caption{The Pareto optimum $\sprg^*=(\underline{p}_1^*,\overline{p}_1^*)\cup(\underline{p}_2^*,\overline{p}_2^*)$.}
        \label{fig-unnested-opt}
	\end{subcaptionblock}
 	\caption{Another Example for Unanimous Stopping.}
    \label{fig-unnested}
\end{figure}

\subsection{Proofs for the Committee Search Application}
\label{proofs-app}

\begin{proof}[Proof of \Cref{prop:strong}]
    \emph{``strong equilibria''$\Rightarrow$ ``fixed points''.} First, suppose that $\sprg$ is an equilibrium that contains an interval $(\underline{p},\overline{p})$ that is either below or above $\piv$. However, since players' indirect payoffs $u_i-\phi_i$ are all piecewise strictly concave given the cutoff $\piv$, all players can be better off by collectively deviating to $\sprg\setminus(\underline{p},\overline{p})$ for all priors, and strictly better for $p\in(\underline{p},\overline{p})$. This thus implies that $\sprg$ cannot be a strong equilibrium. Hence, all non-interval sampling regions or intervals that do not cross $\piv$ cannot be strong equilibria.
    
    Next, I want to show that interval equilibria without the fixed-point property cannot be strong. It is useful to introduce the following lemma about $U_i$:
        \begin{lemma}
    \label{lm:BR-extension}
        For any $\underline{p}\leq\piv\leq\overline{p}$ and $\underline{p}{}'\leq\piv\leq\overline{p}{}'$, if 
        \begin{itemize}
            \item[(i)] $U_i(\underline{p}{},\overline{p};\piv)\geq U_i(\underline{p}{},\overline{p}{}'';\piv)$ for any $\overline{p}{}''\in[\min\{\overline{p},\overline{p}{}'\},\max\{\overline{p},\overline{p}{}'\}]$, and
            \item[(ii)] $U_i(\underline{p}{},\overline{p};\piv)\geq U_i(\underline{p}{}'',\overline{p}{};\piv)$ for any $\underline{p}{}''\in[\min\{\underline{p},\underline{p}{}'\},\max\{\underline{p},\underline{p}{}'\}]$,
        \end{itemize}
        then $U_i(\underline{p}{},\overline{p};p)\geq U_i(\underline{p}{}',\overline{p}{}';p)$ for all $p\in[\min\{\underline{p},\underline{p}{}'\},\max\{\overline{p},\overline{p}{}'\}]$.
    \end{lemma}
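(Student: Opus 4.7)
Write $y(q) := u_i(q) - \phi_i(q)$ and let $L$ and $L'$ be the affine functions that interpolate $y$ at the endpoints $\underline{p},\overline{p}$ and at $\underline{p}{}',\overline{p}{}'$, respectively; by \Cref{ex-ante-payoff}, $U_i(\underline{p},\overline{p};p) = L(p)$ for $p \in [\underline{p},\overline{p}]$ and equals $y(p)$ outside, and analogously $U_i(\underline{p}{}',\overline{p}{}';p)$ is $L'(p)$ on $[\underline{p}{}',\overline{p}{}']$ and $y(p)$ outside. The plan is to recast hypotheses (i)--(ii) as geometric statements saying that the chord $L$ dominates the graph of $y$ on suitable intervals, to compare $L$ with $L'$ via their shared affinity, and to close the remaining subintervals using the piecewise concavity of $y$ on $[0,\piv]$ and on $[\piv,1]$---which holds here because $u_i$ is affine on each side of $\piv$ and $\phi_i$ is convex.

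First I would observe that for any $\overline{p}{}'' > \underline{p}$, dividing by $\piv - \underline{p} > 0$ shows that $U_i(\underline{p},\overline{p};\piv) \geq U_i(\underline{p},\overline{p}{}'';\piv)$ is equivalent to the chord-slope comparison
\[
\frac{y(\overline{p})-y(\underline{p})}{\overline{p}-\underline{p}} \geq \frac{y(\overline{p}{}'')-y(\underline{p})}{\overline{p}{}''-\underline{p}},
\]
and hence to $y(\overline{p}{}'') \leq L(\overline{p}{}'')$. So condition (i) rewrites as ``$L \geq y$ pointwise on $[\min\{\overline{p},\overline{p}{}'\}, \max\{\overline{p},\overline{p}{}'\}]$'', and condition (ii) analogously gives $L \geq y$ on $[\min\{\underline{p},\underline{p}{}'\}, \max\{\underline{p},\underline{p}{}'\}]$. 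Evaluated at $\underline{p}{}'$ and $\overline{p}{}'$, these yield $L(\underline{p}{}') \geq y(\underline{p}{}') = L'(\underline{p}{}')$ and $L(\overline{p}{}') \geq y(\overline{p}{}') = L'(\overline{p}{}')$; since both $L$ and $L'$ are affine, $L \geq L'$ holds throughout $[\underline{p}{}',\overline{p}{}']$.

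I would then split into cases on the orderings of $(\underline{p},\underline{p}{}')$ and $(\overline{p},\overline{p}{}')$. For each $p$ in the stated interval, the desired inequality reduces to one of three forms: (a) $L(p) \geq L'(p)$, handled by the previous paragraph when $p \in [\underline{p}{}',\overline{p}{}'] \cap [\underline{p},\overline{p}]$; (b) $L(p) \geq y(p)$, supplied directly by the geometric reformulation when $p \in [\underline{p},\overline{p}] \setminus [\underline{p}{}',\overline{p}{}']$; or (c) $y(p) \geq L'(p)$ on an ``expanding'' subinterval such as $[\overline{p}, \overline{p}{}'] \subset [\piv, 1]$ when $\overline{p} < \overline{p}{}'$. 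Type (c) is the main obstacle, since the hypotheses only bound $y$ from above by $L$ and provide no direct lower bound on $y$ relative to $L'$. The resolution is to invoke concavity of $y$ on $[\piv,1]$: $y$ dominates the chord $L_{BB'}$ joining $(\overline{p}, y(\overline{p}))$ and $(\overline{p}{}', y(\overline{p}{}'))$ on $[\overline{p}, \overline{p}{}']$; at the endpoints $L_{BB'}(\overline{p}{}') = y(\overline{p}{}') = L'(\overline{p}{}')$ and $L_{BB'}(\overline{p}) = y(\overline{p}) = L(\overline{p}) \geq L'(\overline{p})$ (the last step uses $\overline{p} \in [\underline{p}{}', \overline{p}{}']$ together with the affinity comparison from the previous paragraph), so affinity propagates $L_{BB'} \geq L'$ to the whole subinterval, yielding $y \geq L_{BB'} \geq L'$. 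The symmetric expanding subinterval $[\underline{p}{}', \underline{p}] \subset [0,\piv]$ is handled by the mirror argument using concavity of $y$ on $[0,\piv]$.
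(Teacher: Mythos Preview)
Your proof is correct and follows essentially the same approach as the paper: both recast hypotheses (i)--(ii) as chord-domination statements ($L\geq y$ on the relevant subintervals), extend from $\piv$ to general $p$ by affinity, and invoke the piecewise concavity of $u_i-\phi_i$ on each side of $\piv$ to handle the ``expanding'' subintervals where one must show $y\geq L'$. Your organization is slightly more direct---you establish $L\geq L'$ on $[\underline p{}',\overline p{}']$ immediately by evaluating the chord inequality at the two endpoints $\underline p{}',\overline p{}'$, whereas the paper routes the comparison through the two intermediate ``diagonal'' chords joining $(\underline p,y(\underline p))$ to $(\overline p{}',y(\overline p{}'))$ and $(\underline p{}',y(\underline p{}'))$ to $(\overline p,y(\overline p))$---but the underlying ideas coincide.
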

    Intuitively, if player $i$ prefers $(\underline{p},\overline{p})$ to both $(\underline{p}{}',\overline{p})$ and $(\underline{p},\overline{p}{}')$, then she must also prefer $(\underline{p},\overline{p})$ to both $(\underline{p}{}',\overline{p}{}')$. This lemma thus suggests that it is without loss to only consider one-sided best responses (with $\piv$ as the benchmark).
    
    Suppose that $(\underline{p},\overline{p})$ is not a fixed point in $\underline{B}{}_{\lbp(\mathcal{\group})}$ and $\overline{B}_{\ubp(\mathcal{\group})}$. Without loss say $\underline{p}\ne\underline{B}{}_{\lbp(\mathcal{\group})}(\overline{p})=:\underline{p}'$. The case in which $\overline{p}\ne\overline{B}_{\ubp(\mathcal{\group})}(\underline{p})$ is symmetric.
    \begin{itemize}
        \item When $\underline{p}<\underline{p}'$:  Consider some $\group'\in\arg\min_{\group\in\mathcal{\group}}\max_{i\in\group}i$. By definition of $\lbp(\mathcal{\group})$, $\lbp(\mathcal{\group})\in\group'\subset\{1,\dots,\lbp(\mathcal{\group})\}$. Think about their deviations to $\underline{p}{}_j'=\underline{p}'$ for $j\in\group'$ while the upper bounds and all other players' strategies remain the same. Since $\underline{p}=\max_{\group\in\mathcal{\group}}\min_{i\in\group}\underline{p}{}_i<\underline{p}'$, after this deviation, we should have $\max_{\group\in\mathcal{\group}}\min_{i\in\group}\underline{p}{}_i'=\underline{p}'$ and still $\min_{\group\in\mathcal{\group}}\max_{i\in\group}\overline{p}{}_i'=\overline{p}$, thus the induced collective sampling region is $(\underline{p}',\overline{p})\subset(\underline{p},\overline{p})$. It suffices to check the payoffs for players within $\group'$ for $p\in(\underline{p},\overline{p})$. By definition of $\underline{p}'=\underline{B}{}_{\lbp(\mathcal{\group})}(\overline{p})$, $U_{\lbp(\mathcal{\group})}(\underline{p}',\overline{p};p)> U_{\lbp(\mathcal{\group})}(\underline{p},\overline{p};p)$ holds for $p=\piv$ and extends to all $p\in(\underline{p},\overline{p})$ because the left-hand side is concave in $p$ while the right-hand side is linear in $p$ and they coincide at $p=\underline{p}$ and at $p=\overline{p}$ (or simply by \Cref{lm:BR-extension}). As for other players $j$ in $\group'$, we have $j\leq\lbp(\mathcal{\group})$. Notice that $U_j(\underline{p}'',\overline{p};p)$ is single-peaked in $\underline{p}''$ and $\underline{p}<\underline{p}'=\underline{B}{}_{\lbp(\mathcal{\group})}(\overline{p})\leq\underline{B}{}_{j}(\overline{p})$, therefore $U_j(\underline{p}',\overline{p};p)> U_j(\underline{p},\overline{p};p)$ at $p=\piv$, which extends to all $p\in(\underline{p},\overline{p})$ for the same reason as before. In sum, all players in $\group'$ benefit from the above deviation regardless of the belief, and strictly so when $p\in(\underline{p},\overline{p})$. As a result, $(\underline{p},\overline{p})$ is not strong.

    \item When $\underline{p}>\underline{p}{}'$: Consider players in $\group'=\{j:j=\max_{i\in\group}i,\group\in\mathcal{\group}\}$. By definition, $\group'\subset\{\lbp(\group),\dots, n\}$ for any $\group\in\mathcal{\group}$. Think about their deviations to $\underline{p}{}_j'=\underline{p}{}'$ for all $j\in\group'$ while the upper bounds and all other players' strategies remain the same. This deviation will lead to $(\underline{p}{}',\overline{p})$ as the collective sampling region and $(\underline{p}{}',\overline{p})\supset(\underline{p},\overline{p})$. It suffices to check players' incentives for $p\in(\underline{p}{}',\overline{p})$. By definition and by the same argument as before, $U_{\lbp(\mathcal{\group})}(\underline{p}',\overline{p};p)> U_{\lbp(\mathcal{\group})}(\underline{p},\overline{p};p)$ holds for all $p\in(\underline{p}{}',\overline{p})$. Now because $j\geq\lbp(\mathcal{\group})$ for all $j\in\group'$ and thus $\underline{B}{}_{j}(\overline{p})\leq\underline{B}{}_{\lbp(\mathcal{\group})}(\overline{p})=\underline{p}{}'<\underline{p}$, together with the fact that $U_j$ is single-peaked in $\underline{p}''$, we should also have $U_{j}(\underline{p}',\overline{p};p)> U_{j}(\underline{p},\overline{p};p)$ holds for all $p\in(\underline{p}{}',\overline{p})$. Hence, $(\underline{p},\overline{p})$ is not strong.

    \end{itemize}
    \noindent\emph{``fixed points''$\Rightarrow$``strong equilibria''.} According to the proof of \Cref{cs-committee}, any fixed point in $\underline{B}{}_{\lbp(\mathcal{\group})}$ and $\overline{B}_{\ubp(\mathcal{\group})}$ is minimal when $\ubp(\mathcal{\group})<\lbp(\mathcal{\group})$ and there is a maximum one when $\ubp(\mathcal{\group})\geq\lbp(\mathcal{\group})$. Suppose that $(\underline{p},\overline{p})$ is a fixed point in $\underline{B}{}_{\lbp(\mathcal{\group})}$ and $\overline{B}_{\ubp(\mathcal{\group})}$, and the maximum one when $\lbp(\mathcal{\group})\leq\ubp(\mathcal{\group})$. 
    
    First, $(\underline{p},\overline{p})$ is indeed an equilibrium. Suppose that players all use the same strategy $\sprg_i=(\underline{p},\overline{p})$, which is trivially an equilibrium when $N_{uni}(\mathcal{\group})=N_{una}(\mathcal{\group})=\emptyset$. According to \Cref{existence-OSBR} established later, when $N_{uni}(\mathcal{\group})\ne\emptyset$ and thus $\lbp\leq\ubp$, $(\underline{p},\overline{p})$ is the maximum equilibrium under unilateral stopping between $\lbp$ and $\ubp$. Since $\lbp$ and $\ubp$ have no incentive to deviate to $(\underline{p}{}',\overline{p}{}')\subset(\underline{p},\overline{p})$ and by definition, $\lbp(\mathcal{\group})\leq i\leq\ubp(\mathcal{\group})$ for any $i\in N_{uni}(\mathcal{\group})$, any such player $i$ also has no incentive to deviate to $(\underline{p}{}',\overline{p}{}')$ given the ordered preferences.\footnote{In detail, for any $\underline{p}{}''\in[\underline{p},\underline{p}{}']$ and any $\overline{p}{}''\in[\overline{p}{}',\overline{p}]$, $U_\lbp(\underline{p},\overline{p};\piv)\geq U_\lbp(\underline{p}{}',\overline{p};\piv)$ implies $U_i(\underline{p},\overline{p};\piv)\geq U_i(\underline{p}{}',\overline{p};\piv)$, and $U_\ubp(\underline{p},\overline{p};\piv)\geq U_\ubp(\underline{p}{},\overline{p}{}';\piv)$ implies $U_i(\underline{p},\overline{p};\piv)\geq U_i(\underline{p}{},\overline{p}{}';\piv)$. So by \Cref{lm:BR-extension}, $U_i(\underline{p},\overline{p};p)\geq U_i(\underline{p}{}',\overline{p}{}';p)$ for all $p$.} The case with $N_{una}(\mathcal{\group})\ne\emptyset$ is symmetric, where instead $\lbp(\mathcal{\group})\geq\ubp(\mathcal{\group})$ and $\lbp(\mathcal{\group})\geq j\geq\ubp(\mathcal{\group})$ for all $j\in N_{una}(\mathcal{\group})$. In conclusion, $(\underline{p},\overline{p})$ is indeed an equilibrium under $\mathcal{\group}$-collective stopping.

    It remains to check that $(\underline{p},\overline{p})$ is strong. Consider an arbitrary sampling interval $(\underline{p}{}',\overline{p}{}')$ as the target of a group deviation.

When $\underline{p}{}'>\underline{p}$: To implement such a deviation, the group must include some player $j\geq\lbp$ (since $\max_{i\in\group}\group\geq\lbp$ for any $\group\in\mathcal{\group}$). Suppose that the deviation to $(\underline{p}{}',\overline{p}{}')$ is beneficial for player $j$, that is, $U_j(\underline{p}{}',\overline{p}{}';p)\geq U_j(\underline{p},\overline{p};p)$ for all $p$. Since both sides are linear in $p$ for $p\in(\underline{p}{}',\overline{p}{}')$, it implies that the line connecting $(\underline{p}{}',(u_j-\phi)(\underline{p}{}'))$ and $(\overline{p}{}',(u_j-\phi)(\overline{p}{}'))$ is above the one connecting $(\underline{p}{},(u_j-\phi)(\underline{p}{}))$ and $(\overline{p}{},(u_j-\phi)(\overline{p}{}))$. Hence, the line connecting $(\underline{p}{}',(u_j-\phi)(\underline{p}{}'))$ and $(\overline{p}{},(u_j-\phi)(\overline{p}{}))$ is above the one connecting $(\underline{p}{},(u_j-\phi)(\underline{p}{}))$ and $(\overline{p}{},(u_j-\phi)(\overline{p}{}))$; in other words, $U_j(\underline{p}{}',\overline{p}{};p)\geq U_j(\underline{p},\overline{p};p)$ for all $p\in[\underline{p}{}',\overline{p}{}]$. Since $U_j(\cdot,\overline{p};\piv)$ is single-peaked, it must be $\underline{B}{}_j(\overline{p})>\underline{p}$. But $j\geq\lbp$, so $\underline{B}{}_\lbp(\overline{p})\geq\underline{B}{}_j(\overline{p})>\underline{p}$, leading to a contradiction to $\underline{p}=\underline{B}{}_\lbp(\overline{p})$.
        
When $\overline{p}{}'<\overline{p}$: The deviating group must then include some player $k\leq\ubp$. If player $k$ benefits from such a deviation, it will imply $\overline{B}{}_\ubp(\underline{p})\leq\overline{B}{}_k(\underline{p})<\overline{p}$ and lead to a contradiction to $\overline{p}=\underline{B}{}_\ubp(\overline{p})$. The analysis is analogous to the previous one and omitted for brevity.
 
 When $(\underline{p}{}',\overline{p}{}')\supset(\underline{p},\overline{p})$: Without loss $\underline{p}{}'<\underline{p}$. To implement such a deviation, the deviating group must include some $j\leq\lbp$, since it has to gather at least one player from the group $\group'$ such that $\lbp=\max_{i\in\group'}i$. If $\overline{p}{}'=\overline{p}$, then we must have $\underline{B}{}_j(\overline{p})\geq\underline{B}{}_\lbp(\overline{p})=\underline{p}$, which implies that player $j$ cannot benefit from a deviation to $(\underline{p}{}',\overline{p})$ with $\underline{p}{}'<\underline{p}$. 
        
        If $\overline{p}{}'>\overline{p}$, the deviation must also involve some player $k\geq\ubp$. First, consider the case when $\lbp>\ubp$, $(\underline{p},\overline{p})$ is a minimal equilibrium under unanimous stopping between players $\lbp$ and $\ubp$. If $\lbp\geq j\geq\ubp$, since both $\lbp$ and $\ubp$ have no incentive to deviate to $(\underline{p}{}',\overline{p}{}')\supset(\underline{p},\overline{p})$, neither does player $j$ due to the ordered preferences. The same holds for player $k$ if $\lbp\geq k\geq\ubp$. 
        
        As a result, the only remaining possibilities are $j<\ubp<\lbp<k$ when $\lbp>\ubp$ and $j\leq\lbp\leq\ubp\leq k$ when $\lbp\leq\ubp$. Suppose that both players $j$ and $k$ can benefit from the deviation to $(\underline{p}{}',\overline{p}{}')\supset(\underline{p},\overline{p})$ at all beliefs. Then $(\underline{p}{}',\overline{p}{}')$ must be an equilibrium under unilateral stopping between players $j$ and $k$. Let $(\underline{p}{}_{\max},\overline{p}{}_{\max})$ denote the maximum equilibrium under unilateral stopping between them, hence $(\underline{p}{}_{\max},\overline{p}{}_{\max})\supset(\underline{p}{}',\overline{p}{}')\supsetneq(\underline{p},\overline{p})$. Since $j\leq k$, by \Cref{existence-OSBR}, $(\underline{p}{}_{\max},\overline{p}{}_{\max})$ is the maximum fixed point in $\underline{B}{}_j$ and $\overline{B}{}_k$. Recall that $(\underline{p},\overline{p})$ is a fixed point in $\underline{B}{}_\lbp$ and $\overline{B}{}_\ubp$, and the maximum one when $\lbp\leq\ubp$. However, given that $\underline{B}{}_\lbp\leq\underline{B}{}_j$ and $\overline{B}{}_k\leq\overline{B}{}_\ubp$, the proof of \Cref{cs-committee} shows that $(\underline{p}{}_{\max},\overline{p}{}_{\max})$ cannot be strictly larger than $(\underline{p},\overline{p})$, leading to a contradiction.
    In conclusion, there cannot be any feasible and profitable group deviation from $(\underline{p},\overline{p})$. Therefore, $(\underline{p},\overline{p})$ is strong. It completes the proof.
\end{proof}
\begin{proof}[Proof of \Cref{lm:BR-extension}]
        First, $U_i(\underline{p}{},\overline{p};\piv)\geq U_i(\underline{p}{},\overline{p}{}';\piv)$ implies
    \begin{align*}
        \frac{\overline{p}-\piv}{\overline{p}-\underline{p}{}}&[u_i(\underline{p}{})-\phi_i(\underline{p}{})]+\frac{\piv-\underline{p}{}}{\overline{p}-\underline{p}{}}[u_i(\overline{p})-\phi_i(\overline{p})]
        \\
        &\geq 
        \frac{\overline{p}{}'-\piv}{\overline{p}{}'-\underline{p}{}}[u_i(\underline{p}{})-\phi_i(\underline{p}{})]+\frac{\piv-\underline{p}{}}{\overline{p}{}'-\underline{p}{}}[u_i(\overline{p}{}')-\phi_i(\overline{p}{}')],
    \end{align*}
    \begin{equation}
    \label{ineq-1}
    \begin{aligned}
        \Rightarrow\ &U_i\big(\underline{p},\overline{p};p\big)=\frac{\overline{p}-p}{\overline{p}-\underline{p}{}}[u_i(\underline{p}{})-\phi_i(\underline{p}{})]+\frac{p-\underline{p}{}}{\overline{p}-\underline{p}{}}[u_i(\overline{p})-\phi_i(\overline{p})]\\
        \geq &\frac{\overline{p}{}'-p}{\overline{p}{}'-\underline{p}{}}[u_i(\underline{p}{})-\phi_i(\underline{p}{})]+\frac{p-\underline{p}{}}{\overline{p}{}'-\underline{p}{}}[u_i(\overline{p}{}')-\phi_i(\overline{p}{}')],\text{}\forall p\in[\underline{p}{},\min\{\overline{p},\overline{p}{}'\}].
    \end{aligned}
    \end{equation}
    where the implication from $\piv$ to general $p$ is due to the linearity of both sides in $p$ and the fact that the two sides coincide at $p=\underline{p}{}$. In other words, the line connecting points $(\underline{p}{},(u_i-\phi_i)(\underline{p}{}))$ and $(\overline{p},(u_i-\phi_i)(\overline{p}))$ is above the line connecting $(\underline{p}{},(u_i-\phi_i)(\underline{p}{}))$ and $(\overline{p}{}',(u_i-\phi_i)(\overline{p}{}'))$ over $[\underline{p}{},\min\{\overline{p},\overline{p}{}'\}]$. When $\overline{p}\leq\overline{p}{}'$, this further implies that the line connecting $(\overline{p},(u_i-\phi_i)(\overline{p}))$ and $(\overline{p}{}',(u_i-\phi_i)(\overline{p}{}'))$ should also be above the line connecting $(\underline{p}{},(u_i-\phi_i)(\underline{p}{}))$ and $(\overline{p}{}',(u_i-\phi_i)(\overline{p}{}'))$ over $[\min\{\overline{p},\overline{p}{}'\},\max\{\overline{p},\overline{p}{}'\}]$. 
    Then by the concavity of $u_i-\phi_i$ above $\piv$, for any $ p\in[\overline{p},\overline{p}{}']$,
    \begin{equation}
    \label{ineq-3}
    U_i\big(\underline{p},\overline{p};p\big)=u_i(p)-\phi_i(p)\geq\frac{\overline{p}{}'-p}{\overline{p}{}'-\underline{p}{}}[u_i(\underline{p}{})-\phi_i(\underline{p}{})]+\frac{p-\underline{p}{}}{\overline{p}{}'-\underline{p}{}}[u_i(\overline{p}{}')-\phi_i(\overline{p}{}')].
    \end{equation}
    Instead when $\overline{p}>\overline{p}{}'$, by replacing $\overline{p}{}'$ with $\overline{p}{}''$ in Inequality (\ref{ineq-1}) (since the analysis holds for any $\overline{p}{}''\in[\overline{p}{}',\overline{p}]$) and evaluating it at $\overline{p}''=p$, we can get
    \[
    U_i\big(\underline{p},\overline{p};p\big)\geq u_i(p)-\phi_i(p)=U_i\big(\underline{p}{}',\overline{p}{}';p\big),\text{ }\forall p\in[\overline{p}{}',\overline{p}{}].
    \]
    
    Similarly, $U_i(\underline{p}{},\overline{p};\piv)\geq U_i(\underline{p}{}',\overline{p};\piv)$ gives us
    \begin{align*}
        \frac{\overline{p}-\piv}{\overline{p}-\underline{p}{}}&[u_i(\underline{p}{})-\phi_i(\underline{p}{})]+\frac{\piv-\underline{p}{}}{\overline{p}-\underline{p}{}}[u_i(\overline{p})-\phi_i(\overline{p})]
        \\
        &\geq 
        \frac{\overline{p}-\piv}{\overline{p}-\underline{p}'}[u_i(\underline{p}')-\phi_i(\underline{p}')]+\frac{\piv-\underline{p}'}{\overline{p}-\underline{p}'}[u_i(\overline{p})-\phi_i(\overline{p})].
    \end{align*}
    And by similar arguments,
    \begin{equation}
    \label{ineq-2}
    \begin{aligned}
        &U_i\big(\underline{p},\overline{p};p\big)=\frac{\overline{p}-p}{\overline{p}-\underline{p}{}}[u_i(\underline{p}{})-\phi_i(\underline{p}{})]+\frac{p-\underline{p}{}}{\overline{p}-\underline{p}{}}[u_i(\overline{p})-\phi_i(\overline{p})]\\
        \geq&\frac{\overline{p}-p}{\overline{p}-\underline{p}'}[u_i(\underline{p}')-\phi_i(\underline{p}')]+\frac{p-\underline{p}'}{\overline{p}-\underline{p}'}[u_i(\overline{p})-\phi_i(\overline{p})],\text{}\forall p\in[\max\{\underline{p},\underline{p}'\},\overline{p}],
    \end{aligned}
    \end{equation}
    i.e., the line connecting $(\underline{p}{},(u_i-\phi_i)(\underline{p}{}))$ and $(\overline{p},(u_i-\phi_i)(\overline{p}))$ is above the one connecting $(\underline{p}',(u_i-\phi_i)(\underline{p}'))$ and $(\overline{p},(u_i-\phi_i)(\overline{p}))$ over $[\max\{\underline{p},\underline{p}'\},\overline{p}]$. 
    And similar, when $\underline{p}{}'\leq\underline{p}$, for any $p\in[\underline{p}{}',\underline{p}]$,
    \begin{equation}
    \label{ineq-4}
    U_i\big(\underline{p},\overline{p};p\big)=u_i(p)-\phi_i(p)\geq\frac{\overline{p}{}-p}{\overline{p}{}-\underline{p}{}'}[u_i(\underline{p}{}')-\phi_i(\underline{p}{}')]+\frac{p-\underline{p}{}'}{\overline{p}{}-\underline{p}{}'}[u_i(\overline{p}{})-\phi_i(\overline{p}{})].
    \end{equation}
    Instead when $\underline{p}<\underline{p}{}'$, 
    \[
    U_i\big(\underline{p}{},\overline{p};p\big)\geq u_i(p)-\phi_i(p)=U_i\big(\underline{p}',\overline{p}{}';p\big),\text{ }\forall p\in[\underline{p}{},\underline{p}'].
    \]
    Inequalities (\ref{ineq-1}), (\ref{ineq-3}), (\ref{ineq-2}), and (\ref{ineq-4}) (involving $U_i(\underline{p},\overline{p};p)$ and two lines) imply that for any $p\in[\max\{\underline{p},\underline{p}{}'\},\min\{\overline{p},\overline{p}{}'\}]$ and also $p\in[\underline{p}{}',\underline{p}]$ when $\underline{p}{}'\leq\underline{p}$ and $p\in[\overline{p},\overline{p}{}']$ when $\overline{p}\leq\overline{p}{}'$,
    \begin{align*}
        U_i\big(\underline{p}{},\overline{p};p\big)\geq\frac{\overline{p}{}'-p}{\overline{p}{}'-\underline{p}'}[u_i(\underline{p}')-\phi_i(\underline{p}')]+\frac{p-\underline{p}'}{\overline{p}{}'-\underline{p}'}[u_i(\overline{p}{}')-\phi_i(\overline{p}{}')]=U_i\big(\underline{p}',\overline{p}{}';p\big).
    \end{align*}
    The graphical intuition is as follows: Since the graph of $U_i(\underline{p},\overline{p};p)$ is above the line connecting $(\underline{p}{},(u_i-\phi_i)(\underline{p}{}))$ and $(\overline{p}{}',(u_i-\phi_i)(\overline{p}{}'))$ over $[\underline{p}{},\overline{p}{}']$ and above the one connecting $(\underline{p}',(u_i-\phi_i)(\underline{p}'))$ and $(\overline{p},(u_i-\phi_i)(\overline{p}))$ over $[\underline{p}',\overline{p}]$, it must also be above the line connecting $(\underline{p}',(u_i-\phi_i)(\underline{p}'))$ and $(\overline{p}{}',(u_i-\phi_i)(\overline{p}{}'))$ over $[\underline{p}',\overline{p}{}']$, as indicated by the last inequality above. 
    
    In sum, $U_i(\underline{p}{},\overline{p};p)\geq U_i(\underline{p}',\overline{p}{}';p)$ for all $p\in[\min\{\underline{p}{},\underline{p}{}'\},\max\{\overline{p},\overline{p}{}'\}]$.
\end{proof}

\begin{proof}[Proof of \Cref{cs-committee}]
    According to the discussion in the main text, it remains to show that for $k\leq i$ and $j\leq l$, any fixed point (or the maximum one) $(\underline{p}{}',\overline{p}{}')$ between $\underline{B}{}_{k}$ and $\overline{B}{}_l$ cannot be strictly larger than any fixed point (or the maximum one) $(\underline{p}{},\overline{p}{})$ between $\underline{B}{}_{i}$ and $\overline{B}{}_j$.
    
    By definition of one-sided best responses, we should have the following first-order condition for $\overline{B}_i(\underline{p})$:
    \[
    \overline{B}_i(\underline{p})v_i-\phi_i(\overline{B}_i(\underline{p}))-(1-\underline{p})+\phi_i(\underline{p})=[\overline{B}_i(\underline{p})-\underline{p}][v_i-\phi_i'(\overline{B}_i(\underline{p}))]
    \]
    Therefore, it holds that
    \[
    \overline{B}{}_i'(\underline{p})=\frac{v_i-\phi_i'(\overline{B}_i(\underline{p}))+1+\phi_i'(\underline{p})}{-[\overline{B}_i(\underline{p})-\underline{p}]\phi_i''(\overline{B}_i(\underline{p}))}.
    \]
    Since $\phi_i''>0$ and $\overline{B}_i\geq\underline{p}$, $\overline{B}{}_i$ must be single-dipped in $\underline{p}$. Symmetrically, $\underline{B}{}_i$ is single-peaked. In particular, 
    \[
    \underline{B}{}_i'(\overline{p})=\frac{v_i-\phi_i'(\overline{p})+1+\phi_i'(\underline{B}{}_i(\overline{p}))}{-[\overline{p}-\underline{B}{}_i(\overline{p})]\phi_i''(\underline{B}{}_i(\overline{p}))}.
    \]
    In general, the graphs of $\overline{B}_i$ and $\underline{B}{}_i$ should intersect only once, exactly at the peak of $\overline{B}_i$ which is also the bottom of $\underline{B}{}_i$. Formally, if being interior, $\underline{p}=\underline{B}{}_i(\overline{p})$ and $\overline{p}=\overline{B}_i(\underline{p})$ imply that $v_i-\phi_i'(\overline{p})=-1-\phi_i'(\underline{p})$ and thus $\overline{B}{}_i'(\underline{p})=\underline{B}{}'_i(\overline{p})=0$. Given that $\overline{B}_i$ is single-dipped and $\underline{B}{}_i$ is single-peaked, the interior intersection point is unique. As for the corner intersection point, it can only be $(\piv,\piv)$ if any. However, since $U_i$ is discontinuous at $\piv$ for all $i\ne\texttt{piv}$, it is easy to verify that $\overline{B}_i(\piv)>\piv$ for $i<\texttt{piv}$ and $\underline{B}{}_i(\piv)<\piv$ for $i>\texttt{piv}$. Hence, $(\piv,\piv)$ is a fixed point of $\overline{B}_i$ and $\underline{B}{}_i$ if and only if $i=\texttt{piv}$.
    
    Now consider the fixed points of $\overline{B}_j$ and $\underline{B}{}_i$ where $j<i$. Since $\overline{B}_j\geq\overline{B}_i$, any fixed point $(\underline{p},\overline{p})$ must be interior\footnote{Given $j<i$, if $\underline{B}{}_i(\piv)=\piv$ and thus $i\leq\texttt{piv}$, then $j<\texttt{piv}$ so $\overline{B}{}_i(\piv)>\piv$. Similarly, if $\overline{B}{}_i(\piv)=\piv$ and thus $j\geq\texttt{piv}$, then $i>\texttt{piv}$ so $\underline{B}{}_i(\piv)>\piv$. In sum, $(\piv,\piv)$ is never a fixed point.} and lie on the decreasing part of $\underline{B}{}_i$; symmetrically, they are on the increasing part of $\overline{B}_j$ at the same time. Hence, 
    \[
    \left\{(\underline{p},\overline{p}):\overline{p}=\overline{B}_j(\underline{p}),\underline{p}=\underline{B}{}_i(\overline{p})\right\}\subset\left\{(\underline{p},\overline{p}):\not\exists(\underline{p}{}',\overline{p}{}')\supsetneq(\underline{p},\overline{p})\text{ s.t. }\overline{p}{}'\leq\overline{B}_j(\underline{p}),\underline{p}{}'\geq\underline{B}_i(\overline{p})\right\}.
    \]
    For any $k\leq i$ and $l\geq j$, since $\underline{B}_k\geq\underline{B}_i$ and $\overline{B}_l\leq\overline{B}_j$, we have
    \[
    \left\{(\underline{p},\overline{p}):\overline{p}=\overline{B}_l(\underline{p}),\underline{p}=\underline{B}{}_k(\overline{p})\right\}\subset\left\{(\underline{p},\overline{p}):\overline{p}\leq\overline{B}_j(\underline{p}),\underline{p}\geq\underline{B}_i(\overline{p})\right\}.
    \]
    As a result, there does not exist a fixed point of $\overline{B}_l$ and $\underline{B}_k$ that is strictly larger than any fixed point of $\overline{B}_j$ and $\underline{B}{}_i$.

    Instead, for $j\geq i$, the fixed points of $\overline{B}_j$ and $\underline{B}{}_i$ must lie on the increasing part of $\underline{B}{}_i$ and the decreasing part of $\overline{B}_j$. In this case, the fixed points are nested, so a maximum one exists. Importantly, the maximum fixed point lies on the frontier of the graph of $\overline{B}_j$ and $\underline{B}{}_i$, as for all fixed points in the previous case. Then by similar arguments as before, as $\underline{B}_k\geq\underline{B}_i$ and $\overline{B}_l\leq\overline{B}_j$, there cannot be a fixed point of $\overline{B}_l$ and $\underline{B}_k$ that is strictly larger than the maximum one of $\overline{B}_j$ and $\underline{B}_i$.
\end{proof}

\paragraph{Equilibrium Existence}
\label{sect-cs-existence}
Here I show the existence of strong equilibrium, which according to \Cref{prop:strong}, is equivalent to the existence of a fixed point (or the maximum one) in best responses. It turns out that we have the following characterization of the fixed points based on the maximum or minimal equilibria under unilateral or unanimous stopping between two players, which always exist as we know from the two-player case. 
\begin{lemma}
    \label{existence-OSBR}
    For any two players $\ubp$ and $\lbp$ such that $\ubp\geq\lbp$, $(\underline{p},\overline{p})$ is the maximum fixed point in $\underline{B}{}_{\lbp}$ and $\overline{B}{}_{\ubp}$ if and only if it is the maximum interval equilibrium under unilateral stopping between these two players. For $\ubp<\lbp$, $(\underline{p},\overline{p})$ is a fixed point in $\underline{B}{}_{\lbp}$ and $\overline{B}{}_{\ubp}$ if and only if it is a minimal interval equilibrium under unanimous stopping between these two players.
\end{lemma}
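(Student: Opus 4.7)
The plan is to reduce the interval-equilibrium conditions to a pair of boundary-wise optimality checks and then use monotonicity of the best responses in the player index to identify the binding constraints with players $\ubp$ and $\lbp$, after which the "maximum / minimal" characterization falls out from the lattice structure of equilibria established earlier.

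First I would record the preference ordering: by monotone comparative statics, $U_i(\underline{p},\overline{p};\piv)$ has strictly decreasing differences in $(\overline{p};v_i)$ and strictly increasing differences in $(\underline{p};v_i)$ (precisely the computation used in the proof of \Cref{cs-committee}). Hence both $\overline{B}_i(\underline{p})$ and $\underline{B}_i(\overline{p})$ are pointwise decreasing in the index $i$. In particular, when $\ubp \geq \lbp$ we have $\overline{B}_{\ubp} \leq \overline{B}_{\lbp}$ and $\underline{B}_{\ubp} \leq \underline{B}_{\lbp}$, with the two inequalities reversed when $\ubp < \lbp$.

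Next, for the unilateral case $\ubp \geq \lbp$, I would combine single-peakedness of $U_i(\underline{p},\cdot;\piv)$ with \Cref{lm:BR-extension} to reduce checking global inward deviations to checking each boundary separately. The resulting condition is: $(\underline{p},\overline{p})$ with $\underline{p} \leq \piv \leq \overline{p}$ is an equilibrium under unilateral stopping between $\ubp$ and $\lbp$ iff $\overline{B}_i(\underline{p}) \geq \overline{p}$ and $\underline{B}_i(\overline{p}) \leq \underline{p}$ for both $i \in \{\lbp,\ubp\}$. The monotonicity above makes $\ubp$ the binding player at the top and $\lbp$ the binding player at the bottom, so the conditions collapse to $\overline{B}_{\ubp}(\underline{p}) \geq \overline{p}$ and $\underline{B}_{\lbp}(\overline{p}) \leq \underline{p}$. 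The maximum such interval is exactly the one where both inequalities are equalities, i.e.\ the maximum fixed point of $(\overline{B}_{\ubp},\underline{B}_{\lbp})$; conversely any fixed point is an equilibrium. Existence of the maximum follows from the semi-lattice structure on equilibria inherited from \Cref{lattice} (any union of equilibria is an equilibrium, plus compactness of $[0,\piv]\times[\piv,1]$).

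For the unanimous case $\ubp < \lbp$, the argument is symmetric but with deviations being outward extensions rather than inward contractions. Analogous reasoning via single-peakedness and \Cref{lm:BR-extension} shows that $(\underline{p},\overline{p})$ is an equilibrium iff $\overline{B}_i(\underline{p}) \leq \overline{p}$ and $\underline{B}_i(\overline{p}) \geq \underline{p}$ for $i \in \{\lbp,\ubp\}$. Monotonicity now makes $\overline{B}_{\ubp}$ the largest upper best response and $\underline{B}_{\lbp}$ the smallest lower one, so the binding conditions are $\overline{B}_{\ubp}(\underline{p}) \leq \overline{p}$ and $\underline{B}_{\lbp}(\overline{p}) \geq \underline{p}$. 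Using \Cref{unanimous-larger} (any sampling region containing an equilibrium is itself an equilibrium under the relevant concavification condition), the set of unanimous interval equilibria is upward closed, so minimality corresponds exactly to the two inequalities binding, i.e.\ to a fixed point.

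The main obstacle will be the corner behavior of $U_i$ at $\piv$, where $u_i$ has a kink and $U_i(\underline{p},\overline{p};\piv)$ is only defined as a limit when either boundary coincides with $\piv$; this is what forces the convention in \eqref{upper-undominated}--\eqref{lower-undominated} and the footnote about replacing the (sometimes unattained) supremum by $\{w_{\texttt{piv}}\}$. I would handle this by verifying that the equivalence between boundary-wise one-sided optimality and global equilibrium via \Cref{lm:BR-extension} extends continuously to the $\piv$-corner, so the fixed-point characterization is robust at these endpoints. A second, more minor concern is that when multiple fixed points exist under unilateral stopping (because the best-response curves can cross more than once on the nested side), one must use the semi-lattice property rather than direct monotonicity to extract the maximum; this is exactly what \Cref{lattice} is for.
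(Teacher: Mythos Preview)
Your overall strategy---reduce interval equilibria to boundary-wise one-sided best responses via single-peakedness and \Cref{lm:BR-extension}, then identify the binding constraints with players $\ubp$ and $\lbp$ by monotonicity in the index---is exactly the paper's approach, and your reduction to the pair of inequalities $\overline{B}_{\ubp}(\underline{p})\geq\overline{p}$, $\underline{B}_{\lbp}(\overline{p})\leq\underline{p}$ is correct. The direction ``fixed point $\Rightarrow$ equilibrium'' is handled just as in the paper.

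The gap is the converse direction, which you assert but do not argue: ``the maximum such interval is exactly the one where both inequalities are equalities.'' This is the nontrivial content of the lemma. Suppose the maximum equilibrium $(\underline{p}^*,\overline{p}^*)$ has, say, $\underline{B}_{\lbp}(\overline{p}^*)<\underline{p}^*$. You want to enlarge to $(\underline{B}_{\lbp}(\overline{p}^*),\overline{p}^*)$, but for this to remain an equilibrium you need $\overline{B}_{\ubp}(\underline{B}_{\lbp}(\overline{p}^*))\geq\overline{p}^*$, which does \emph{not} follow from the inequalities alone because $\overline{B}_{\ubp}$ is single-dipped rather than monotone. The semi-lattice property of \Cref{lattice} gives existence of a maximum equilibrium, but it does not by itself force both inequalities to bind there.

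The paper closes this gap by abandoning the inequality description at this step and returning to the concavification characterization (\Cref{concave}): it verifies directly that the chord over the enlarged interval $(\underline{B}_{\lbp}(\overline{p}_{\max}),\overline{p}_{\max})$ lies above $u_i-\phi_i$ for \emph{both} $i$, using the best-response definition of $\underline{B}_{\lbp}$ on the new segment $[\underline{B}_{\lbp}(\overline{p}_{\max}),\underline{p}_{\max}]$ and the original equilibrium property on $[\underline{p}_{\max},\overline{p}_{\max}]$. The analogous check for $\ubp$ uses $\underline{B}_{\ubp}\leq\underline{B}_{\lbp}$ together with single-peakedness. You should either reproduce this extension argument or supply an alternative reason why the maximum must bind both constraints; the lattice structure alone will not do it.
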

\vspace{-4mm}
\begin{proof}
    Suppose that $\ubp\geq\lbp$ and $(\underline{p},\overline{p})$ is a fixed point in $\underline{B}{}_{\lbp}$ and $\overline{B}{}_{\ubp}$. Then $U_\lbp(\underline{p}{},\overline{p};\piv)\geq U_\lbp(\underline{p}{}',\overline{p};\piv)$ for any $\underline{p}{}'< \piv$ and $U_\ubp(\underline{p}{},\overline{p};\piv)\geq U_\ubp(\underline{p}{},\overline{p}{}';\piv)$ for any $\overline{p}{}'\geq\piv$. By single-peakedness and $\ubp\geq\lbp$, we have $U_\lbp(\underline{p}{},\overline{p};\piv)\geq U_\lbp(\underline{p},\overline{p}{}';\piv)$ for any $\overline{p}{}'\leq\overline{p}$, and $U_\ubp(\underline{p}{},\overline{p};\piv)\geq U_\ubp(\underline{p}{}',\overline{p};\piv)$ for any $\underline{p}{}'\geq\underline{p}$. Hence, by \Cref{lm:BR-extension}, for both $i=\lbp,\ubp$, $U_i(\underline{p}{},\overline{p};p)\geq U_i(\underline{p}{}',\overline{p}{}';p)$ for $p\in[\underline{p},\overline{p}]$ and $(\underline{p}{}',\overline{p}{}')\subset(\underline{p},\overline{p})$. By \Cref{concave}, $(\underline{p},\overline{p})$ is an equilibrium under unilateral stopping between $\lbp$ and $\ubp$. 
    

    Given the above result, it suffices to show that the maximum interval equilibrium under unilateral stopping is also a fixed point. Then, the maximum fixed point must coincide with the maximum equilibrium; otherwise, either the fixed point is not maximal or the equilibrium is not maximal. 
    
    Let $(\underline{p}{}_{\max},\overline{p}{}_{\max})$ denote the maximum interval equilibrium under unilateral stopping, which must contain $\piv$. Suppose that $(\underline{p}{}_{\max},\overline{p}{}_{\max})$ is not a fixed point in $\underline{B}{}_{\lbp}$ and $\overline{B}{}_{\ubp}$, so either $\underline{p}{}_{\max}\ne\underline{B}{}_\lbp(\overline{p}_{\max})=:\underline{p}$ or $\overline{p}{}_{\max}\ne\overline{B}{}_{\ubp}(\underline{p}{}_{\max})=:\overline{p}$. Suppose that $\underline{p}{}_{\max}\ne\underline{p}$, then it must be $\underline{p}{}_{\max}>\underline{p}$, since otherwise player $\lbp$ has strict incentives to deviate to $(\underline{p},\overline{p}{}_{\max})$ from $(\underline{p}{}_{\max},\overline{p}{}_{\max})$, a contradiction. Consider the sampling region $(\underline{p},\overline{p}{}_{\max})$. By definition of $\underline{p}=\underline{B}{}_\lbp(\overline{p}_{\max})$, $U_{\lbp}(\underline{p},\overline{p}{}_{\max};\piv)\geq U_{\lbp}(p,\overline{p}{}_{\max};\piv)$ for all $p\in[\underline{p},\underline{p}{}_{\max}]$. Hence, for any $p\in[\underline{p}{}_{\max},\overline{p}{}_{\max}]$,
    \begin{align*}
        &U_\lbp\left(\underline{p},\overline{p}{}_{\max};\piv\right)\geq U_\lbp\left(\underline{p}{}_{\max},\overline{p}{}_{\max};\piv\right)=V_\lbp^{IN}\left(\piv;(\underline{p}{}_{\max},\overline{p}{}_{\max})\right)\\
        \Rightarrow &U_\lbp\left(\underline{p},\overline{p}{}_{\max};p\right)\geq U_\lbp\left(\underline{p}{}_{\max},\overline{p}{}_{\max};p\right)=V_\lbp^{IN}\left(p;(\underline{p}{}_{\max},\overline{p}{}_{\max})\right)\geq u_\lbp(p)-\phi_\lbp(p),
    \end{align*}
    where the implication from $\piv$ to general $p$ is due to the linearity of $U_\lbp$ in $p$ and the fact that the two sides coincide at $p=\overline{p}{}_{\max}$. Similarly,    
    \begin{align*}
        &U_\lbp\left(\underline{p},\overline{p}{}_{\max};\piv\right)\geq U_\lbp\left(p,\overline{p}{}_{\max};p\right),\quad\forall p\in[\underline{p},\underline{p}{}_{\max}]\\
        \implies &U_\lbp\left(\underline{p},\overline{p}{}_{\max};\tilde{p}\right)\geq U_\lbp\left(p,\overline{p}{}_{\max};\tilde{p}\right),\quad\forall \tilde{p}\in[p,\underline{p}{}_{\max}],\forall p\in[\underline{p},\underline{p}{}_{\max}]\\
        \implies &U_\lbp\left(\underline{p},\overline{p}{}_{\max};p\right)\geq u_\lbp(p)-\phi_\lbp(p),\quad\forall p\in[\underline{p},\underline{p}{}_{\max}].
    \end{align*}
    Therefore, $U_\lbp(\underline{p},\overline{p}{}_{\max};p)\geq u_\lbp(p)-\phi_\lbp(p)$ for all $p\in[\underline{p},\overline{p}{}_{\max}]$, which implies $U_\lbp(\underline{p},\overline{p}{}_{\max};p)=V_\lbp^{IN}(p;(\underline{p},\overline{p}{}_{\max}))$ for all $p\in[\underline{p},\overline{p}{}_{\max}]$. Moreover, since $\underline{B}{}_{\ubp}\leq\underline{B}{}_\lbp$ and $U_{\ubp}(p,\overline{p};\piv)$ is single-peaked in $p$ over $[\underline{p},\underline{p}{}_{\max}]$, we have $U_{\ubp}(\underline{p},\overline{p}{}_{\max};\piv)\geq U_{\ubp}(p,\overline{p}{}_{\max};\piv)$ for all $p\in[\underline{p},\underline{p}{}_{\max}]$. By the same argument, it can be shown that $U_{\ubp}(\underline{p},\overline{p}{}_{\max};p)=V_{\ubp}^{IN}(p;(\underline{p},\overline{p}{}_{\max}))$ for all $p\in[\underline{p},\overline{p}{}_{\max}]$. Hence by \Cref{concave}, $(\underline{p},\overline{p}{}_{\max})$ is also an equilibrium under unilateral stopping, which contradicts to the definition of $(\underline{p}{}_{\max},\overline{p}{}_{\max})$. Hence, we must have $\underline{p}{}_{\max}=\underline{p}=\underline{B}{}_\lbp(\overline{p}_{\max})$. Symmetrically, one can also prove $\overline{p}{}_{\max}=\overline{p}=\overline{B}{}_{\ubp}(\underline{p}{}_{\max})$. In conclusion,  $(\underline{p}{}_{\max},\overline{p}{}_{\max})$ is a fixed point.

The proof for the case with $\ubp<\lbp$ and minimal equilibria is similar and omitted for brevity.
\end{proof}

\vspace{-4mm}

\subsection{Poisson Learning}
\label{Poisson}
In this section, I consider Poisson learning instead of incremental learning assumed in the baseline model and show that the analysis can be adapted. For simplicity, I focus on unilateral and unanimous stopping with two players. Suppose that information arrives in the form of Poisson signals. The arrival rate of the signal is $(a_{1,t}\wedge a_{2,t})\lambda\mathbbm{1}_{\theta=1}$ under unilateral stopping, and $(a_{1,t}\vee a_{2,t})\lambda\mathbbm{1}_{\theta=1}$ under unanimous stopping. Therefore, this Poisson signal is conclusive for $\theta=1$. Upon arrival, the belief jumps to $p_t=1$; upon no breakthrough, the belief dynamic is
\[
\mathrm{d}p_t=-\lambda p_t(1-p_t)\mathrm{d}t.
\]

Analogous to \Cref{cost} in the baseline model, it can be shown that
\[
\mathbb{E}\left[\int_0^\tau \cost_i(p_s)\mathrm{d}s\right]
=\int_0^{p_0}[\varphi_i(p_0)-\varphi_i(q)]\mathrm{d}F_\tau(q),
\]
\[
\text{where}\quad \varphi_i(q)=\int_z^q\frac{\cost_i(y)}{\lambda y(1-y)}\mathrm{d}y\quad\text{for some }z\in(0,1).
\]
Therefore, the problem under unilateral stopping can be reformulated as
\begin{align*}
	&\max_{H} \int_0^1\left[u_i(p)-\mathbbm{1}_{p\leq p_0}\left(\varphi_i(p_0)-\varphi_i(p)\right)\right]\mathrm{d}H(p)\\
	&\text{s.t. }H \text{ is an MPC of }F,\text{ } H \text{ is constant on }[p_0,1) \text{ and } \mathbb{E}_H[p]=p_0.
\end{align*}
And the problem under unanimous stopping is given by
\begin{align*}
&\max_{H} \int_0^1\left[u_i(p)-\mathbbm{1}_{p\leq p_0}\left(\varphi_i(p_0)-\varphi_i(p)\right)\right]\mathrm{d}H(p)\\
&\text{s.t. }H \text{ is an MPS of }F,\text{ } H \text{ is constant on }[p_0,1)\cup\sprg \text{ and } \mathbb{E}_H[p]=p_0.
\end{align*}

These problems are different from those in the baseline model, but we are still able to use the concavification method to characterize their binary policy solutions (as in \Cref{concave}; here one support must be 1).

\subsection{\citeauthor{gul2012war}'s (\citeyear{gul2012war}) War of Information}
\label{GP12}
This section discusses how my model can also be applied to \citeauthor{gul2012war}'s (\citeyear{gul2012war}) setting of competition in persuasion.

In their setup, there are two parties, $n=2$, with indirect payoffs $u_1(p)=\mathbbm{1}_{p>1/2}$ and $u_2(p)=1-u_1(p)$. These indirect payoffs come from a voter trying to pick the better party.\footnote{In details, $\theta=1$ (resp., $\theta=0$) refers to Party 1's being better (resp., worse) than Party 2. A voter chooses Party 1 when her belief about $\theta=1$ is above 1/2, and Party 2 otherwise. Party $i$ obtains a payoff of 1 if it is chosen and 0 otherwise.} Two parties provide costly information to persuade the voter according to the belief process $(p_t)_{t\geq0}$. For simplicity, let $p_0=1/2$.

Slightly different from my baseline model with unanimous stopping, \citet{gul2012war} assume that Party 1 (resp., Party 2) can only provide information with constant flow cost $\cost_1$ (resp., $\cost_2$) when it falls behind, i.e., when $p_t\leq1/2$ (resp., $p_t>1/2$). This difference calls for two modifications. First, since parties do not provide information when they are leading, they do not bear sampling costs in that case. Therefore, $\cost_1(p)=\cost_1\cdot\mathbbm{1}_{p\leq1/2}$ and $\cost_2(p)=\cost_2\cdot\mathbbm{1}_{p>1/2}$, and accordingly, $\phi_1(p)=0$ for $p>1/2$ and $\phi_2(p)=0$ for $p\leq1/2$. 

Second, the collective stopping rule changes: When $p_t\leq1/2$, the persuasion process ends if and only if Party 1 stops; when $p_t>1/2$, the game ends if and only if Party 2 stops. Interestingly, given the above payoff structure, the leading party has no incentive to continue providing information even if it can, so this new stopping rule is in fact strategically equivalent to unanimous stopping.\footnote{As will become clear below, the equilibrium under this new stopping rule is also an equilibrium under unanimous stopping with ``one-sided best responses.'' 
}

\begin{figure}[!t]
    \centering
    \includegraphics[scale=0.35]{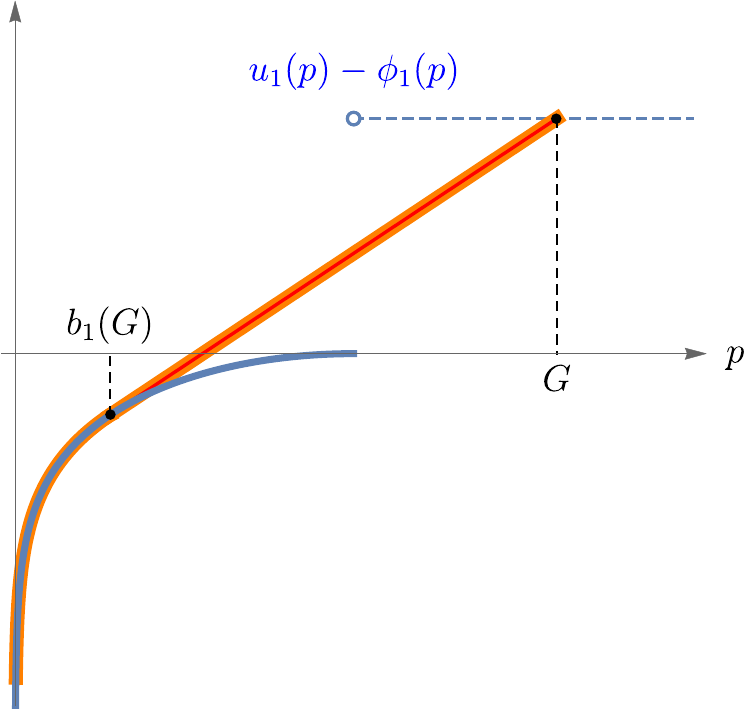}
    \caption{Party 1's Problem: Given $G$, choosing $g=b_1(G)$.}
    \label{GP}
\end{figure}

Under this new stopping rule, Party 1 (resp., Party 2) decides when to stop providing information when $p_t\leq1/2$ (resp., $p_t>1/2$). Focusing on PSMPE, Party 1's (resp., Party 2's) strategy can be represented by a lower bound $g\leq1/2$ (resp., an upper bound $G>1/2$) at which it stops. 
According to the equilibrium characterization by concavification, given $G$, Party 1 chooses $g\leq1/2$ to maximize $\frac{G-1/2}{G-g}\cdot[0-\phi_1(g)]+\frac{1/2-g}{G-g}\cdot1$, where the best response is denoted by $b_1(G)$; given $g$, Party 2 chooses $G>1/2$ to maximize $\frac{G-1/2}{G-g}\cdot1+\frac{1/2-g}{G-g}\cdot[0-\phi_2(G)]$, where the best response is denoted by $B_2(g)$. See \Cref{GP} for a graphical illustration. When $\cost_i>0$, there exists a unique solution $(g^*,G^*)$ such that $g^*=b_1(G^*)$ and $G^*=B_2(g^*)$. 

The solution $(g^*,G^*)$ corresponds to the unique equilibrium in the baseline model of \citet{gul2012war}. Parties' actions are strategic substitutes: both $b_1$ and $B_2$ are increasing functions. As a result, decreasing one party's cost makes that party provide more information and its opponent provide less.

\end{document}